\documentclass[11pt]{article}
\usepackage[a4paper,left=3.0cm,right=3.0cm,top=3.0cm,bottom=3.0cm]{geometry} 
\usepackage[utf8]{inputenc} 
\usepackage{setspace} 
\usepackage{fancyhdr} 
\usepackage{amsmath} 
\usepackage{amssymb} 
\usepackage{amsfonts} 
\usepackage{dsfont} 
\usepackage{amsthm} 
\usepackage{graphicx} 
\usepackage{tikz} 
\usepackage{color} 
\definecolor{goetheblue}{RGB}{0,102,255} 
\usepackage{listings} 
\usepackage{eurosym} 
\usepackage{url} 
\usepackage{enumerate} 
\usepackage{multirow} 
\usepackage{verbatim} 
\usepackage{mathrsfs}
\usepackage{booktabs}
\usepackage{multirow}
\usepackage{adjustbox}		
\usepackage[round]{natbib}
\usepackage{url}
\usepackage{authblk}

\usepackage{enumitem}

\let\OLDthebibliography\thebibliography
\renewcommand\thebibliography[1]{
  \OLDthebibliography{#1}
  \setlength{\parskip}{0pt}
  \setlength{\itemsep}{0pt plus 0.3ex}
}

\usepackage{appendix}

\usepackage{hyperref}

\newcommand{\E}{\mathrm{E}}
\newcommand{\Var}{\mathrm{Var}}

\newcommand{\bs}[1]{\boldsymbol{#1}}

\usepackage{array}
\newcolumntype{L}[1]{>{\raggedright\let\newline\\
\arraybackslash\hspace{0pt}}m{#1}}
\newcolumntype{C}[1]{>{\centering\let\newline\\
\arraybackslash\hspace{0pt}}m{#1}}
\newcolumntype{R}[1]{>{\raggedleft\let\newline\\
\arraybackslash\hspace{0pt}}m{#1}}
\newcolumntype{P}[1]{>{\raggedright\tabularxbackslash}p{#1}}

\theoremstyle{definition}
\newtheorem{definition}{Definition}[section]

\theoremstyle{plain}
\newtheorem{algorithm}{Algorithm}
\newtheorem{assumption}{Assumption}
\newtheorem{proposition}{Proposition}
\newtheorem{lemma}{Lemma}

\usepackage[colorinlistoftodos,textsize=tiny]{todonotes}
\newcommand{\Comments}{1}
\newcommand{\mynote}[2]{\ifnum\Comments=1\textcolor{#1}{#2}\fi}
\newcommand{\mytodo}[2]{\ifnum\Comments=1%
	\todo[linecolor=#1!80!black,backgroundcolor=#1,bordercolor=#1!80!black]{#2}\fi}

\ifnum\Comments=1               
\fi

\ifnum\Comments=1               
\fi

\ifnum\Comments=1               
\fi

\makeatletter
\DeclareFontFamily{U}  {MnSymbolF}{}
\DeclareSymbolFont{symbolsMN}{U}{MnSymbolF}{m}{n}
\SetSymbolFont{symbolsMN}{bold}{U}{MnSymbolF}{b}{n}
\DeclareFontShape{U}{MnSymbolF}{m}{n}{
    <-6>  MnSymbolF5
   <6-7>  MnSymbolF6
   <7-8>  MnSymbolF7
   <8-9>  MnSymbolF8
   <9-10> MnSymbolF9
  <10-12> MnSymbolF10
  <12->   MnSymbolF12}{}
\DeclareFontShape{U}{MnSymbolF}{b}{n}{
    <-6>  MnSymbolF-Bold5
   <6-7>  MnSymbolF-Bold6
   <7-8>  MnSymbolF-Bold7
   <8-9>  MnSymbolF-Bold8
   <9-10> MnSymbolF-Bold9
  <10-12> MnSymbolF-Bold10
  <12->   MnSymbolF-Bold12}{}
\DeclareMathSymbol{\tbigtimes}{\mathop}{symbolsMN}{2}
\newcommand*{\bigtimes}{%
  \DOTSB
  \tbigtimes
  \slimits@ 
  }
\makeatother

	\usepackage{float} 
	\restylefloat{table}

\title{Uncertainty Quantification in Forecast Comparisons}

\author[1,2]{Marc-Oliver Pohle}
\author[3]{Tanja Zahn}
\author[4,2]{Sebastian Lerch}

\affil[1]{University of Wuppertal, Schumpeter School of Business and Economics}

\affil[2]{Heidelberg Institute for Theoretical Studies, Computational Statistics Group}

\affil[3]{Goethe University Frankfurt, Faculty of Economics and Business}

\affil[4]{Marburg University, 
Department of Mathematics and Computer Science
\vspace{0.1cm}}

\affil[ ]{\small 

\textit{E-mail: }
\texttt{\href{mailto:pohle@uni-wuppertal.de}{pohle@uni-wuppertal.de}},
\texttt{\href{mailto:tzahn@wiwi.uni-frankfurt.de}{tzahn@wiwi.uni-frankfurt.de}},
\texttt{\href{mailto:sebastian.lerch@uni-marburg.de}{sebastian.lerch@uni-marburg.de}}
}

\date{\today}

\begin{document}

\maketitle

\begin{abstract}

Skill scores, which measure the relative improvement of a forecasting method over
a benchmark via consistent scoring functions and proper scoring rules, are a standard
tool in forecast evaluation, yet their sampling uncertainty is rarely rigorously quantified. With
modern forecasting applications being increasingly multivariate and involving evaluations across multiple horizons, variables, spatial locations, and forecasting methods, standard tools like the pairwise Diebold-Mariano forecast accuracy test or pointwise confidence intervals fail to account for the multiple comparison problem, leading
to inflated Type I error rates and invalid joint inference. To address the lack of
a coherent, statistically rigorous framework for quantifying uncertainty across these
multi-dimensional evaluation problems, we introduce simultaneous confidence bands
for expected scores and skill scores. Our framework provides a versatile tool for joint
inference that is applicable to any forecast type from mean and quantile to full distributional forecasts. We develop a bootstrap implementation and show that our bands are valid under multivariate extensions of the classical Diebold-Mariano assumptions. We demonstrate the practical utility of the
approach in two case studies by quantifying the benefits of time-varying parameter
models for macroeconomic forecasting, and by comparing data-driven and physics-based models in probabilistic weather forecasting.

\end{abstract}
 
\bigskip\noindent\textbf{Keywords:} Confidence Bands, Forecast Evaluation, Predictive Performance,   Probabilistic Forecasting


\section{Introduction}	

Forecasts are the basis for sound decision-making. Consequently, forecasting is an important endeavor in many disciplines such as medicine, meteorology, economics, or finance. Traditionally, forecasts for the central tendency (mean or median) of a variable of interest play a key role. In recent years, other types like quantile, interval or probabilistic forecasts, that is, forecasts for the full probability distribution, have become increasingly popular as they give a more detailed picture, and, in particular, inform about uncertainty \citep{gneiting2014}.

To ensure that forecasts are of high quality and to improve future forecasts, suitable evaluation methods are needed. The standard approach to (relative) forecast evaluation is to compare the performance of competing forecasts via suitable loss functions tailored to the type of forecast at hand: consistent scoring functions \citep{gneiting2011} or proper scoring rules \citep{gneiting2007proper}. Average scores over an evaluation sample of forecasts and observations provide a ranking of different forecasting methods in terms of their accuracy and allow for a principled choice of forecasting approaches.

As the magnitude of average scores is usually not interpretable, it is common to compute relative average scores, that is, to divide the average score of a forecasting method of interest by the average score of a benchmark method. In this way, the relative improvement in forecast accuracy over the benchmark can be easily read off. Often, in particular in the meteorological literature, this relative improvement is computed directly, and called skill score \citep{gneiting2007proper, wilks2011, thorarinsdottir2018}. Hence, relative average scores, or skill scores, not only allow for a ranking in terms of forecast accuracy, but also for an assessment of the magnitude of the difference in forecasting accuracy and thus of its relevance. For example, a skill score of 10\% of a new method compared to a simple benchmark, which means an increase in forecast accuracy by 10\%, might lead to a decision to implement that method. In contrast, with a skill score of only 1\% the additional effort might not be worthwhile, depending on the application.

Average scores, relative average scores and empirical skill scores are of course estimates of the respective population quantities (expected scores, relative expected scores and population skill scores) based on an evaluation sample of forecasts and observations. Thus, they should be accompanied by appropriate measures of sampling uncertainty. The classical and most natural way to quantify and communicate sampling uncertainty are confidence intervals. However, such confidence intervals are usually not reported in the forecast evaluation literature.
Instead, p-values from Diebold-Mariano tests of forecast accuracy \citep{diebold1995} are very popular. One reason why test results and not the corresponding confidence intervals are reported is certainly that this test is concerned with differences in expected scores and due to their lack of interpretability, the corresponding confidence intervals would also lack interpretability. However, confidence intervals for relative expected scores or skill scores would yield the same conclusions about the null hypothesis of equal predictive accuracy and would additionally provide interpretable statements about sampling uncertainty by reporting the percentage improvements in forecasting performance supported by a certain confidence level.

In practice, usually not only a single comparison between two forecasting methods in terms of their forecast accuracy is executed, but many: often multiple forecast horizons, variables, forecasting methods and/or locations are of interest. This leads to large tables with many p-values of pairwise forecast accuracy tests, which are, of course, plagued by serious multiple testing problems, often rendering the conclusions from those tests essentially useless. Analogously, pointwise confidence intervals, which are intended for a single comparison, are invalid and usually too narrow in such settings. Existing approaches addressing specific instances of this problem from a hypothesis testing perspective are model confidence sets \citep{hansen2011} for the case of multiple forecasting methods, and a multi-horizon Diebold-Mariano test \citep{quaedvlieg2021}.

A natural solution is to replace pointwise intervals with simultaneous confidence bands for a vector of population skill scores (or expected scores), which contain this vector with a prespecified probability of $1-\alpha$.
Such bands carry the advantages of confidence intervals to the multivariate setting. 
They give a valid statement of sampling uncertainty, report the skill score vectors supported by the data at a given confidence level, and, contrary to other types of confidence regions such as ellipsoids, they can be easily visualized regardless of dimension.
As a byproduct, they further automatically provide test results for relevant null hypotheses. For example, to test the hypothesis of equal predictive accuracy of two methods over a set of forecast horizons at significance level $\alpha$, one just needs to check if zero lies inside the corresponding confidence band of skill scores of level $1-\alpha$ for all those horizons.    

The construction of confidence intervals or confidence bands for skill scores has been considered in the meteorological literature. However, the discussion of this topic is rather limited and often subject to deficiencies. First, the focus is almost exclusively on pointwise confidence intervals \citep[see, e.g.,][]{stephenson2000, ahrens2008, bradley2008}. 
Second, observations are usually assumed to be independent, making the approaches invalid for time series data. 
Noteworthy exceptions regarding the second point are \citet{baran2023} and \citet{baran2024}, who report pointwise confidence bands constructed via a block bootstrap, thus accounting for temporal dependence.

To address these limitations, we introduce two types of simultaneous confidence bands, sup-t and Bonferroni, for vectors of skill scores (as well as relative expected scores and expected scores themselves). 
We combine recent results on confidence bands in a nonlinear setting and novel bootstrap implementations by \citet{montielolea2019} with assumptions that can be seen as multivariate versions of the classical Diebold-Mariano assumptions \citep{diebold1995,diebold2015} to establish the validity of our bands: Both types of bands have asymptotic coverage of at least the nominal coverage. While the sup-t bands even have exact asymptotic coverage, the Bonferroni bands are mildly conservative. Our proposed implementation involves resampling of scores via a moving block bootstrap \citep{kunsch1989} for time series data and via a classical bootstrap for iid data. 
Our approaches are easy to implement and to apply, but at the same time very versatile: They allow to construct confidence bands over multiple forecast horizons, variables, forecasting methods, locations or combinations thereof. The bands can be easily represented graphically even if they span several of those dimensions. Thus, they are a beneficial addition to the usual graphs showing average scores or skill scores over, for example, forecast horizons for multiple methods. Furthermore, they allow for testing many hypotheses of interest directly while preventing multiple testing problems. The proposed approach can be used for basically all types of forecasts (more precisely for forecasts for all elicitable functionals, that is, types of forecasts for which a consistent scoring function or proper scoring rule exists), for example, univariate or multivariate mean, quantile or probabilistic forecasts.


In a simulation study, we examine the finite-sample performance of our confidence bands. In the absence of temporal dependence, both simultaneous confidence bands (sup-t and Bonferroni) are very close to the nominal coverage even when the vector of skill scores (or expected scores) is large. With increasing temporal dependence, the sup-t bands tend to achieve undercoverage for large vectors of scores. This is due to well-known issues with variance estimation under temporal dependence, which is a notoriously difficult problem. Variance estimators in time series settings usually suffer from a downward bias, which is larger the stronger the temporal dependence and the smaller the sample size is. This is a well-documented and -researched phenomenon for heteroskedasticity and autocorrelation consistent (HAC) variance estimators (for an overview of this literature see \citep{lazarus2018}), but also for the moving block bootstrap \citep{fitzenberger1997}. 
To partially mitigate this undercoverage we propose to use Bonferroni bands in time series settings, whose coverage rates are closer to the nominal level for larger samples due to their asymptotic conservativeness. 
Both simultaneous bands show huge improvements over the above-mentioned competitors currently used in practice. Pointwise confidence bands not accounting for temporal dependence expectedly show very low coverage rates, which decrease with the strength of temporal dependence and the length of the skill score vector, while pointwise bands accounting for temporal dependence are definitely an improvement over them, but still show drastic undercoverage even for small skill score vectors.

\begin{figure} 
\centering
\includegraphics[scale = 0.45]{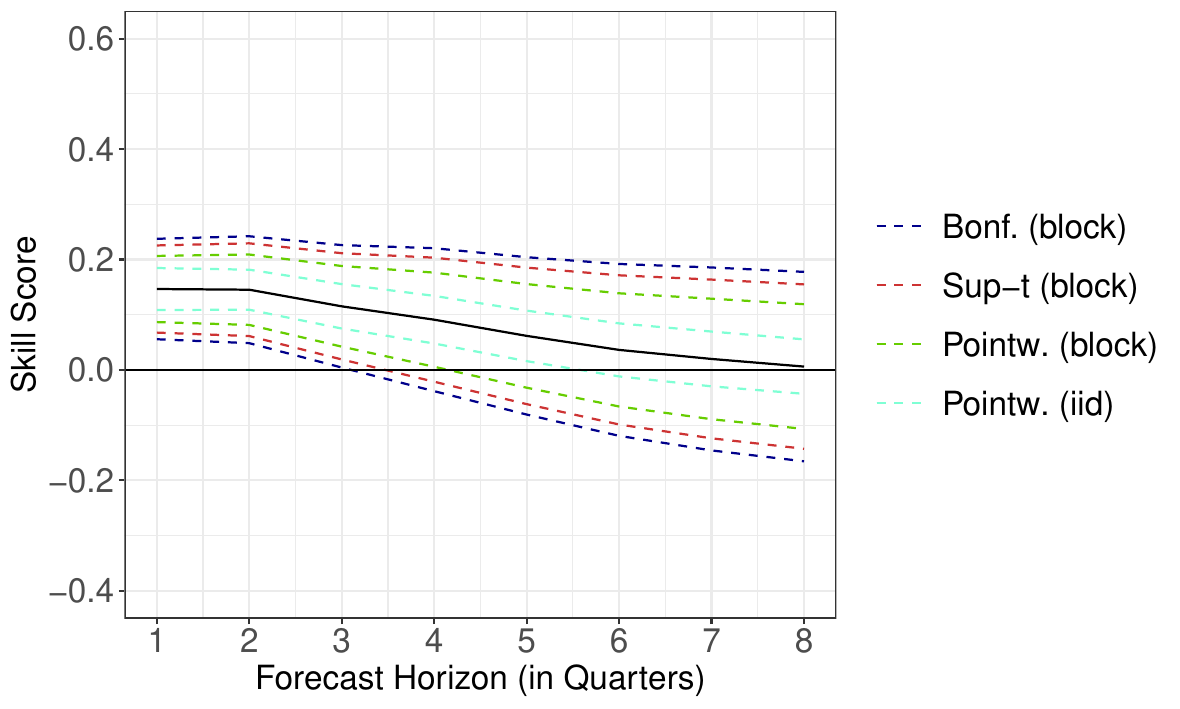}
\caption{Estimated skill scores of a BVAR with time-varying parameters and stochastic volatility with different types of 90\%  confidence bands, fully aggregated. The benchmark is a 
BVAR with constant parameters. The score used is the energy score. The block length used in the bootstrap is equal to $3 \lfloor N^{1/4} \rfloor =9$.}
\label{fig:bvar_intro}
\end{figure}



We demonstrate the practical utility of the approach in two case studies. First, we quantify the benefits of time-varying parameters for point and probabilistic macroeconomic forecasts. Macroeconomic forecasts have to adapt to rapidly changing conditions, shocks to the economy and gradual structural change. Thus, time-varying parameter models have been found to be beneficial in this setting (\citet{clark2011}, \citet{dagostino2013}, \citet{clark2015}, \citet{knueppel2022}). Our confidence bands enable us to quantify those benefits and assess their sampling uncertainty and statistical significance. We compare the forecasting performance of a Bayesian Vector Autoregression (BVAR) with time-varying parameters and stochastic volatility \citep{primiceri2005}, a state-of-the art model in macroeconomic forecasting, for real GDP growth, inflation and the federal funds rate to the same model with constant parameters. Figure \ref{fig:bvar_intro} shows the estimated skill score of multivariate probabilistic forecasts for all three variables, evaluated via the energy score, over eight forecast horizons. The dashed lines depict the four types of confidence bands at 90\% confidence level. The skill score is positive for all horizons, starting with an impressive improvement of 15\% before gradually declining. The pointwise band under the iid assumption is very narrow, while accounting for temporal dependence roughly doubles its width. Still, the simultaneous bands are considerably wider and show that there is substantial sampling uncertainty in the empirical skill scores, which is dramatically underestimated by the pointwise bands. The simultaneous bands indicate a statistically significant improvement of the time-varying parameter model over the constant-parameter model up to the third quarter. Aggregated over horizons and variables, the estimated skill scores suggest an improvement in forecast accuracy by approximately 1.8\% for mean forecasts and by about 6.2\% for density forecasts. However, skill and estimation uncertainty vary strongly across type of forecast, forecast horizon and target variable. The uncertainty is considerably higher for mean forecasts, longer forecast horizons and the federal funds rate, illustrating that point estimates should be interpreted with caution.

Our second case study compares weather forecasts from physics-based numerical weather prediction models to data-driven artificial intelligence-based models. The former have been used successfully for decades in operational weather forecasting, while the latter have entered the stage in recent years and shown potential to outperform the former in terms of point forecasting performance (\citet{Pathak2022}, \citet{GraphCast}, \citet{BiEtAl2023}). This leads to the question whether these purely data-driven models can also be leveraged to create better probabilistic forecasts. As a step in this direction, \citet{buelte2026} propose and  compare different uncertainty quantification (UQ) methods to generate probabilistic weather forecasts from the deterministic Pangu-Weather model \citep{BiEtAl2023}. We build on their study, examining forecasts from Pangu-Weather combined either with a Gaussian noise perturbation approach (GNP) or with the EasyUQ method introduced by \citet{EasyUQ}. The benchmark is a traditional physics-based numerical weather prediction model, namely the ensemble forecast of the European Centre for Medium-Range Weather Forecasts (ECMWF). We focus on forecasts of temperature at 2m for the year 2022 over a grid of 35200 points covering Europe and for 31 forecast horizons in 6-hourly steps, and examine the underlying sampling uncertainty. Overall, Pangu-Weather + EasyUQ provides the best forecasts. It outperforms the physics-based ECMWF for lead times up to about 100h significantly at 10\%, while the latter is more accurate for horizons greater than 150h. Moreover, there is substantial variation across locations. Both data-driven models achieve the largest improvements in forecast accuracy over the ECMWF in coastal and mountainous areas. At a lead time of 72h, estimation uncertainty is considerably higher over sea than over land grid points. These results demonstrate the need to assess sampling uncertainty in such large-scale forecast comparisons and that our confidence bands provide a valid yet simple approach for this.

The remainder of the paper is organized as follows. In Section \ref{sec:skillscores}, we review consistent scoring functions and proper scoring rules, discuss skill scores, and introduce our multivariate setup. In Section \ref{sec:confidence_bands}, we introduce our confidence bands, discuss underlying assumptions, establish their validity, and discuss their properties. Section \ref{sec:simulations} contains the simulation study, Section \ref{sec:case_studies} presents the case studies and Section \ref{sec:conclusion} concludes. The appendix contains proofs and further details regarding assumptions, properties of the confidence bands, simulations, and applications. An R package implementing the simultaneous confidence bands is available at \url{https://github.com/TanjaZahn/UQforecasts}. Replication material is available at \url{https://github.com/TanjaZahn/UQforecasts_replication}.


\section{Scores, Skill Scores, and Multivariate Forecasting Setup} \label{sec:skillscores}

\subsection{Notation and Basic Forecasting Setup}

We denote by $F_W$ the cumulative distribution function (CDF) of a random variable $W$, by $f_W$ its density, by $\mu(W)$ its mean and by $q_{\tau}(W)$ its $\tau$-quantile, $\tau \in (0,1)$. Let $\xrightarrow[]{d}$ and $\xrightarrow[]{p}$ denote convergence in distribution and in probability, respectively. We let $\mathcal{N}(.)$ stand for a (multivariate) normal distribution and let $z_{\tau}$ denote the $\tau$-quantile of a univariate standard normal distribution. Further, we denote vectors and multivariate functions by bold letters, and use uppercase letters for random variables and lowercase letters for their realizations.

Consider a forecaster who wants to forecast a variable of interest $Y_{t}$ from forecast origin $t-h$, where $h$ is the forecast horizon. The forecaster's information set $\mathcal{F}_{t-h}$ is generated by the past of a vector-valued stochastic process $\{\mathbf{Z}_t\}$, which includes $\{Y_t\}$, i.e., $\mathcal{F}_{t-h}=\sigma(\{\mathbf{Z}_s\}_{s \leq t-h})$. The forecaster's goal is to issue a forecast for a certain property of the conditional distribution of $Y_{t}$ given the information set $\mathcal{F}_{t-h}$. Put formally, the forecaster wants to forecast a statistical functional $T$ of the distribution $F_{Y_{t}|\mathcal{F}_{t-h}}$, that is, $T(Y_{t}|\mathcal{F}_{t-h}):=T(F_{Y_{t}|\mathcal{F}_{t-h}})$. Popular choices of the functional $T$ include the mean, $\mu(Y_{t}|\mathcal{F}_{t-h})$, or a certain quantile, $q_{\tau}(Y_{t}|\mathcal{F}_{t-h})$. For probabilistic forecasts, $T(Y_{t}|\mathcal{F}_{t-h})$ becomes $f_{Y_{t}|\mathcal{F}_{t-h}}$ or $F_{Y_{t}|\mathcal{F}_{t-h}}$ itself. We denote the forecast for $T(Y_{t}|\mathcal{F}_{t-h})$ by $X_{t,h}$. Often, we want to compare this forecast to a benchmark forecast denoted by $X^{ben}_{t,h}$. For simplicity, we omit the indices $t$ and $h$ in the following exposition of scoring functions and skill scores.

\subsection{Consistent Scoring Functions and Proper Scoring Rules} \label{subsec:scoringfunctions}

The key tool underlying relative forecast evaluation are suitable loss or scoring functions $s$, which map forecast-observation pairs $(x,y)$ to the real line. They are called consistent scoring functions in the case of point-valued forecasts and proper scoring rules in the case of probabilistic forecasts \citep{gneiting2011,gneiting2007proper}. We call loss functions for arbitrary statistical functionals consistent scoring functions throughout the paper, regarding proper scoring rules as a special case \citep{pohle2020, fissler2021multi}. Consistency is a fundamental property of scoring functions, which ensures that forecasters are incentivized to issue their true beliefs about the respective functional of the variable of interest; for details see \cite{gneiting2011}. The standard approach to relative evaluation is to rank the forecasts according to their expected score, $\E[s(X,Y)]$, which is the classical measure of forecast accuracy. We define scoring functions such that they are negatively oriented, meaning that lower expected scores indicate more accurate forecasts.

We now briefly review some of the most widely-used scoring functions; for more details and further scoring functions see \citet{gneiting2011} and \citet{gneiting2007proper}. For the sake of this recap, we index the forecast $x_T$ by its target functional $T$. The squared error,
\begin{equation} \label{eq:squared_error}
    SE (x_{\mu},y)= (y - x_{\mu})^2,
\end{equation}
is the classical consistent scoring function for the mean. The Brier score arises as a special case when the variable of interest $y$ is binary and the target functional is the probability $p$ that $y$ is equal to one. 
The quantile score,
$$QS_{\tau} (x_{q_{\tau}},y) = \rho_{\tau} ( y - x_{q_{\tau}} ), \text{ where } \rho_{\tau}(u) = u(\tau - \mathds{1}_{\{u<0\}}),$$
is the most popular consistent scoring function for the $\tau$-quantile. A widely-used scoring function for probabilistic forecasts 
is the continuous ranked probability score
\begin{equation} \label{eq:CRPS}
    CRPS (x_{F}, y) = \int_{-\infty}^{\infty} \left( x_{F}(c) - \mathds{1} \{y \leq c\} \right)^2 dc,
\end{equation}
where $x_F$ is a CDF.

For multivariate forecasts, the target variable is a $D$-dimensional vector $\mathbf{y}$. 
A consistent scoring function for the mean $\bs{\mu}$, which is a vector of the same dimension, is the multivariate squared error
\begin{equation} \label{eq:multivariate_SE}
    SE (\mathbf{x}_{\bs{\mu}},\mathbf{y})= \lVert \mathbf{y} - \mathbf{x}_{\bs{\mu}} \rVert_2^2,
\end{equation}
where $\lVert \cdot \rVert_2$ denotes the Euclidean norm. It generalizes the univariate squared error and is just the squared Euclidean distance between forecast and observation vector, or equivalently, the sum of all the univariate squared errors. For multivariate probabilistic forecasts, where $\mathbf{x}_{\bs{F}}$ takes the form of a multivariate CDF, the energy score, a multivariate generalization of the CRPS, is a proper scoring rule:
\begin{equation} \label{eq:energy_score}
ES(\mathbf{x}_{\bs{F}}, \mathbf{y}) = \E_{\mathbf{x}_{\bs{F}}} \lVert \mathbf{D}_{\mathbf{x}_{\bs{F}}} - \mathbf{y} \rVert_2 - 0.5 \E_{\mathbf{x}_{\bs{F}}} \lVert \mathbf{D}_{\mathbf{x}_{\bs{F}}} - \mathbf{D}^*_{\mathbf{x}_{\bs{F}}}  \rVert_2,
\end{equation}
where $\mathbf{D}_{\mathbf{x}_{\bs{F}}}$ and $\mathbf{D}^*_{\mathbf{x}_{\bs{F}}}$ are two independent draws from the forecast distribution $\mathbf{x}_{\bs{F}}$.

In many forecasting problems we want to compare forecasts for multiple horizons, variables and/or multiple locations as laid out in Section \ref{subsec:setup}. In these settings, it may be useful to aggregate scores in one or several dimensions to get a summarized assessment, before studying more disaggregated results. For example, our goal is to measure the aggregated accuracy of forecasts $x_{i,T}$ for a functional $T$ for each element $y_i$ from a vector $\mathbf{y}=(y_1, y_2,...,y_I)$, where the index $i$ possibly runs through multiple horizons, variables and/or locations.  If the scoring function $s$ is consistent for $T$, any linear combination of the single scores is a consistent scoring function for the $I$-dimensional vector of the functionals $\bs{T} = (T,...,T)^\prime$, see, e.g., \citet{dawid2014} and \citet{pic2025}. In our case studies, we use an equal-weighted aggregation, that is, the sum of the individual scores, 
\begin{equation} \label{eq:aggregated_score}
s_{agg} (\mathbf{x}_{\bs{T}},\mathbf{y})= \sum_{i=1}^I s(x_{i,T},y_i).
\end{equation}

\subsection{Skill Scores}

A skill score measures the relative improvement in terms of the expected score of a forecast $X$ over a benchmark forecast $X^{ben}$. We make the following assumption to ensure the existence of skill scores.
\begin{assumption} \label{ass:positive_expected_score}
	For all forecasts $X$, outcomes $Y$ and scoring functions $s$ it holds that
	$$\E \left[ s(X,Y) \right] > 0.$$ 
\end{assumption}
This assumption is usually mild and is satisfied for all scoring functions discussed in Section \ref{subsec:scoringfunctions} as long as the forecaster does not have perfect foresight; see  Appendix \ref{app:skill_scores} for a more detailed discussion on the assumptions underlying the usage of skill scores. It excludes the use of the popular log score for probabilistic forecasts, for which it does not make sense to compute skill scores. When our confidence bands are constructed for expected scores or their differences, Assumption \ref{ass:positive_expected_score} is not required. Thus, in those cases our bands provide uncertainty quantification for the log score as well.

\begin{definition}[Skill Score and Relative Accuracy] \label{def:skill_score}
	
The skill score of $X$ relative to $X^{ben}$ induced by $s$ is defined as 
$$SS_{s} \left( X,X^{ben},Y \right) = \frac{ \E \left[ s(X^{ben},Y) \right] - \E \left[ s \left( X,Y \right) \right]}{\E \left[ s(X^{ben},Y) \right]} = 1- \frac{ \E \left[ s \left( X,Y \right) \right]}{\E \left[ s(X^{ben},Y) \right]}.$$ 
The relative accuracy is defined as 
$$RA_{s} \left( X,X^{ben},Y \right) = \frac{ \E \left[ s \left( X,Y \right) \right]}{\E \left[ s(X^{ben},Y) \right]} = 1 - SS_{s} \left( X,X^{ben},Y \right).$$

\end{definition}

Thus, it is simply the relative change in accuracy as measured by the expected score over the benchmark. A positive skill score indicates that the forecast improves over the benchmark forecast, while a negative skill indicates a worse performance than the benchmark. The skill is zero if both forecasts are equally accurate. The skill score is at most one, but the upper bound is usually not attained, see Appendix \ref{app:skill_scores} for more details. 

Skill scores lead to the same ranking as the underlying expected scores and thus incentivize the forecaster correctly as well. 
What they add compared to expected scores or their differences, $\E [ s(X^{ben},Y) ] - \E \left[ s \left( X,Y \right) \right],$ is interpretability: The value of an expected score or the difference in expected scores is not interpretable per se, whereas a skill score directly expresses the relative improvement in forecast accuracy. Whether the improvement is practically relevant or negligible, of course, might depend strongly on the application at hand. 
The desire for better interpretability is certainly one of the reasons why relative mean squared errors are often used in the evaluation of mean forecasts. The analogous quantity can also be computed for other types of forecasts and this is what we call relative accuracy. Skill score and relative accuracy carry the same information and it is just a matter of taste, which one is used. We use the skill score throughout the rest of the paper, but our approach to uncertainty quantification works in exactly the same way with relative accuracy.

\subsection{Multivariate Setup} \label{subsec:setup}

Now, we make the multivariate setup explicit. 
We consider $M$ forecasting methods indexed by $m = 1,\ldots,M$, $H$ forecast horizons indexed by $h = 1,\ldots,H$, $D$ target variables indexed by $d = 1,\ldots,D$, and a generic multi-index $(i_1,\ldots,i_K)$ with $i_k = 1,\ldots,I_k$ for $k=1,\ldots,K$, which may represent, for example, spatial locations on a two-dimensional grid.
The forecasts $X_{t, i_1,...,i_K, d, h, m}$ are then indexed by time, the generic index, the variables, the forecast horizons, and the methods. Thus, at every point in time $t$, forecasts and observations are arrays,
\begin{equation} \label{eq:arrays_forecasts}
	\left( X_{t, i_1,...,i_K, d, h, m} \right)_{1 \leq i_1 \leq I_1,..., 1 \leq i_K \leq I_K, 1 \leq d \leq D, 1 \leq h \leq H, 1 \leq m \leq M}
\end{equation}
and
\begin{equation}\label{eq:arrays_observations}
    \left( Y_{t, i_1,...,i_K, d} \right)_{1 \leq i_1 \leq I_1,..., 1 \leq i_K \leq I_K, 1 \leq d \leq D},
\end{equation}
leading to an array of scores at time $t$,
\begin{equation} \label{eq:array_scores}
\left( S_{t, i_1,...,i_K, d, h, m} \right)_{1 \leq i_1 \leq I_1,..., 1 \leq i_K \leq I_K, 1 \leq d \leq D, 1 \leq h \leq H, 1 \leq m \leq M},
\end{equation}
where $S_{t, i_1,...,i_K, d, h, m} := s(X_{t, i_1,...,i_K, d, h, m},Y_{t, i_1,...,i_K, d})$.

The expected scores come in this form as well:
\begin{equation} \label{eq:array_expected_scores}
	\left( \E[ S_{t, i_1,...,i_K, d, h, m} ] \right)_{1 \leq i_1 \leq I_1,..., 1 \leq i_K \leq I_K, 1 \leq d \leq D, 1 \leq h \leq H, 1 \leq m \leq M}.
\end{equation}

In the shorthand notation of this section, we write the skill score of method $m_{1}$ relative to benchmark method $m_2$ as 
\begin{align}  \label{eq:skill_score_shorthand}
\begin{split}
  	SS_{i_1,...,i_K, d, h, m_1,m_2} &:= SS(X_{t, i_1,...,i_K, d, h, m_1},X_{t, i_1,...,i_K, d, h, m_2},Y_{t, i_1,...,i_K, d}) \\
    &= 1 - \frac{\E[S_{t, i_1,...,i_K, d, h, m_1}]}{\E[S_{t, i_1,...,i_K, d, h, m_2}]}.  
\end{split}
\end{align}
Consider the set $\mathcal{M} \subset \{ 1,...,M \} \times \{ 1,...,M \}$ containing all combinations of methods $m_1$ and $m_2$, for which we want to consider skill scores in our problem at hand and let the index $u=1,...,U=|\mathcal{M}|$ run through all those combinations. Often, all methods are compared to a single benchmark method from $ \{ 1,...,M \} $ such that $|\mathcal{M}|=M-1$. The corresponding array of skill scores,
\begin{equation} \label{eq:array_skill_scores}
	\left( SS_{i_1,...,i_K, d, h, u} \right)_{1 \leq i_1 \leq I_1,..., 1 \leq i_K \leq I_K, 1 \leq d \leq D, 1 \leq h \leq H,  1 \leq u \leq U},
\end{equation}
 can then be computed from the array of expected scores via \eqref{eq:skill_score_shorthand}.

For our theoretical discussion of confidence bands, the array structure of scores, expected scores and skill scores does not play a role as we treat all dimensions in the same way. For this purpose, we vectorize the arrays. Letting an index $p=1,...,P$ with $P=I_1\cdot...\cdot I_k D H M$ run through $i_1,...,i_K$, $d$, $h$ and $m$, the arrays of scores and expected scores from \eqref{eq:array_scores} and \eqref{eq:array_expected_scores} become vectors of length $P$,
\begin{equation} \label{eq:vectorized_scores}
	\mathbf{S}_t := \begin{pmatrix}
	S_{t,1} \\
	\vdots \\
	S_{t,P} \\
\end{pmatrix}
\quad \text{ and } \quad \E[\mathbf{S}_t] := \begin{pmatrix}
		\E[S_{t,1}] \\
		\vdots \\
		\E[S_{t,P}] \\
	\end{pmatrix}.
\end{equation}
The array of skill scores from \eqref{eq:array_skill_scores} becomes a vector of length $J= I_1\cdot...\cdot I_K D H U$,
\begin{equation} \label{eq:vectorized_skill_scores}
	\mathbf{SS} =
	\begin{pmatrix}
		SS_{1} \\
		\vdots \\
		SS_{J} \\
	\end{pmatrix},
\end{equation} 
which arises from $\E[\mathbf{S}_t]$  via a differentiable function $\mathbf{g}: \mathbb{R}^{P} \rightarrow \mathbb{R}^{J}$ defined via the formula for skill scores in terms of expected scores \eqref{eq:skill_score_shorthand} and the choice of the set $\mathcal{M}$, i.e.,
\begin{equation} \label{eq:g_function}
	\mathbf{SS} = \mathbf{g} \left( \E[\mathbf{S}_t]\right).
\end{equation}
Our goal is to construct confidence bands for this vector of skill scores.

\subsection{Point Estimation}

In practice, we observe an evaluation sample of size $N$, where at every time point $t=1,...,N$ the forecast and observation arrays from \eqref{eq:arrays_forecasts} and \eqref{eq:arrays_observations} are observed. Thus, the evaluation sample comes in the form of arrays as well by adding the time index,
\begin{equation*} \label{evaluation_sample}
	\left( X_{t, i_1,...,i_K, d, h, m} \right)_{1 \leq t \leq N, 1 \leq i_1 \leq I_1,..., 1 \leq i_K \leq I_K, 1 \leq d \leq D, 1 \leq h \leq H, 1 \leq m \leq M} 
\end{equation*}
and
\begin{equation*}
    \left( Y_{t, i_1,...,i_K, d} \right)_{1 \leq t \leq N, 1 \leq i_1 \leq I_1,..., 1 \leq i_K \leq I_K, 1 \leq d \leq D},
\end{equation*}
from which the array of scores can be computed:
\begin{equation} \label{eq:sample_scores}
	\left( S_{t, i_1,...,i_K, d, h, m} \right)_{1 \leq t \leq N, 1 \leq i_1 \leq I_1,..., 1 \leq i_K \leq I_K, 1 \leq d \leq D, 1 \leq h \leq H, 1 \leq m \leq M}.
\end{equation}

Averaging over time leads to an array of average scores, which is the empirical analog of the array of expected scores from \eqref{eq:array_expected_scores},
\begin{equation} \label{eq:array_average_scores}
	\left( \overline{S}_{i_1,...,i_k,d, h,m} \right)_{1 \leq i_1 \leq I_1,..., 1 \leq i_K \leq I_K, 1 \leq d \leq D, 1 \leq h \leq H, 1 \leq m \leq M} 
\end{equation}
with 
$$ \quad \overline{S}_{i_1,...,i_K,d, h,m} = \frac{1}{N} \sum_{t=1}^{N} S_{t,i_1,...,i_K,d,h,m} .$$

From this, the empirical skill scores can be computed in the same way as the population skill scores from the expected scores via
\begin{equation*}  \label{eq:empirical_skill_score}
	\widehat{SS}_{i_1,...,i_K, d, h, m_1,m_2} := 1 - \frac{\overline{S}_{i_1,...,i_K, d, h, m_1}}{\overline{S}_{i_1,...,i_K, d, h, m_2}},
\end{equation*}
which leads to an array of empirical skill scores, the empirical analog of \eqref{eq:array_skill_scores},
\begin{equation} \label{eq:array_empirical_skill_scores}
	\left( \widehat{SS}_{i_1,...,i_K, d, h, u} \right)_{1 \leq i_1 \leq I_1,..., 1 \leq i_K \leq I_K, 1 \leq d \leq D, 1 \leq h \leq H,  1 \leq u \leq U}.
\end{equation}

Again, we consider the vectorized versions. The sample of scores from \eqref{eq:sample_scores} becomes a matrix,
\begin{equation*} \label{eq:sample_scores_matrix}
	\left( S_{t,p} \right)_{1 \leq t \leq N, 1 \leq p \leq P},
\end{equation*}
which we rather write as
\begin{equation} \label{eq:sample_scores_vector}
	\left\{ \bs{S}_t \right\}_{t=1}^N
\end{equation}
in the following to emphasize that it is a sample of score vectors.
The arrays of average scores from \eqref{eq:array_average_scores} and the empirical skill scores from \eqref{eq:array_empirical_skill_scores} become vectors
\begin{equation*}
	\overline{\mathbf{S}} = \begin{pmatrix}
	\overline{S}_1 \\ 
	\vdots  \\
	\overline{S}_{P} \\
	\end{pmatrix} \quad \text{ with } \quad \overline{S}_{p} = \frac{1}{N} \sum_{t=1}^{N} S_{t,p}, \quad \text{ and } \quad
\mathbf{\widehat{SS}} =
	\begin{pmatrix}
		\widehat{SS}_{1} \\
		\vdots \\
		\widehat{SS}_{J} \\
	\end{pmatrix},
\end{equation*} 
which are the empirical analogs of \eqref{eq:vectorized_scores} and \eqref{eq:vectorized_skill_scores}.
Similar to the theoretical analogs, see \eqref{eq:g_function}, the empirical skill scores arise from the average scores via the differentiable function $g$,
	$\mathbf{\widehat{SS}} = \mathbf{g} \left( \mathbf{\overline{S}}\right)$. 

Under our Assumption \ref{ass:CLT_assumption} below, a law of large numbers holds for $\{\mathbf{S}_t\}$,  
that is, the average scores $\overline{\mathbf{S}}$ are consistent estimators for the expected scores,
	$\mathbf{\overline{S}} \overset{p}{\to} \E[\mathbf{S}_t]$. 
Under consistency of the average scores, the continuous mapping theorem implies because of \eqref{eq:g_function} that the vector of empirical skill scores is a consistent estimator of its population counterpart,
	$\mathbf{\widehat{SS}} \overset{p}{\to} \mathbf{SS}$.

\section{Confidence Bands} \label{sec:confidence_bands}

\subsection{Multivariate Diebold-Mariano Assumptions} \label{subsec:DM_asumptions}

Our construction of confidence bands builds on the scores $\mathbf{S}_t$ from \eqref{eq:vectorized_scores} themselves. To construct the bands, we  need the sample of scores $\{\mathbf{S}_t\}_{t=1}^N$ from \eqref{eq:sample_scores_vector}. we make our assumptions directly on the stochastic process of scores $\{\mathbf{S}_t\}_{t \in \mathbb{Z}}$ and not on the processes of forecasts and observations. This has the advantage that all types of forecasts, e.g.,  mean, quantile and probabilistic forecasts, can be handled within a single, versatile framework.  
In this respect, we follow \citet{diebold1995}, who essentially assume that the score differences of the two forecasting methods they want to compare follow a univariate central limit theorem; see also \citet{diebold2015} for an insightful discussion of the classical Diebold-Mariano assumptions. We aim for a multivariate version of those assumptions. Our key assumption is that a multivariate central limit theorem holds for the scores $\{\mathbf{S}_t\}_{t \in \mathbb{Z}}$. 
\begin{assumption} \label{ass:CLT_assumption}
	It holds for $\{\mathbf{S}_t\}_{t \in \mathbb{Z}}$ that
\begin{equation*}
	\sqrt{N} \left( \mathbf{\overline{S}}- \E[\mathbf{S}_t] \right) \xrightarrow[]{d} \mathcal{N}(\mathbf{0},\mathbf{\Omega}) \text{ as } N \rightarrow \infty,
\end{equation*}
where $\mathbf{\Omega} = \lim_{N \rightarrow \infty} \Var(	\sqrt{N}  \mathbf{\overline{S}} ) = \sum_{h=-\infty}^{\infty} \mathbf{\Gamma} (h)$ denotes the long-run variance matrix, and $\mathbf{\Gamma} (h)$ denotes the autocovariance matrix of $\{\mathbf{S}_t\}_{t \in \mathbb{Z}}$ at lag $h$.
\end{assumption}

We further assume that there exists a consistent estimator for  $\mathbf{\Omega}$.

\begin{assumption} \label{ass:consistent_LRV_estimator}
	There exists an estimator $\widehat{\mathbf{\Omega}}$ such that $\widehat{\mathbf{\Omega}} \xrightarrow[]{p} \mathbf{\Omega}$.
\end{assumption}	

Assumptions \ref{ass:CLT_assumption} and \ref{ass:consistent_LRV_estimator} represent a suitable multivariate extension of the classical Diebold-Mariano assumptions. Without limiting normality and the possibility to estimate the corresponding covariance matrix, inference on expected scores and skill scores will hardly be possible. Thus, we also regard them as minimal assumptions that are needed in this context. While we state them at a high level to remain agnostic about the precise dependence structure, Appendix \ref{app:assumptions} discusses  classical sufficient conditions on the dependence structure, homogeneity and moments of the process $\{\mathbf{S}_t\}_{t \in \mathbb{Z}}$. 
Depending on the problem at hand, different sets of sufficient conditions may be more suitable and stating the high-level assumptions makes it possible to choose such a set of conditions tailored to it. One could also discuss sufficient conditions on the observations and forecasts themselves that imply our assumptions on the scores, see again \citet{diebold2015} for a discussion of this in the univariate setting.

As skill scores arise from expected scores via a differentiable function $\bs{g}$, see \eqref{eq:g_function}, invoking the delta method, the normality assumption on the average scores (Assumption \ref{ass:CLT_assumption}) implies that the empirical skill scores are asymptotically normal as well,
\begin{equation} \label{eq:skill_scores_normal}
	\sqrt{N} \left(\mathbf{\widehat{SS}} - \mathbf{SS} \right) \xrightarrow[]{d} \mathcal{N}(\mathbf{0},\mathbf{\Sigma}) \text{ as } N \rightarrow \infty,  
\end{equation}  
with
\begin{equation} \label{eq:delta_method_variance}
\mathbf{\Sigma} = \nabla \bs{g} (\E[\mathbf{S}_t])^\prime \mathbf{\Omega} \nabla \bs{g} (\E[\mathbf{S}_t]),
\end{equation}
 where $\nabla \bs{g}$ denotes the Jacobian of $\bs{g}$. We lastly assume that the diagonal elements of $\bs{\Sigma}$ are positive, which essentially just means that there is sampling variability in the skill scores. Otherwise, inference on the skill scores would not be of interest anyway.
\begin{assumption} \label{ass:diagonal_nonzero}
	It holds that $\bs{\Sigma}_{jj}>0$ for $j=1,...,J$.
\end{assumption}

\subsection{Simultaneous Confidence Bands} \label{subsec:bands}


Consider an asymptotic confidence interval of 
level $1-\alpha$ for a single skill score, 
\begin{equation} \label{eq:confidence_interval}
\widehat{CI}^{1-\alpha} (SS_{j}) = \left[ \widehat{SS}_{j} - \widehat{\sigma}_{j} z_{1-\alpha/2}, \widehat{SS}_{j} + \widehat{\sigma}_{j} z_{1-\alpha/2} \right],
\end{equation}
where 
$\widehat{\sigma}_{j}$ is a consistent estimator of the standard deviation of $\widehat{SS}_{j}$,  
    $\sigma_{j} = \sqrt{\bs{\Sigma}_{jj}/{N}}$.
By construction, this interval fulfills the defining condition for a confidence interval of level $1-\alpha$,
$\lim_{N \to \infty} P \left( {SS}_{j} \in \widehat{CI}^{1-\alpha} (SS_{j}) \right)  \geq 1 - \alpha$,
with equality. If such a condition is fulfilled, we say that a confidence interval or band has correct asymptotic coverage. If the condition is fulfilled with equality, we say that it has exact asymptotic coverage.

When we are interested in a vector (or array) of skill scores $\mathbf{SS}$ and want to quantify the sampling uncertainty of its estimator $\mathbf{\widehat{SS}}$, it is natural to consider simultaneous confidence bands, which contain the whole parameter vector with a probability of at least $1-\alpha$. 
\begin{definition}[Simultaneous Confidence Bands] \label{def:simultaneous_confidence_band}
A simultaneous confidence band of (asymptotic) level $1-\alpha$ is the Cartesian product of scaled-up pointwise confidence intervals,
\begin{equation} \label{eq:c_confidence_band}
\widehat{\bs B}^{1-\alpha}_{\widehat{c}} (\mathbf{SS}) = \bigtimes_{j=1}^J    \left[ \widehat{SS}_j - \widehat{\sigma}_j \widehat{c}, \widehat{SS}_j + \widehat{\sigma}_j \widehat{c} \right],
\end{equation}
where
\begin{equation} \label{eq:confidence_band}
\lim_{N \to \infty} P \left( \bs{SS} \in \widehat{\bs B}^{1-\alpha}_{\widehat{c}}(\mathbf{SS}) \right)  \geq 1 - \alpha.
\end{equation}
We call a confidence band a Bonferroni band if $\widehat{c} = z_{1-\alpha/(2J)}$ and write $\widehat{\bs B}^{1-\alpha}_{bonf} := \widehat{\bs B}^{1-\alpha}_{z_{1-\alpha/(2J)}}$. 
We call a confidence band a sup-t band if $\widehat{c} = \widehat{q}_{ {\bs \Sigma}, 1-\alpha}$ and write $\widehat{\bs B}^{1-\alpha}_{supt} := \widehat{\bs B}^{1-\alpha}_{\widehat{q}_{ {\bs \Sigma}, 1-\alpha}}$, where $\widehat{q}_{\bs \Sigma,1-\alpha}$ is a consistent estimator of
\begin{equation} \label{eq:c_quantile}
	q_{ \bs \Sigma, 1-\alpha} :=q_{1-\alpha} \left( \max_{j = 1, ..., J} \left|   \bs \Sigma_{jj}^{-1/2} V_j \right| \right),
\end{equation}
with $\bs V = \left( V_1, \cdots , V_J \right)' \sim \mathcal{N}(\bs 0,\bs \Sigma)$.
\end{definition}

The advantage of such a confidence band compared to other confidence regions is that it is rectangular and thus can be easily visualized no matter what the length of the parameter vector $J$ is, whereas, e.g., confidence ellipsoids cannot be visualized in dimensions higher than 2. 

Just using a pointwise confidence band, i.e., the Cartesian product of the confidence intervals from \eqref{eq:confidence_interval}, which amounts to setting $\widehat{c}:=z_{1-\alpha/2}$, leads to a coverage that is unknown and usually much smaller than $1-\alpha$ for $J>1$, especially for large $J$; see Figure \ref{fig:width_and_coverage} and the discussion in Appendix \ref{sec:asymptotic_width} and the simulations in Section \ref{sec:simulations}. 

The simultaneous confidence band replaces the standard normal quantiles $z_{1-\alpha/2}$ of the pointwise band with a larger scaling factor $\widehat{c}$.  There are different approaches to constructing simultaneous confidence bands, that is, to determine $\widehat{c}$, which \citet{montielolea2019} review and compare. We focus on Bonferroni and sup-t bands.

To understand the construction of those bands (and where the name sup-t bands comes from), we can express the joint coverage probability from \eqref{eq:confidence_band} in terms of the distribution of the maximum absolute values of individual t-statistics:
\begin{equation*} \label{eq:bands_max_representation}
 P \left( \bs{SS} \in \widehat{\bs B}^{1-\alpha}_{\widehat{c}}(\mathbf{SS}) \right) = P \left( \max_{j = 1, ..., J} \left| \frac{\widehat{SS}_j - SS_j}{ \widehat{\sigma}_j} \right| \leq \widehat{c} \right).
\end{equation*}
Combining this representation 
with our assumptions \ref{ass:positive_expected_score} to \ref{ass:diagonal_nonzero}, we have that (see Lemma \ref{lemma:max_representation} in the appendix)  
\begin{equation} \label{eq:quantile_maximum_distribution}
 \lim_{N \to \infty} P \left( \bs{SS} \in \widehat{\bs B}^{1-\alpha}_{\widehat{c}}(\mathbf{SS}) \right) 
  = P \left( \max_{j = 1, ..., J} |   \bs \Sigma_{jj}^{-1/2} V_j| \leq c \right),
\end{equation}
where again $\bs V = \left( V_1, \cdots , V_J \right)' \sim \mathcal{N}(\bs 0,\bs \Sigma)$ and $c$ is defined as the probability limit of $\widehat{c}$, $\widehat{c} \xrightarrow[]{p} c$. 
This explains the choice of $\widehat{c}$ in the sup-t band as the $1-\alpha$ quantile of this distribution of the maximum of the absolute value of correlated standard normal random variables, $q_{ \bs \Sigma, 1-\alpha}$, or its estimator $\widehat{q}_{\bs \Sigma,1-\alpha}$, respectively, and shows that the sup-t band has exact asymptotic coverage, i.e., \eqref{eq:confidence_band} holds with equality. $q_{\bs \Sigma, 1-\alpha}$ is also called the $1-\alpha$ equicoordinate quantile \citep{genz2025} of the multivariate normal distribution with covariance matrix $\bs \Sigma$ as it fulfills $ P \left( | \bs \Sigma_{11}^{-1/2} V_1| \leq q_{\bs \Sigma, 1-\alpha}, \ \dots \ , | \bs \Sigma_{JJ}^{-1/2} V_J| \leq q_{\bs \Sigma, 1-\alpha} \right)  = 1 - \alpha \, $.

For Bonferroni bands we simply choose the scaling factor $\widehat{c}$ as yet another quantile of the standard normal distribution, namely $z_{1-\alpha/(2J)}$. Here the scaling factor is a constant and requires no estimation of the covariance matrix. It suffices to estimate the diagonal elements $\bs{\Sigma}_{jj}/N$ of the asymptotic covariance matrix. Since $z_{1-\alpha/(2J)} \geq q_{\bs \Sigma, 1-\alpha}$ (see, e.g., Lemma 1 in \citet{hassler2025}), of which $\widehat{q}_{\bs \Sigma,1-\alpha}$ is a consistent estimator, Bonferroni bands are asymptotically wider than sup-t bands. Thus, they have correct asymptotic coverage \eqref{eq:confidence_band}, but tend to be conservative. 


In Appendix \ref{sec:asymptotic_width}, we compare the width and coverage properties of pointwise, sup-t, and Bonferroni bands from a large-sample perspective, i.e.\ assuming that $N \to \infty$, which enables us to use the population covariance matrix to calculate the equicoordinate quantiles and the formula for the asymptotic coverage probability from \eqref{eq:quantile_maximum_distribution}. We illustrate there, in particular in figures \ref{fig:width_and_coverage} and \ref{fig:width_and_coverage_bonf}, that the width of the simultaneous bands only grows slowly with $J$. Thus, our simultaneous bands are not prohibitively wide and still informative for large vectors of skill scores, while the pointwise bands suffer from serious undercoverage already for small $J$ and are thus virtually useless. Concerning the relation of Bonferroni and sup-t bands, we find that the differences in width and coverage are negligible when the elements of the empirical skill score vector are independent or mildly correlated, while they become more pronounced when the correlation gets stronger. From an asymptotic perspective, the Bonferroni band is thus less desirable than the sup-t band.

These findings on the relative width of Bonferroni and sup-t bands are consistent with the finite-sample performance in the simulation study from Section \ref{sec:simulations}. However, in our asymptotic comparisons, we set aside the issue of variance estimation. It is a well-known problem that variance estimation under temporal dependence is plagued by a downward bias, which gets stronger with the degree of temporal dependence and only disappears in large samples, leading to oversized tests and undercoverage of confidence intervals. This issue is documented in the large literature on heteroscedasticity and autocorrelation consistent (HAC) variance estimation (see \citet{lazarus2018} for an overview), but also for the moving block bootstrap \citep{fitzenberger1997}, which we use (for details see Section \ref{subsec:bootstrap}). Thus, our confidence bands are too narrow as well under temporal dependence. As a partial remedy, we recommend the Bonferroni bands as a default choice in time series settings. They show less severe undercoverage in our simulations because their conservativeness under strong dependence works against the undercoverage that comes from the variance estimation problem.


\subsection{Bootstrap Algorithm} \label{subsec:bootstrap}




We use a bootstrap algorithm to construct our simultaneous confidence bands, that is, to estimate $\sigma_{j}$ and for the sup-t bands also $q_{\bs \Sigma, 1-\alpha}$. The initial step of the algorithm (step 2 in Algorithm \ref{alg:bootstrap} below) is generating bootstrap resamples of scores $\{\bs{S}_t\}_{t=1}^N$. For this step, any type of bootstrap can be chosen that is suitable for the type of data at hand. This suitability is formulated in the following assumption, which guarantees that the bootstrap reproduces the asymptotic distribution of $\bs{\overline{S}}$.


\begin{assumption} \label{ass:bootstrap_validity}
	Consider a bootstrap algorithm that generates resamples $\{ \bs S_t^{*,b} \}_{t=1}^N$, $b=1, ..., B$, from which averages $\bs{\overline{S}}^{*,b}$ are computed, and let $F_{\bs{\overline{S}}}^*$ denote the bootstrap CDF of $\sqrt{N} (\bs{\overline{S}}-\E[\bs{S}_t])$. It holds that
	$$\sup_{\bs{x} \in \mathbb{R}^J}   \left| F_{\bs{\overline{S}}}^* (\bs{x}) - F_{\bs{\overline{S}}}(\bs{x} ) \right| \xrightarrow[]{p} 0 \text{ as } N \rightarrow \infty.$$
\end{assumption}

Usually, the existence of a central limit theorem already essentially guarantees that there is a valid bootstrap algorithm \citep{lahiri2003}, that is, Assumption \ref{ass:bootstrap_validity} is not really a strong additional assumption on top of Assumption \ref{ass:CLT_assumption}. The possibility of flexibly choosing the type of bootstrap underlying the algorithm makes our bootstrap bands very versatile as it can be adapted by the user to different types of data. As we are concerned with time series data in our applications, we use the moving block bootstrap discussed below but other valid bootstrap procedures can be used analogously. For example, for data, for which the iid assumption is reasonable, a classical iid bootstrap can be used.


The full bootstrap algorithm is as follows. 

\begin{algorithm}[Bootstrap Bands] \label{alg:bootstrap}
	\begin{enumerate}[itemsep=-2pt]
		\item[]
		\item[] Input: Sample of scores $\left\{ \bs{S}_t \right\}_{t=1}^N$ from \eqref{eq:sample_scores_vector}. Type of band: (a) Bonferroni, (b) sup-t.
		\item Compute 
        empirical skill scores $\mathbf{\widehat{SS}}$.
		\item Generate B bootstrap samples of score vectors $\{ \bs S_t^{*,b} \}_{t=1}^N$, $b=1, ..., B$ using a bootstrap algorithm that satisfies Assumption \ref{ass:bootstrap_validity}.
		\item For $b=1, ..., B$: Compute $\widehat{\bs{SS}}^{*,b}$.
		\item For $j=1,...,J$: Compute the empirical standard deviation $\widehat{\sigma}_j^{*}$ of $\left\{\widehat{SS}^{*,b}_j \right\}_{b=1}^B$.
        		\item Compute the scaling factor $\widehat{c}^*$:
		\begin{enumerate}
        \item Set $\widehat{c}^* = z_{1-\alpha/(2J)}$.
		  \item For $b=1, ..., B$: Compute $\widehat{max}^{*,b} = \max_{j=1,...,J}\frac{|\widehat{SS}_j^{*,b}-\widehat{SS}_j |}{\widehat{\sigma}_j^{*}}$. Let $\widehat{q}^*_{ {\bs \Sigma}, 1-\alpha}$ be the empirical $1-\alpha$ quantile of $\{ \widehat{max}^{*,b}\}_{b=1}^B$. Set $\widehat{c}^* = \widehat{q}^*_{ {\bs \Sigma}, 1-\alpha}$.
		\end{enumerate}
		\item Compute  $\widehat{\bs B}^{*,1-\alpha}_{\widehat{c}^*}(\mathbf{SS}) =  \bigtimes_{j=1}^J   \left[ \widehat{SS}_j - \widehat{\sigma}_j^{*} \widehat{c}^*, \widehat{SS}_j + \widehat{\sigma}_j^{*} \widehat{c}^* \right] =:\bigtimes_{j=1}^J   \left[ \widehat{LB}^{*,1-\alpha}_{\widehat{c}^*, j}, \widehat{UB}^{*,1-\alpha}_{\widehat{c}^*, j} \right].$ 
        We write: (a)  $\widehat{\bs B}^{*,1-\alpha}_{bonf} := \widehat{\bs B}^{*,1-\alpha}_{\widehat{c}^*}$, (b) $\widehat{\bs B}^{*,1-\alpha}_{supt} := \widehat{\bs B}^{*,1-\alpha}_{\widehat{c}^*}$.
		\item[] Output: (a)  $\mathbf{\widehat{SS}}$ and $\widehat{\bs B}^{*,1-\alpha}_{bonf}$, (b) $\mathbf{\widehat{SS}}$ and $\widehat{\bs B}^{*,1-\alpha}_{supt}$.  
	\end{enumerate}
\end{algorithm} 

After generating $B$ bootstrap samples of scores, Algorithm 1 estimates the empirical standard deviations of the resulting bootstrap skill scores. For the Bonferri bands, this is the only quantity that needs to be estimated, so in this case the algorithm amounts to a bootstrap variance estimator since the only thing left is to set the scaling factor to $z_{1-\alpha/(2J)}$ and compute the bands. To construct sup-t bands, the scaling factor is determined as the $1-\alpha$ quantile of the bootstrap sample of the maximum of the absolute t-statistics as proposed by \citet{montielolea2019}. We establish the validity of the bands in Proposition \ref{prop:validity_bootstrap_bands}.

\begin{proposition}[Validity of the Bootstrap Bands] \label{prop:validity_bootstrap_bands}
	Let assumptions \ref{ass:positive_expected_score} to \ref{ass:diagonal_nonzero} hold and assume that in step 2 of Algorithm 1 a bootstrap algorithm fulfilling Assumption \ref{ass:bootstrap_validity} is used. Then it holds that
	\begin{equation*} 
		\lim_{N \to \infty} P \left( \mathbf{SS} \in \widehat{\bs B}^{*,1-\alpha}_{bonf} (\mathbf{SS}) \right)  \geq 1 - \alpha \quad \text{ and } \quad \lim_{N \to \infty} P \left( \mathbf{SS} \in \widehat{\bs B}^{*,1-\alpha}_{supt} (\mathbf{SS}) \right)  = 1 - \alpha.
	\end{equation*}
\end{proposition}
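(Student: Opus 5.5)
The plan is to reduce both claims to the limiting representation \eqref{eq:quantile_maximum_distribution} (Lemma \ref{lemma:max_representation} in the appendix). That representation says that for any scaling factor $\widehat{c}\xrightarrow{p}c$ and any standard-error estimators with $\widehat{\sigma}_j/\sigma_j\xrightarrow{p}1$, the coverage converges to $P(\max_j|\bs\Sigma_{jj}^{-1/2}V_j|\le c)$. Everything therefore comes down to verifying two facts about the bootstrap quantities of Algorithm \ref{alg:bootstrap}: (i) $\sqrt{N}\,\widehat{\sigma}^*_j\xrightarrow{p}\bs\Sigma_{jj}^{1/2}$ for each $j$, and (ii) $\widehat{q}^*_{\bs\Sigma,1-\alpha}\xrightarrow{p}q_{\bs\Sigma,1-\alpha}$.

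First I would transfer bootstrap validity from $\overline{\bs S}$ to $\widehat{\bs{SS}}$. Assumption \ref{ass:bootstrap_validity} combined with Assumption \ref{ass:CLT_assumption} (and Pólya's theorem, since the limit is continuous) gives that the bootstrap law of $\sqrt N(\overline{\bs S}^*-\overline{\bs S})$ converges in probability to $\mathcal N(\bs 0,\bs\Omega)$. Assumption \ref{ass:positive_expected_score} makes $\bs g$ continuously differentiable near $\E[\bs S_t]$. A bootstrap delta method then shows that the bootstrap law of $\sqrt N(\widehat{\bs{SS}}^*-\widehat{\bs{SS}})$ converges in probability to $\mathcal N(\bs0,\bs\Sigma)$. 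The argument is to linearize around $\overline{\bs S}$, use $\overline{\bs S}\xrightarrow{p}\E[\bs S_t]$, and note that the remainder is $o_{p^*}(1)$.

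For (i), I would read the algorithm's empirical standard deviation, as $B\to\infty$, as the bootstrap standard deviation. Its consistency needs more than weak convergence, namely uniform integrability of $N(\widehat{SS}^*_j-\widehat{SS}_j)^2$. \emph{This is the main obstacle}, because a ratio of bootstrap averages can misbehave when the benchmark average is near zero. I would try to avoid it with the remark that Lemma \ref{lemma:max_representation} only needs \emph{some} consistent scale. If the standard deviation is problematic, one can replace it by a bootstrap interquartile-range estimator, which is consistent directly from weak convergence. Alternatively, one can add the mild moment condition implicit in the paper's setup and treat the variance estimator as covered by the consistency assumption on $\widehat{\sigma}_j$. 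I would state this explicitly.

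For (ii), the bootstrap statistic is $\max_j|\sqrt N(\widehat{SS}^*_j-\widehat{SS}_j)|/(\sqrt N\widehat{\sigma}^*_j)$. By the bootstrap delta method, (i), Slutsky and the continuous mapping theorem, its bootstrap law converges in probability to that of $\max_j|\bs\Sigma_{jj}^{-1/2}V_j|$. Under Assumption \ref{ass:diagonal_nonzero} this limit has a continuous CDF that is strictly increasing at $q_{\bs\Sigma,1-\alpha}$, because the max of nondegenerate absolute Gaussians has positive density on $(0,\infty)$. Convergence of quantiles then gives (ii).

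Finally I would plug in. For the sup-t band, $c=q_{\bs\Sigma,1-\alpha}$, and \eqref{eq:quantile_maximum_distribution} together with continuity of the limit law gives coverage exactly $1-\alpha$. For the Bonferroni band, $c=z_{1-\alpha/(2J)}$, and the union bound gives
\[
P\Big(\max_j|\bs\Sigma_{jj}^{-1/2}V_j|\le z_{1-\alpha/(2J)}\Big)\ge 1-\sum_{j=1}^J P\big(|\bs\Sigma_{jj}^{-1/2}V_j|>z_{1-\alpha/(2J)}\big)=1-\alpha,
\]
so the coverage is at least $1-\alpha$.
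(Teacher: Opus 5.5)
Your reduction is, in substance, the paper's. The paper checks that Assumptions \ref{ass:positive_expected_score}--\ref{ass:diagonal_nonzero} imply Assumption~1 in the appendix of \citet{montielolea2019}, so that their Lemma~1 (Lemma \ref{lemma:max_representation} here) applies, and then takes the bootstrap step wholesale from their Proposition~3. For the Bonferroni band it uses $z_{1-\alpha/(2J)}\ge q_{\bs\Sigma,1-\alpha}$, which is exactly what your union bound shows.

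Where you unpack the cited bootstrap result yourself, however, there is a genuine gap at step (i). Assumption \ref{ass:bootstrap_validity} only controls the bootstrap law of $\sqrt N(\overline{\bs S}^*-\overline{\bs S})$ in Kolmogorov distance. Together with your bootstrap delta method it gives weak convergence of the bootstrap law of $\sqrt N(\widehat{SS}^*_j-\widehat{SS}_j)$, but it says nothing about its second moment. So $\sqrt N\,\widehat\sigma^*_j\xrightarrow{p}\bs\Sigma_{jj}^{1/2}$ does not follow. You notice this, but neither remedy proves the proposition as stated. An IQR-based scale changes Algorithm \ref{alg:bootstrap}. ``Treating the variance estimator as covered by the consistency assumption on $\widehat\sigma_j$'' assumes precisely the property of the bootstrap output that has to be established. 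Assumptions \ref{ass:positive_expected_score}--\ref{ass:bootstrap_validity} contain no moment or uniform-integrability condition on the bootstrap distribution.

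You can close part of this without new assumptions.

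\textbf{Bonferroni.} Here you only need the bootstrap scale not to be too small, and that does follow from weak convergence alone. The clipping map $x\mapsto\max(-M,\min(M,x))$ is $1$-Lipschitz, so the bootstrap variance of the clipped statistic is at most $N\widehat\sigma^{*2}_j$. At the same time, that variance converges in probability to $\Var(\max(-M,\min(M,V_j)))$, which tends to $\bs\Sigma_{jj}$ as $M\to\infty$. Hence $\sqrt N\,\widehat\sigma^*_j\ge\bs\Sigma_{jj}^{1/2}-o_p(1)$, and the $\liminf$ of the Bonferroni coverage is at least $P(\max_j|\bs\Sigma_{jj}^{-1/2}V_j|\le z_{1-\alpha/(2J)})\ge1-\alpha$.

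\textbf{Sup-t.} The critical value is calibrated with the same scales, so exactness only needs $\sqrt N\,\widehat\sigma^*_j\xrightarrow{p}\tau_j$ for some deterministic $\tau_j>0$; $\tau_j=\bs\Sigma_{jj}^{1/2}$ is not required. Redoing your Slutsky and quantile steps with $\max_j|V_j|/\tau_j$ then gives coverage exactly $1-\alpha$. Still, some condition delivering such a deterministic limit is needed beyond Assumption \ref{ass:bootstrap_validity}, for example uniform integrability of $N(\widehat{SS}^*_j-\widehat{SS}_j)^2$ under the bootstrap law. The alternative is to take this step as covered by the cited Proposition~3, which is what the paper does.
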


\begin{proof}{See Appendix \ref{app:proofs}.}
\end{proof}

Of course, as an initial step in the algorithm, the arrays of scores from \eqref{eq:array_scores} have to be vectorized to arrive at $\left\{ \bs{S}_t \right\}_{t=1}^N$. As a final step, the process has to be reversed to arrive at an array of bands, more precisely, one array of upper and one of lower bounds of the same nature as the array of empirical skill scores from \eqref{eq:array_empirical_skill_scores}. 
Those three together is the final output of the algorithm:
\begin{equation} 
	\left( \left( \widehat{LB}^{*,1-\alpha}_{\widehat{c}^*},\widehat{SS},\widehat{UB}^{*,1-\alpha}_{\widehat{c}^*} \right)_{i_1,...,i_K, d, h, u} \right)_{1 \leq i_1 \leq I_1,..., 1 \leq i_K \leq I_K,  1 \leq d \leq D, 1 \leq h \leq H, 1 \leq u \leq U}.
\end{equation}

We use the moving block bootstrap by \citet{kunsch1989} in step 2 of Algorithm \ref{alg:bootstrap}, which replicates the temporal dependence from the original series of scores by resampling blocks of length $l$ and stringing them together. The block length $l$ has to grow with the sample size, $l \rightarrow \infty$ as $N \rightarrow \infty$, such that asymptotically the dependence structure of the original series is fully replicated. At the same time, it has to grow slower than the sample size itself, $\frac l N \rightarrow 0$, such that the number of blocks goes to infinity as well. Thus, the block length needs to fulfil $\frac l N + \frac 1 l \rightarrow 0$. Apart from that, the moving block bootstrap requires essentially that a central limit theorem holds to be valid. We spell out sufficient conditions in Appendix \ref{app:assumptions}. To generate bootstrap resamples of size $N$, the moving block bootstrap draws $\left \lceil \frac{N}{l} \right \rceil$ blocks with replacement from the set of $N-l+1$ blocks $\{\mathcal{B}_1,...,\mathcal{B}_{N-l+1} \}$, where $\mathcal{B}_i=(\bs{S}_i,...,\bs{S}_{i+l-1})$, and strings them together (and finally discards the last $\left \lceil \frac{N}{l} \right \rceil - N$ observations) to generate a bootstrap sample $\left\{ \bs{S}^*_t \right\}_{t=1}^N$. To replicate the cross-sectional dependence, we draw the entire vectors $\bs{S}_i$, keeping the original structure of the elements. For an in-depth discussion of the moving block bootstrap see \citet{lahiri2003}. The moving block bootstrap requires the choice of a block length $l$. The theoretically optimal $l$ (in the sense of minimizing the mean squared error) is of order $N^{\frac 1 4}$ \citep[Corollary 7.2]{lahiri2003}. Data-driven choices of the block length are difficult as the optimal block length depends on the distribution of parameters of a higher order than the one which is bootstrapped. Therefore, one usually needs to apply a second bootstrap after the first one, for which one wants to determine the block length in the first place. Consequently, $l$ is usually chosen ad hoc in practice. Nevertheless, in the univariate case such procedures exist \citep[Chapter 7]{lahiri2003}, while we are not aware of any in a multivariate setting. Based on our simulation results and the surrounding discussion in Section \ref{sec:simulations}, we recommend choosing a block length in the order of magnitude of $l = 3 \lfloor N^{1/4} \rfloor $ and checking robustness against some alternative choices. When $l=1$, the moving block bootstrap reduces to a classical iid bootstrap.



The applicability of our simultaneous confidence bands is not restricted to skill scores $\bs{SS}$, but can be used in the same way for expected scores $\E [\bs{S}_t]$ or relative accuracy, $\bs{RA} = (1-SS_1,...,1-SS_J)^\prime$ (see Definition \ref{def:skill_score}), by replacing empirical skill scores $\mathbf{\widehat{SS}}$ and their bootstrapped versions $\widehat{\bs{SS}}^{*,b}$ with $\mathbf{\overline{S}}$ and $\mathbf{\overline{S}}^{*,b}$ or $\mathbf{\widehat{RA}}$ and $\widehat{\bs{RA}}^{*,b}$, respectively.


\section{Simulations} \label{sec:simulations}

To assess the finite-sample performance of our confidence bands, we simulate directly from a score process, following \citet{hansen2011} and \citet{quaedvlieg2021}.
To be able to control the strength of temporal dependence as well as the cross-correlation between the scores separately and by a single parameter each, we simulate the $P$-dimensional vector of scores from \eqref{eq:vectorized_scores} by a simple vector autoregressive model of order 1 (VAR(1) model) similar to \citet{knueppel2022}:
\begin{equation*}
    \mathbf{S}_t = \mathbf{c} +  \mathbf{A} \mathbf{S}_{t-1} + \bs \varepsilon_t \quad  \text{ with } \quad \mathbf{c} = (\mathbf{I} - \mathbf{A}) \E[ \mathbf{S}_t ],
\end{equation*}
where $\mathbf{I}$ denotes the identity matrix and $\mathbf{A} = a \mathbf{I} $ is a diagonal matrix with diagonal element $a$. The error term follows a multivariate normal distribution $\bs \varepsilon_t \sim \mathcal{N} ( \mathbf{0} , \mathbf{V} )$, where the covariance matrix has an equicorrelation structure. The variances are 1 and the covariances/correlations are all equal to $v$, $\mathbf{V} = v \mathbf{J} + (1-v) \mathbf{I}$, where $\mathbf{J}$ denotes a matrix of ones. Thus, the parameter $a$ governs the strength of temporal dependence, while $v$ controls the cross-correlation. We choose the following parameter values: $a=0,0.3,0.6$, $v=0,0.3,0.6$, and $P=2,5,25, 100, 400$. We set all elements of $\E[ \mathbf{S}_t ] $ to 10. The value of the expected scores should not matter as we are rather interested in the uncertainty surrounding it. However, the expected score of the benchmark method should be sufficiently far away from 0 so that the average score is not 0 and empirical skill scores can be calculated. For each parameter combination, we simulate 1000 $P$-dimensional time series of scores of length $N=100,400$, calculate the $J=P-1$ empirical skill scores relative to the $P$th score and construct $90 \%$ confidence bands employing a moving block bootstrap with block length $l=q \lfloor N^{1/4} \rfloor $ for $q=1,2,3$. We also compare results to an iid bootstrap, where $l=1$, which is appropriate for iid data ($a=0$). We also include pointwise bands (with both variants of the bootstrap) as an important competitor widely used in practice. We report the fraction of times the true skill score vector (which is $\mathbf{0}$) falls into the bands. All tabulated simulation results can be found in Appendix \ref{app:tabulated_simulation_results}.

First, we focus on the results for small to medium dimensions of the score vector, i.e., $P=2,5,25$, and compare the iid bootstrap with block length $l = 1$ to a block bootstrap with $ l = 3 \lfloor N^{1/4} \rfloor $. Table \ref{tab:cover_ss_indep} displays the empirical coverage of the confidence bands for the skill score vector when expected scores are independent over time, i.e., $a = 0$. As expected, the pointwise bands show a severe undercoverage for a skill score vector containing more than one element ($ P > 2$) for both choices of the block length, which becomes more pronounced for larger values of $P$ and smaller sample sizes $N$. In contrast, both simultaneous bands using the iid bootstrap are very close to the nominal coverage of 0.9. For large values of $P$, the sup-t bands show a slight undercoverage for $N = 100$ while the Bonferroni bands show a slight overcoverage for $N = 400$. Using a block length of $ l = 3 \lfloor N^{1/4} \rfloor $, which is quite large when there is no temporal dependence, decreases the coverage of the bands only to some degree for $N = 100$, while it barely affects results for $N = 400$. The level of cross-sectional dependence, which is controlled by $v$, does not alter the results considerably, suggesting that the bootstrap method correctly accounts for it.


Table \ref{tab:cover_ss} displays the empirical coverage rates when the scores are autocorrelated ($a > 0$). Unsurprisingly, using a block bootstrap to capture the persistence yields better results than using the iid bootstrap, especially for high levels of $a$. Again, both simultaneous bands outperform the pointwise bands by a large margin for $P > 2$. Although the sup-t bands have exact asymptotic coverage by construction, the finite sample results reveal that their empirical coverage is lower than the nominal level of 0.9, which is more pronounced for smaller sample sizes $N$, larger dimensions $P$ of the score vector, and stronger levels of persistence $a$. 
This is as expected given the downward bias in variance estimators under temporal dependence discussed in Section \ref{subsec:bands}, which also plagues the moving block bootstrap \citep{fitzenberger1997}. This issue naturally exponentiates with longer skill score vectors, i.e., with higher $P$ or $J$, respectively. As illustrated in Appendix \ref{sec:asymptotic_width}, the Bonferroni bands are very similar to the sup-t bands in terms of asymptotic width and coverage for weak levels of dependence. With increasing dependence, the Bonferroni bands suffer from overcoverage asymptotically. We therefore recommend the Bonferroni bands to mitigate the notorious undercoverage in finite samples under temporal dependence. In our simulations, the empirical coverage of the Bonferroni bands is better than the coverage of sup-t for $ P \geq 5 $ and it is close to 0.9 when $N = 400$. As before, the cross-sectional dependence seems to be captured well by the bootstrap as it does not influence the coverage rates considerably. The results are qualitatively similar when analyzing confidence bands for the expected scores themselves (see Tables \ref{tab:cover_es_indep} and \ref{tab:cover_es}).

In Tables \ref{tab:cover_ss_bonf_L1} and \ref{tab:cover_ss_supt_L1}, we consider different choices of the block length $l = q \lfloor N^{1/4} \rfloor $ with $q=1,2,3$. In general, the coverage rates are quite robust to the choice of $l$, but, of course, the optimal block length tends to increase with the degree of temporal dependence. We suggest choosing $q = 3$ as a default for the following reasons. First, it is reasonable to assume that scores are temporally dependent to some degree in most applications. In our simulation study, selecting $q = 3$ frequently achieves the best coverage for $ N = 400$ when $ a > 0$. Second, the differences in coverage rates due to the block length are more severe for stronger levels of temporal dependence while they are less relevant for small $a$, which supports choosing a larger block length as default.

For our asymptotics, we consider the length of the score vector fixed while the sample size goes off to infinity. In order to check whether the empirical results are still decent for very large vectors of scores, we also report results for $P = 100$ and $P = 400$ in Tables \ref{tab:cover_ss_bonf_L1} and \ref{tab:cover_ss_supt_L1}. Even when $ P = 400$, the Bonferroni bands work very well for skill scores when the sample size $N$ is 400, especially for $a = 0$ and $a = 0.3$. Although the coverage becomes a bit worse for $a = 0.6$, it still about 80\% even for the very large skill score vectors. In the small sample, results are satisfying for $a = 0$, depending on the block length. However, with increasing temporal dependence, the coverage rates become considerably smaller than 0.9 in the high-dimensional setting. Thus, with very large vectors of skill scores under strong temporal dependence one should be aware that even the simultaneous bands may suffer from serious undercoverage.

\section{Case Studies} \label{sec:case_studies}



\subsection{Quantifying the Benefits of Time-Varying Parameters in Economic Forecasting} \label{subsec:case_economics}

Bayesian Vector Autoregressions (BVAR) are workhorse models in macroeconomic forecasting. In this application, we study the benefits of including time-varying parameters and stochastic volatility in those models with respect to their forecasting performance. This question has been addressed, for example, by \citet{clark2011}, \citet{dagostino2013}, \citet{clark2015}, and \citet{knueppel2022}. We are able to quantify those benefits using skill scores, characterize the surrounding uncertainty, and assess their statistical significance via our confidence bands.

Using data from FRED-QD \citep{mccracken2020}, we generate mean and probabilistic forecasts for three quarterly macroeconomic variables: real GDP growth, inflation, and the federal funds rate. We consider forecast horizons up to eight quarters ahead, i.e., $h=1,...,8$.  Our evaluation sample runs from 1991:Q1 to 2019:Q4 and consists of $N=116$ observations. We compare the performance of two forecasting methods, that is, we analyze the skill score of method $m_1$ relative to the benchmark method $m_2$.
The forecasting method of interest, $m_1$, is a BVAR that allows for time-varying parameters and stochastic volatility. We choose the model specification introduced by \cite{primiceri2005} with two lags: 
\begin{equation} \label{eq:BVAR_TVPSV}
\mathbf{Y}_t = \mathbf{c}_t + \mathbf{B}_{1,t} \mathbf{Y}_{t-1} + \mathbf{B}_{2,t} \mathbf{Y}_{t-2} + \bs{\varepsilon}_t \, , \qquad \bs{\varepsilon}_t \sim N(\mathbf{0},\mathbf{\Psi}_t),
\end{equation}
where $\mathbf{Y}_t = (Y_{t,1}, Y_{t,2}, Y_{t,3})' $ denotes the vector of variables at time $t$ consisting of real GDP growth, inflation, and the federal funds rate. The vector of intercepts $\mathbf{c}_t$, the coefficient matrices, $\mathbf{B}_{1,t}$ and $\mathbf{B}_{2,t}$, as well as the error covariance matrix $\mathbf{\Psi}_t$ are allowed to vary over time. In short, they evolve as (geometric) random walks. Priors are chosen analogously to \cite{primiceri2005}. For a more in-depth discussion of the model see \cite{primiceri2005} and \cite{delnegro2015}. Our benchmark method $m_2$ is the same model as in \eqref{eq:BVAR_TVPSV}, but the parameters $\mathbf{c}$, $\mathbf{B}_{1}$, $\mathbf{B}_{2}$ and  $\mathbf{\Psi}$ are assumed to be constant over time. Altering the respective priors `switches off' the time-varying elements. We estimate both models on rolling windows of 120 observations, from which we use 48 observations to determine prior parameters via least
squares (LS). The main Bayesian estimation is performed on the remaining 72 observations. The number of draws in the MCMC algorithm is 50,000, from which we keep every tenth draw. We use the R package \texttt{bvarsv} \citep{krueger2015} for implementation. 

Forecast evaluation is carried out via univariate and multivariate scoring functions. Thus, variables are treated separately or jointly. For mean forecasts, we use the univariate squared error and its multivariate version as defined in equations \eqref{eq:squared_error} and \eqref{eq:multivariate_SE}. For density forecasts, we use the CRPS and the energy score, see equations \eqref{eq:CRPS} and \eqref{eq:energy_score}. In both applications we choose a confidence level of 90\% and use Bonferroni bands with a block length of $ l = 3 \lfloor N^{1/4} \rfloor $ in the main analysis.

\begin{figure} 
\centering
\includegraphics[scale = 0.45]{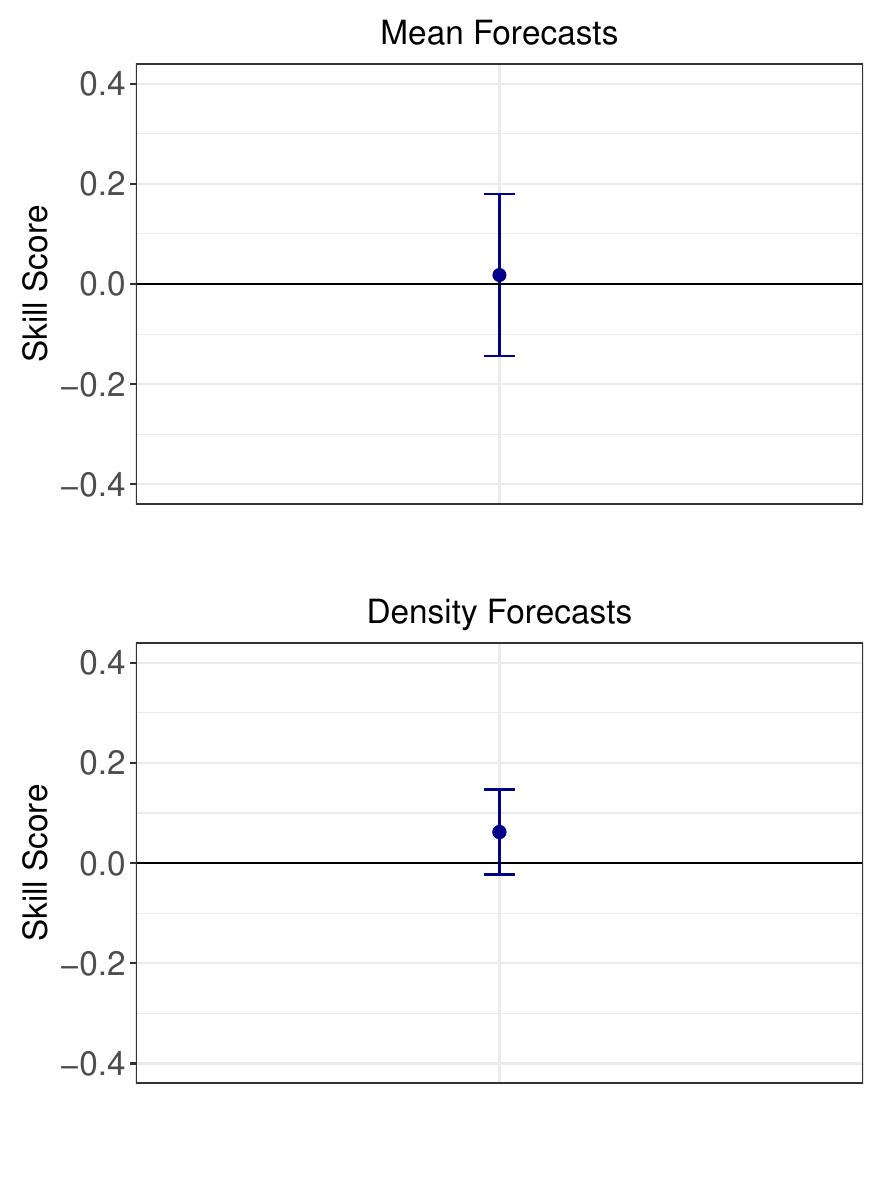}
\caption{Estimated skill scores of the BVAR with time-varying parameters and stochastic volatility with 90\% Bonferroni confidence bands, fully aggregated. The benchmark is a 
BVAR with constant parameters. The scores used are the multivariate squared error for mean forecasts and the energy score for probabilistic forecasts. The block length used in the bootstrap is equal to $3 \lfloor N^{1/4} \rfloor =9$.}
\label{fig:bvar_agg}
\end{figure}

First, we aggregate the skill score over all dimensions, i.e., over forecast horizons and over variables. This allows us to assess the gains of time-varying parameters based on a single number for each type of forecast. The aggregation is as follows. First, we use the multivariate scoring functions to evaluate the forecasts for the variables jointly. Then, we apply the equal-weighting scheme from \eqref{eq:aggregated_score} over forecast horizons. Figure \ref{fig:bvar_agg} depicts the results for mean forecasts in the upper panel and for density forecasts in the lower panel. The blue point is the estimated skill score and the blue vertical line depicts the respective confidence interval. The estimated skill score is positive in both panels with values of 1.79\% and 6.2\%. These figures suggest that time-varying parameters improve both types of forecasts, i.e., mean and density forecasts. However, the estimate for mean forecasts is subject to larger sampling uncertainty as the width of the confidence bands is 0.325. While the upper bound suggests an improvement in forecasting performance by 18\% due to time-varying parameters, the lower bound indicates a decline in performance by 14.4\%. 
Of course, the skill score is not significantly different from 0 at the 10\% level since the bands include the null. 
A traditional hypothesis test alone would only indicate that the null of equal predictive accuracy is not rejected at the 10\% level; the confidence band additionally reveals that the data are consistent with improvements as large as 18\% or deteriorations as large as 14.4\%, 
illustrating the practical value of uncertainty quantification beyond binary test decisions.
For the density forecasts, the confidence bands are much narrower with a width of 0.17. Thus, estimation uncertainty is much lower. As before, the hypothesized value of 0 is included in the confidence band, i.e., the traditional test of equal predictive ability is not rejected at 10\% significance level. However, the confidence bands show that the lower bound is at -2.28\%, suggesting rather small losses, while the upper bound is at 14.7\%.

\begin{figure}
\centering
\includegraphics[scale = 0.45]{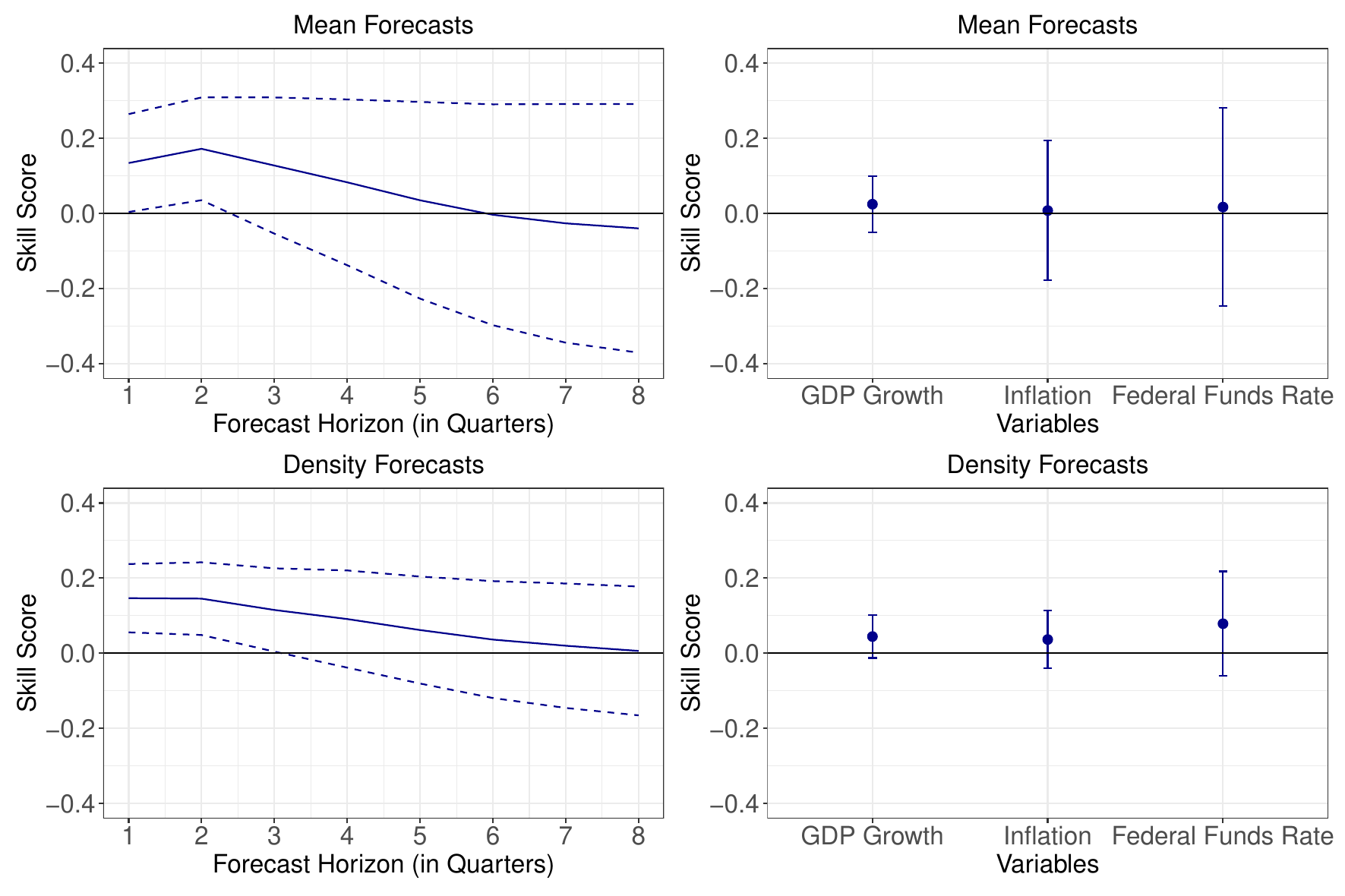}
\caption{Estimated skill scores of the BVAR with time-varying parameters and stochastic volatility with 90\% Bonferroni confidence bands partly disaggregated. Left: Disaggregated over forecast horizons. Right: Disaggregated over variables. The benchmark is a 
BVAR with constant parameters. The scores used are the (multivariate) squared error for mean forecasts and the energy score and the CRPS for probabilistic forecasts. The block length used in the bootstrap is equal to $3 \lfloor N^{1/4} \rfloor =9$.}
\label{fig:bvar_partly_disagg}
\end{figure}

Although the fully aggregated skill score allows us to summarize information, reduce dimensionality and is therefore a good starting point for every forecast comparison, it might conceal important patterns in the forecasting performance across forecast horizons and variables. Thus, we next disaggregate skill scores over one dimension while keeping the other dimension aggregated. On the left side of Figure \ref{fig:bvar_partly_disagg}, we disaggregate over forecast horizons. The solid and dashed lines depict the estimated skill score and its confidence bands, respectively. The latter are simultaneous with respect to all eight forecast horizons. The estimated skill score is positive at short forecast horizons in both panels. For mean forecasts, estimation uncertainty is again quite high. For example, at a forecast horizon of one quarter, they span from 0.34\% to 26.5\% around a point estimate of 13.44\%. With increasing forecast horizon, the estimated skill score tends to depreciate while uncertainty rises further. However, the lower bound remains above zero up to a forecast horizon of two quarters. The confidence bands for the density forecasts, which we already took a look at in Figure \ref{fig:bvar_intro} in the introduction, are much more concentrated. 
At the first two horizons the lower bounds are approximately 5.56\% and 4.85\%, while the upper bounds reach 23.8\% and 24.2\%, 
respectively, with point estimates of approximately 15\% suggesting considerable gains in the forecasting performance due to the inclusion of time-varying parameters in the model.
The right side of Figure \ref{fig:bvar_partly_disagg} displays the skill score disaggregated over variables by using the univariate scoring functions instead of the multivariate analogs. The confidence bands are simultaneous with respect to all three variables. First, they reveal a strong variation in uncertainty across variables despite the forecasts being generated by the same model. In both panels, the confidence bands for the federal funds rate are more than twice as wide as the bands for real GDP growth. Second, the upper bound is always larger in magnitude than the lower bound in all cases. This observation is more pronounced for the density forecasts. For example, consider the point estimate of 4.44\% for real GDP growth. The lower bound for this estimate is only at -1.25\% while the upper bound is at 10.1\%. Thus, gains due to time-varying parameters could potentially be quite large while potential losses seem limited. As before, uncertainty is higher for mean forecasts than for density forecasts. 

\begin{figure}
\centering
\includegraphics[scale = 0.45]{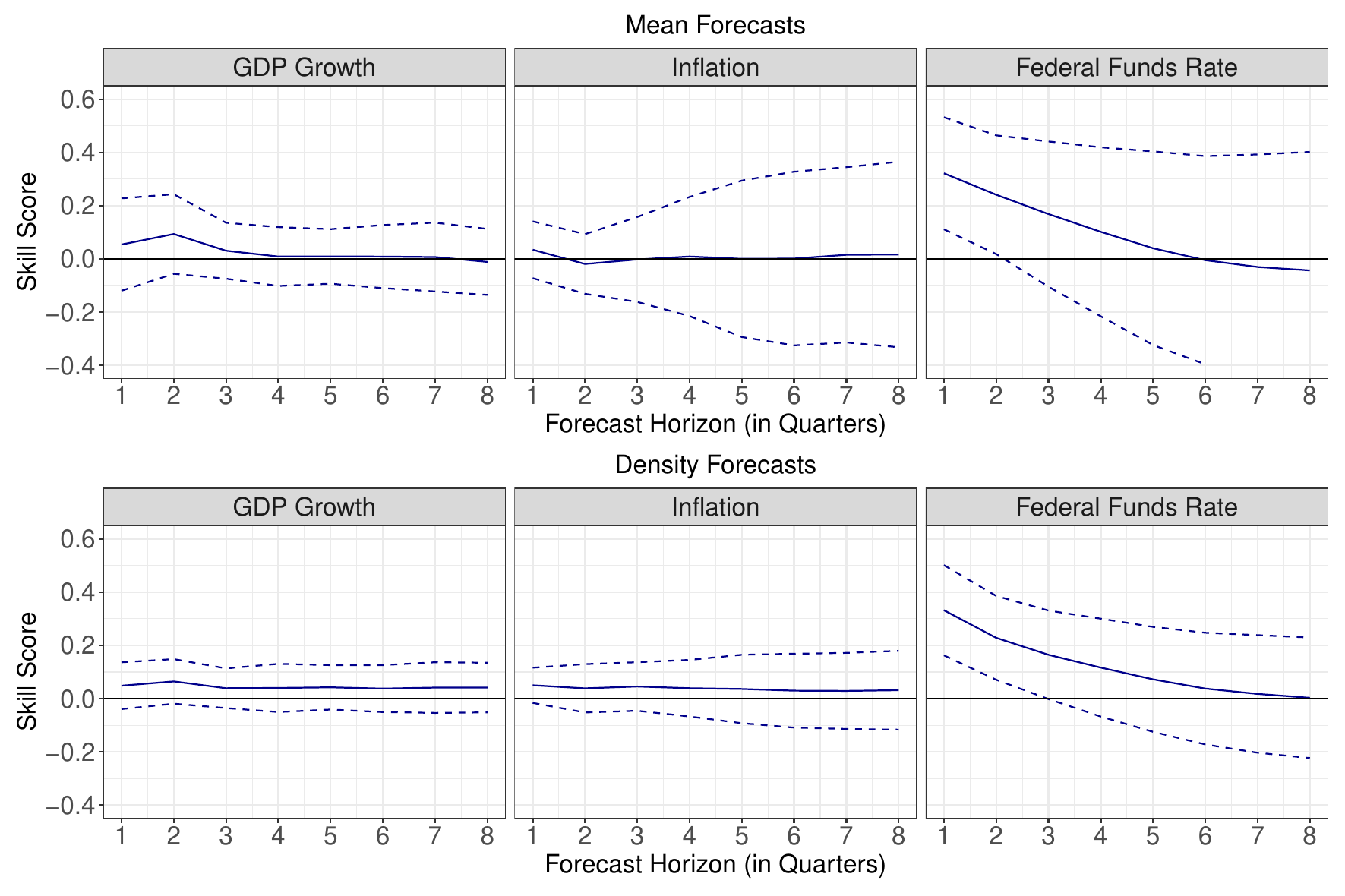}
\caption{Estimated skill scores of the BVAR with time-varying parameters and stochastic volatility with 90\% Bonferroni confidence bands fully disaggregated. The benchmark is a 
BVAR with constant parameters. The scores used are the squared error for mean forecasts and the CRPS for probabilistic forecasts. The block length used in the bootstrap is equal to $3 \lfloor N^{1/4} \rfloor =9$.}
\label{fig:bvar_fully_disaggr}
\end{figure} 

Finally, we analyze forecast performance disaggregated over both dimensions in Figure \ref{fig:bvar_fully_disaggr}. Now, the confidence bands are not only simultaneous within each graph but across all three graphs contained in the upper and lower panel, respectively. The mean forecasts for real GDP growth and the federal funds rate have a positive estimated skill score up to a forecast horizon of seven and five quarters. The estimate for inflation is also positive except for $h=2,3$. However, the confidence bands are rather wide in all cases. For example, they range from -12\% to 22.7\% around a point estimate of 5.4\% for real GDP growth at $h=1$. As before, the bands are widest for the federal funds rate. However, they are above zero for short forecast horizons suggesting gains in the forecasting performance due to time-varying parameters. At $h=1$, the lower and upper bounds are 11.1\% and 53.2\%, respectively, while the point estimate is 32.2\%. A similar trajectory of the estimated skill score emerges for density forecasts but estimation uncertainty is much lower. For example, the bands for real GDP growth at $h=1$ span from -3.93\% to 13.7\% around a point estimate of 4.86\%. Similar to before, the confidence bands for the federal funds are well above zero up to a forecast horizon of two quarters. At $h=1$, they range from 16.3\% to 50.2\% with an estimated skill score of 33.2\% suggesting a considerable improvement in the forecasting performance.

In Appendix \ref{app:additional_results_macro} we report additional details and results for this case study. In Table \ref{tab:bvar_width_bonf} and the surrounding text we analyze the average width of the Bonferroni bands presented in Figures \ref{fig:bvar_agg} to \ref{fig:bvar_fully_disaggr}, which confirms the impression that sampling uncertainty for mean forecast skill is considerably higher here than for density forecast skill. While we focus on the Bonferroni band in the case studies since it is our recommended choice, we also present the competitors in Figure \ref{fig:bvar_intro} in the introduction and in Figures \ref{fig:bvar_agg_comb} to \ref{fig:bvar_fully_disagg_comb}. Again, the pointwise bands assuming independent observations are much narrower than the pointwise bands constructed via the block bootstrap, which are in turn substantially narrower than the simultaneous bands. The sup-t bands are slightly narrower than the Bonferroni bands, which is also confirmed by the average widths in Tables \ref{tab:bvar_width_bonf} and \ref{tab:bvar_width_supt}. 

Summarizing our main findings, the time-varying parameter model can lead to substantial gains in forecasting performance. Performance gains are higher and sampling uncertainty is lower for density forecasts than for mean forecasts and for the federal funds rate compared to inflation and GDP growth. Furthermore, sampling uncertainty tends to grow with the forecast horizon.

\subsection{Data-driven versus Physics-based Probabilistic Weather Forecasting Models} \label{subsec:case_meteorology}

Physics-based numerical weather prediction (NWP) models have long been the standard for probabilistic weather forecasting, generating ensembles by running simulations with perturbed initial conditions. 
Recently, purely data-driven artificial intelligence (AI)-based models for weather forecasting have emerged as competitive alternatives \citep{ECMWF2023rise}, with noteworthy examples including FourCastNet \citep{Pathak2022}, Pangu-Weather \citep{BiEtAl2023}, and GraphCast \citep{GraphCast}.
Unlike NWP systems, data-driven models focus on predicting future weather states from initial conditions, leveraging statistical relationships learned from historical data. 
In addition to improved forecasts, key advantages of these models include significantly reduced computational costs, lower energy consumption, and faster forecast generation once the model has been trained.
Systematic and case study-based comparisons of physics- and AI-based weather models have been a recent focus of research interest, including the development of tailored benchmarking frameworks such as WeatherBench 2 \citep{WB2} and novel evaluation metrics \citep{gneiting_etal_2026_probabilistic}.

A major limitation of many current data-driven weather models is that they solely provide point forecasts. To address this limitation, \citet{buelte2026} propose and  compare different uncertainty quantification (UQ) methods to generate probabilistic weather forecasts from the deterministic Pangu-Weather model. Here, we follow their setup and compare two of the approaches. 
In the Gaussian noise perturbation (GNP) approach, an ensemble of initial conditions is generated by adding independent Gaussian noise at every grid point, and the deterministic Pangu-Weather model is started from those initial conditions. Further, we consider a post-hoc UQ method, where a statistical model is learned based on a training dataset of past forecasts and observations to turn the point forecasts into probabilistic ones by supplementing them with uncertainty information. Specifically, we utilize the EasyUQ method proposed by \citet{EasyUQ}, which is based on isotonic distributional regression \citep{IDR}. EasyUQ offers appealing theoretical properties and does not require any choices of tuning parameters. It is a straightforward and readily applicable UQ method to generate probabilistic forecasts. EasyUQ is applied separately for every grid point and location, see \citet{BuelteEtAl2024} for details.

In the following, we focus on probabilistic forecasts of temperature at 2m for the calendar year 2022 and choose the CRPS from \eqref{eq:CRPS} as the score. Probabilistic forecasts, and thus score values, for all models are available on a two-dimensional grid over Europe with a resolution of 0.25$^\circ$, which corresponds to a total of 35200 grid point locations, and for 31 forecast horizons in 6-hourly steps (i.e., up to a maximum lead time of 186 hours).
We use the operation ensemble forecasts of the European Centre for Medium-Range Weather Forecasts (ECMWF) as a physics-based benchmark model when computing skill scores. The ECMWF ensemble prediction system is based on the ECMWF Integrated Forecasting System and is a standard benchmark in weather forecasting research.

\begin{figure} 
\centering
\includegraphics[width = 0.9\textwidth]{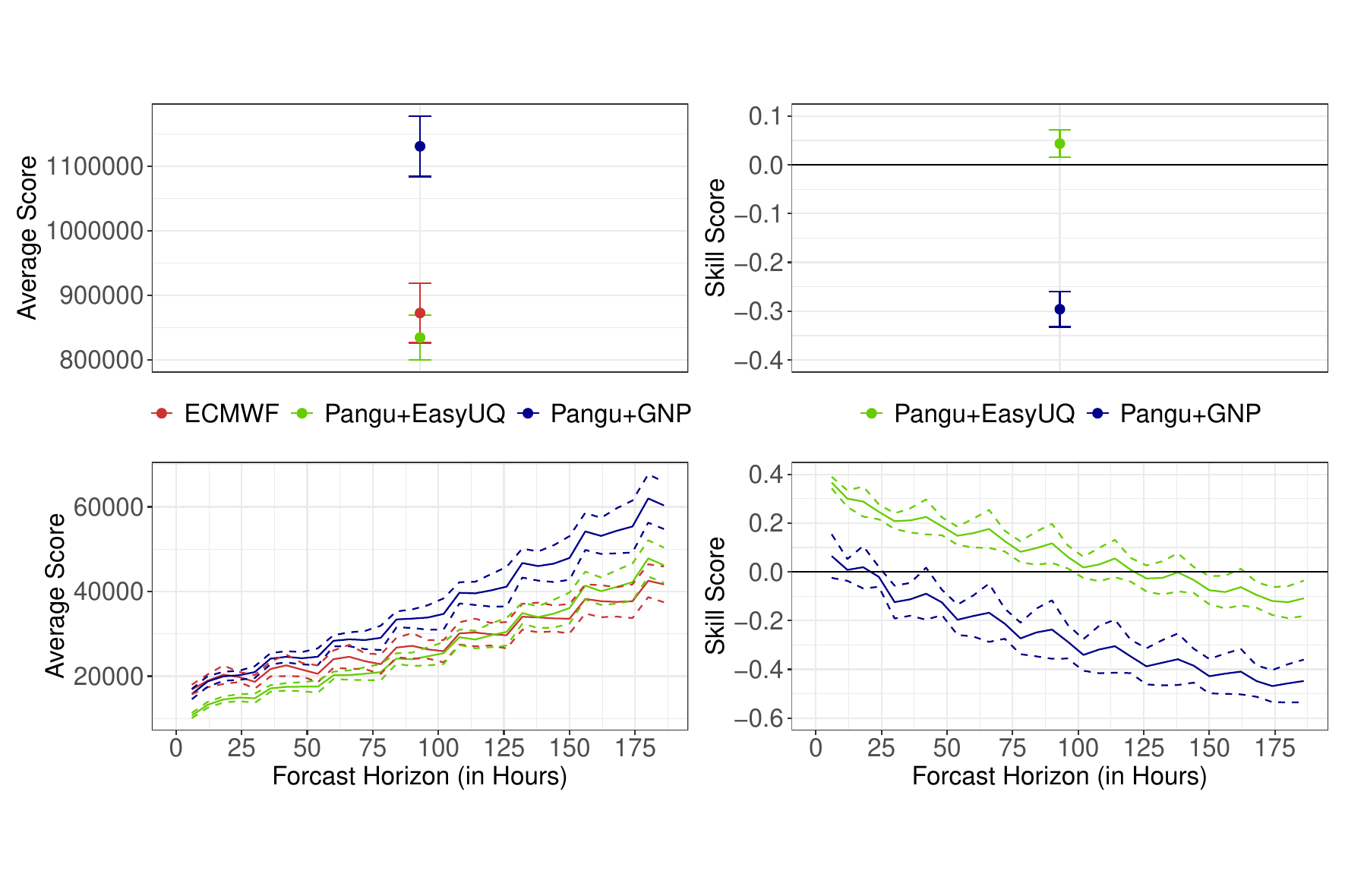}
\caption{Average CRPS (left) and the estimated CRPSS (right) with 90\% Bonferroni confidence bands fully aggregated (top) and disaggegrated over forecast horizons (bottom). The ECMWF ensemble forecasts serve as a benchmark for computing the CRPSS in the right column. The block length used in the bootstrap is equal to $3 \lfloor N^{1/4} \rfloor =12$.}
\label{fig:weather_skill_agg_and_h}
\end{figure}

Figure \ref{fig:weather_skill_agg_and_h} shows the average score, i.e., the estimated expected CRPS, and the estimated skill score, which we call CRPSS, over the test set, along with 90\% Bonferroni confidence bands. The Pangu-Weather + GNP approach produces the least skillful forecasts overall, and is only competitive with the ECMWF ensemble (without being significantly better) within the first 24 hours of lead time. The best overall forecasts are provided by the Pangu-Weather + EasyUQ method. Disaggregating over forecast horizons demonstrates that the data-driven model provides better forecasts for lead times of up to around 100h, whereas the physics-based ECMWF ensemble provides better forecasts for lead times exceeding around 150h, with no significant differences in between. We feel that the statement about the significance requires a word of caution regarding the interpretation of overlapping confidence bands: There is a common misconception that if two confidence intervals or bands of level $1-\alpha$ overlap, the hypothesis that the two corresponding parameter vectors are equal cannot be rejected at significance level $\alpha$. This is not true \citep{schenker2001}. What is true is that if the bands do not overlap, we can reject the null that the two parameter vectors are equal at level $\alpha$. However, this test is conservative, that is, the actual significance level is usually even smaller. Thus, if we wanted to compare both Pangu-Weather models with each other instead of the ECMWF ensemble, we usually would need to compute the respective skill scores and the corresponding confidence band. However, since the confidence bands do not overlap here, we can directly conclude that the forecast accuracies of both Pangu-Weather models are significantly different from each other.


Figure \ref{fig:weather_skill_l} in Appendix \ref{app:additional_results_weather} shows maps of the estimated CRPSS together with lower and upper bounds of 90\% Bonferroni bands, as well as maps of the width of the corresponding bands. With 35,200 grid points the score vector far exceeds the dimensions considered in our simulations, and the theoretical guarantees of Section~\ref{sec:confidence_bands} do not formally apply. 
We therefore treat Figure~\ref{fig:weather_skill_l} as an exploratory illustration of spatial uncertainty patterns rather than as inferentially valid bands.
For both methods, the largest improvements over the ECMWF ensemble predictions can be observed in coastal areas and over mountainous regions. There is substantial variability across locations. For example, even though the Pangu-Weather + GNP forecasts showed significantly worse performance than the ECMWF model, there are locations where the upper bound of the CRPSS bands is positive, even when aggregated over all forecast horizons. At a forecast horizon of 72h, the upper bound is positive for most land grid points. 
Similarly, the Pangu-Weather + EasyUQ forecasts are not significantly better than the ECMWF ensemble everywhere, with a substantial fraction of grid cells showing negative values for the lower bounds of the CRPSS bands. 
For both methods, the confidence bands are notably wider over sea grid points than over land.
In summary, the results presented here nicely complement those from \citet{buelte2026} by 
quantifying not only where data-driven models outperform the physics-based benchmark, but also  where sampling uncertainty remains too large 
to draw conclusions.

\section{Conclusion} \label{sec:conclusion}

We develop simultaneous confidence bands for vectors of forecast accuracy measures in order to quantify and communicate sampling uncertainty in forecast comparisons. These bands are versatile, allowing for a variety of multidimensional settings, and may be applied to any type of elicitable  forecast. 
Our approach allows for joint inference, for example, over multiple forecast horizons, variables, methods, and/or locations, while preventing multiple comparison problems. Additionally, our confidence bands improve interpretability since they are not limited to (differences in) expected scores, but can be applied to skill scores (or relative accuracy), which measure the relative improvement in forecast accuracy. The bands are straightforward to implement via a bootstrap algorithm. 
We consider sup-t bands with exact asymptotic coverage and Bonferroni bands, which are asymptotically conservative, but offer better finite-sample reliability under temporal dependence and are, therefore, our recommended default.

We believe that our confidence bands provide a substantial step forward for quantifying and communicating sampling uncertainty in forecast comparisons. They add simultaneity, interpretability, and visualizability compared to conventional pairwise hypotheses tests of predictive accuracy. They improve over pointwise confidence bands using an iid bootstrap by simultaneity and validity under temporal dependence. Nevertheless, there is room for future research and improvement. First, variance estimation under temporal dependence is a crucial topic in general and specifically in the forecast comparison setting. Plug-in confidence bands based on a HAC variance estimator would be an alternative to our bootstrap algorithm. However, as discussed in the paper, they are not expected to cure the downward bias leading to undercoverage \citep{fitzenberger1997}. 
Second, the theoretical guarantees established here treat the dimension $J$ of the skill score vector as fixed while the sample size $N$ grows. Extending the asymptotic theory to allow $J \to \infty$ with $N$ would be of interest for applications involving large spatial grids, such as the weather forecasting case study, where the number of locations exceeds the sample size. 
Third, data-driven block length selection for the moving block bootstrap remains an open problem in the multivariate setting. While our simulation results support $l = 3\lfloor N^{1/4} \rfloor$ as a practical default, a principled automatic selection procedure would strengthen the implementation. 
To facilitate the application of our confidence bands in practice, we provide a software implementation in the form of an \texttt{R} package (see the link at the end of the introduction). A \texttt{Python} implementation will follow. 

\section*{Acknowledgments}

We thank Daniel Gutknecht, Fabian Krüger, Thomas Muschinski and conference participants at the MathSEE Symposium 2023 at Karlsruhe Institute of Technology, the 17th International Conference on Computational and Financial Econometrics 2023 at HTW Berlin, the IWH Workshop on Forecasting in Times of Structural Change and Uncertainty 2024 at Halle Institute for Economic Research, the 44th International Symposium on Forecasting 2024 in Dijon, the 2024 Annual Conference of the German Economic Association at TU Berlin and the DAGStat conference 2025 at HU Berlin as well as seminar participants at Karlsruhe Institute of Technology, Goethe University Frankfurt and Bielefeld University for helpful comments. Sebastian Lerch and Marc Pohle gratefully acknowledge support by the Vector Stiftung through the Young Investigator Group ``Artificial Intelligence for Probabilistic Weather Forecasting''. Marc-Oliver Pohle and Sebastian Lerch thank the Klaus Tschira Foundation for infrastructural support at the Heidelberg Institute for Theoretical Studies (HITS).


\bibliography{literature_skill_scores_uncertainty}
			\addcontentsline{toc}{section}{\refname}

\newpage

\appendix
\appendixpage

\section{Assumptions and Properties of Skill Scores}\label{app:skill_scores}

In this section, we discuss the few assumptions underlying the use of skill scores (see \cite{fissler2023} for a related discussion) and upper bounds for them. For both, the notion of a perfect forecast \citep{pohle2020} is useful: The perfect forecast is the infeasible forecast that a forecaster with perfect foresight would issue, that is, $X^{per} := T(F_{Y|Y})$. The perfect mean and quantile forecasts are just equal to the outcome, perfect probabilistic forecasts have mass 1 at the outcome in the discrete case and are equal to the Dirac delta function at the outcome in the continuous case. For all scoring functions discussed in Section \ref{subsec:scoringfunctions}, we have $\E \left[ s \left( X^{per},Y \right) \right]=0$, which is also the minimal expected score as all those scores are nonnegative. Thus, Assumption \ref{ass:positive_expected_score} is usually innocent, as it will hold for those scores assuming that forecasters do not have perfect foresight. Further, an implicit assumption needed for the interpretation of skill scores as relative changes in forecast accuracy is that the expected score has a meaningful origin, which is also ensured by $\E \left[ s(X^{per},Y) \right] = 0$. If for a scoring function $s$ the expected score was not 0, it could be transformed into a scoring function $s^*$ fulfilling this requirement by subtracting the expected score of the perfect forecast from it. The only case in which this does not work and also Assumption \ref{ass:positive_expected_score} is not fulfilled is  when the expected score of the perfect forecast does not exist. This is the case with the log score, which is a popular score for density forecasts. Thus, with the log score in the case of density forecasts, skill scores should not be used.

The skill score is bounded from above by 1 and takes the value 1 for the perfect forecast, 
$$SS_{s} \left( X,X^{ben},Y \right) \leq 1 = SS_{s} \left( X^{per},X^{ben},Y \right).$$
As discussed above, this upper bound is not attained. A sharper upper bound is the skill score of the optimal or ideal forecast, $X^{opt}:=T(Y_{t}|\mathcal{F}_{t-h})$, that is, the correct or best forecast that the forecaster can issue given their information set $\mathcal{F}_{t-h}$:
$$SS_{s} \left( X,X^{ben},Y \right) \leq SS_{s} \left( X^{opt},X^{ben},Y \right).$$
The value of this upper bound is usually unknown, lies in [0,1] and depends on the forecasting problem at hand, in particular on how ``predictable'' the variable of interest is, that is, how valuable the information contained in the forecaster's information set is in the sense of how close the optimal forecast comes to the perfect forecast in terms of forecast accuracy (see \citet{pohle2026} for a more detailed discussion).

\section{Sufficient Conditions for Assumptions \ref{ass:CLT_assumption}, \ref{ass:consistent_LRV_estimator} and \ref{ass:bootstrap_validity}} \label{app:assumptions}
	
We provide a set of two sufficient conditions on the stochastic process of scores $\{ \bs{S}_t \}$, which imply our assumptions \ref{ass:CLT_assumption} and \ref{ass:consistent_LRV_estimator} and that the moving block bootstrap fulfils Assumption \ref{ass:bootstrap_validity}. 
\begin{assumption} \label{ass:sufficient_conditions} 
	It holds for some $r>2$ that $\E[ \lVert\bs{S}_t \rVert_2^r ] < \infty$ for all $t \in \mathbb{Z}$ and that $\{ \bs{S}_t \}_{t \in \mathbb{Z}}$ is $\alpha$-mixing with mixing coefficient $\alpha$ of size $-\frac{r}{r-2}$.
\end{assumption}
The first condition is a moment condition and the second is $\alpha$-mixing of a certain size, which is a form of asymptotic independence. We say that the mixing coefficient $\alpha$ is of size $-a$ if $\alpha(m) = \mathcal{O} (m^{-a-\varepsilon})$ for some $\varepsilon > 0$. The size of the mixing coefficient controls how strong the dependence is, that is, the larger $a$, the weaker the dependence gets;  see \citet[Chapter 3]{white2014} for a definition and a discussion of $\alpha$-mixing. The constant $r$ in Assumption \ref{ass:sufficient_conditions} balances the strength of the mixing condition and the moment condition. The larger $r$, the stronger the dependence and the stronger the moment condition needs to be. Alternative sets of assumptions, which balance moment conditions, heterogeneity and dependence over time in a different way, are available in the time series and bootstrap literature. The following lemma states that Assumption \ref{ass:sufficient_conditions} (additionally assuming that there is variability in the average scores akin to Assumption \ref{ass:diagonal_nonzero} for the skill scores) is sufficient for our multivariate Diebold-Mariano assumptions \ref{ass:CLT_assumption} and \ref{ass:consistent_LRV_estimator} and that, under this assumption and the condition on the block length, the moving block bootstrap is valid for the distribution of $\bs{\overline{S}}$. 
	
\begin{lemma} \label{lemma:sufficient_conditions}
	\begin{enumerate}
		\item[]
		\item Under Assumption \ref{ass:sufficient_conditions} and $\bs{\Omega}_{pp} > 0$, $p=1,...,P$, assumptions \ref{ass:CLT_assumption} and \ref{ass:consistent_LRV_estimator} hold.
		\item Let Assumption \ref{ass:sufficient_conditions} hold and assume that $ \bs{\Omega}$ is nonsingular and that the block length $l$ fulfils the condition $\frac l N + \frac 1 l \rightarrow 0$ as $N \rightarrow \infty$. Then the moving block bootstrap fulfils Assumption \ref{ass:bootstrap_validity}. 
	\end{enumerate}
\end{lemma}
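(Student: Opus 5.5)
The two parts will be handled separately. Each reduces to a classical limit theorem for strongly mixing sequences. For the multivariate statements I will use the Cramér--Wold device throughout. The plan is to cite standard results: \citet{white2014} for the central limit theorem and for consistency of HAC estimators, and \citet{lahiri2003} for the moving block bootstrap. The task is to check that Assumption \ref{ass:sufficient_conditions} meets their hypotheses.

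\emph{Part 1.}
\begin{itemize}
\item Fix $\bs\lambda\in\mathbb{R}^P$ and consider the scalar process $\bs\lambda'\bs S_t$. It is $\alpha$-mixing of size $-\frac{r}{r-2}$, since measurable functions of mixing processes inherit the mixing coefficients. It has $r$-th moments bounded by $\lVert\bs\lambda\rVert_2^r\E\lVert\bs S_t\rVert_2^r$.
\item Finiteness of $\E\lVert\bs S_t\rVert_2^r$ for all $t$ alone does not give uniform boundedness. I would read the assumption as $\sup_t\E\lVert\bs S_t\rVert_2^r<\infty$, or as covariance stationarity, which the definition of $\bs\Omega$ through $\bs\Gamma(h)$ presupposes.
\item Davydov's covariance inequality gives $\lVert\bs\Gamma(h)\rVert\le C\,\alpha(h)^{1-2/r}$. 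The size condition makes $\alpha(h)^{1-2/r}=\mathcal O(h^{-1-\delta})$ for some $\delta>0$. Hence $\sum_h\bs\Gamma(h)$ converges absolutely and $\bs\Omega=\lim\Var(\sqrt N\overline{\bs S})$ exists.
\item Whenever $\bs\lambda'\bs\Omega\bs\lambda>0$, the mixing CLT (White's Theorem 5.20 / Wooldridge--White) gives $\sqrt N\bs\lambda'(\overline{\bs S}-\E\bs S_t)\to\mathcal N(0,\bs\lambda'\bs\Omega\bs\lambda)$.
\item When $\bs\lambda'\bs\Omega\bs\lambda=0$, the variance of $\sqrt N\bs\lambda'(\overline{\bs S}-\E\bs S_t)$ tends to zero, so the term converges to $0$ in $L^2$. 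This agrees with the degenerate normal limit.
\item Cramér--Wold then yields Assumption \ref{ass:CLT_assumption}. The condition $\bs\Omega_{pp}>0$ is only needed for the coordinatewise version of the cited theorem.
\end{itemize}

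For Assumption \ref{ass:consistent_LRV_estimator}, I take $\widehat{\bs\Omega}$ to be the Newey--West / Bartlett kernel estimator with bandwidth $m_N\to\infty$ and $m_N=o(N^{1/2})$. It suffices to show consistency for every entry, and by polarization for every quadratic form $\bs\lambda'\widehat{\bs\Omega}\bs\lambda$. Each such quadratic form is the scalar HAC estimator applied to $\bs\lambda'\bs S_t$. Consistency of that estimator under exactly these moment and mixing conditions is Newey--West (1987) / White (2014, Thm 6.20). This step requires the most care. The standard theorem uses $r>2$ moments together with a size of $-\frac{r}{r-2}$, in some versions with $r\ge4$. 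If $r\in(2,4)$ is not covered, I would fall back on the de Jong--Davidson (2000) result, which needs only the $L^r$-NED/mixing size used here.

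\emph{Part 2.} Lahiri (2003, Thm 3.1/3.2) states that for a stationary $\alpha$-mixing process with $\E\lVert\bs S_t\rVert^{2+\delta}<\infty$, $\sum\alpha(h)^{\delta/(2+\delta)}<\infty$, and $l^{-1}+l/N\to0$, the following hold:
\begin{itemize}
\item $\Var^*(\sqrt N\,\overline{\bs S}^*)\overset{p}{\to}\bs\Omega$;
\item $\sup_{\bs x}\lvert P^*(\sqrt N(\overline{\bs S}^*-\E^*\overline{\bs S}^*)\le\bs x)-P(\sqrt N(\overline{\bs S}-\E\bs S_t)\le\bs x)\rvert\overset{p}{\to}0$.
\end{itemize}
I verify these hypotheses with $\delta=r-2$:
\begin{itemize}
\item The size condition gives $\alpha(h)^{(r-2)/r}=\mathcal O(h^{-1-\epsilon'})$, which is summable.
\item Nonsingularity of $\bs\Omega$ makes the Gaussian limit absolutely continuous. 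Pólya's theorem then upgrades the weak convergence of both distributions to the same limit into uniform convergence of the CDFs, as Assumption \ref{ass:bootstrap_validity} requires.
\end{itemize}
The remaining subtlety is centering: the moving block bootstrap mean $\E^*\overline{\bs S}^*$ differs from $\overline{\bs S}$ by edge effects of order $O_p(l/N)$. After multiplication by $\sqrt N$ this is $O_p(l/\sqrt N)$. That is negligible when $l=o(N^{1/2})$, and otherwise is handled by centering at $\E^*\overline{\bs S}^*$, as in Algorithm \ref{alg:bootstrap} implicitly through the standardization.
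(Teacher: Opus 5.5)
Your proposal is correct and follows the same route as the paper's proof. The paper likewise applies White's Theorem 5.20 to linear combinations (mixing is inherited via White's Theorem 3.49) together with the Cram\'er--Wold device, invokes White's Theorem 6.20 for the long-run variance estimator, and cites Lahiri's Theorem 3.2 for the moving block bootstrap; your extra checks (directions with $\bs\lambda'\bs\Omega\bs\lambda=0$, which $\bs\Omega_{pp}>0$ does not exclude, uniform $r$-th moments, and the moment order needed for HAC consistency) make explicit what that two-line proof leaves implicit.

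One aside: your centering caveat is unnecessary. $\E^*\overline{\bs S}^*-\overline{\bs S}$ is a weighted sum of centered scores, with weight deviations $O(1/N)$ on $O(l)$ edge observations and $O(l/N^2)$ elsewhere, so $\sqrt N(\E^*\overline{\bs S}^*-\overline{\bs S})=O_p(\sqrt{l/N})=o_p(1)$ under $l/N\to0$ alone. In any case, Algorithm \ref{alg:bootstrap} centers at $\widehat{\bs{SS}}$, not at $\bs g(\E^*\overline{\bs S}^*)$, so it does not implement the fallback you describe.
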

\begin{proof}{See Appendix \ref{app:proofs}.}
\end{proof}

\section{Asymptotic Width and Coverage Comparisons}\label{sec:asymptotic_width}

A natural concern is whether the bands become uninformatively wide as the dimension $J$ of the skill score vector grows. 
In fact, the width grows only at rate $\log(J)$ \citep[Lemma~4]{montielolea2019}, which is slow enough to remain practically useful even for large $J$.
Another interesting question is the influence of the dependence between the elements of the vector on the width of the bands. Intuitively, stronger dependence (when talking about positive dependence, which is to be expected between related skill scores) should result in narrower bands as the maximum of the correlated normal variables from \eqref{eq:quantile_maximum_distribution} should be less extreme. To analyze both questions in more detail, we perform an asymptotic analysis, i.e.\ assume that $N=\infty$, so that we can use the population covariance matrix to calculate the equicoordinate quantiles and the formula for the asymptotic coverage probability from \eqref{eq:quantile_maximum_distribution}. Equicoordinate quantiles can be easily calculated (essentially requiring the ability to evaluate the CDF of a multivariate normal distribution). 
We use the R package \texttt{mvtnorm} \citep{genz2025} for this task. We plot the relative widths of different confidence bands of level $1-\alpha=0.9$ in relation to $J$. The relative width of two confidence bands, say $\bs{B}^{1-\alpha}_{c_1}$ and $\bs{B}^{1-\alpha}_{c_2}$ as defined in \eqref{eq:c_confidence_band}, is equal to $\frac{c_1}{c_2}$. 

\begin{figure} 
\centering
\includegraphics[scale = 0.7]{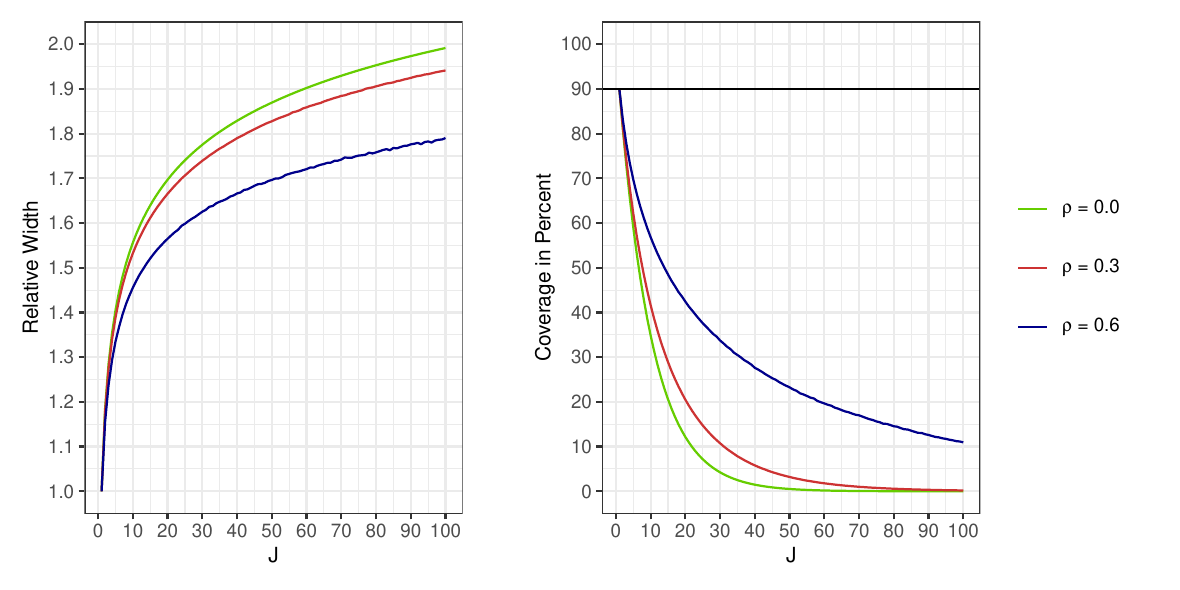}
\caption{Width of simultaneous sup-t confidence bands relative to pointwise bands (left) and asymptotic coverage of pointwise bands (right) for different dimensions $J$ and levels of dependence $\rho$ of the skill score vector. Nominal level is 90\%. }
\label{fig:width_and_coverage}
\end{figure}

\begin{figure} 
\centering
\includegraphics[scale = 0.7]{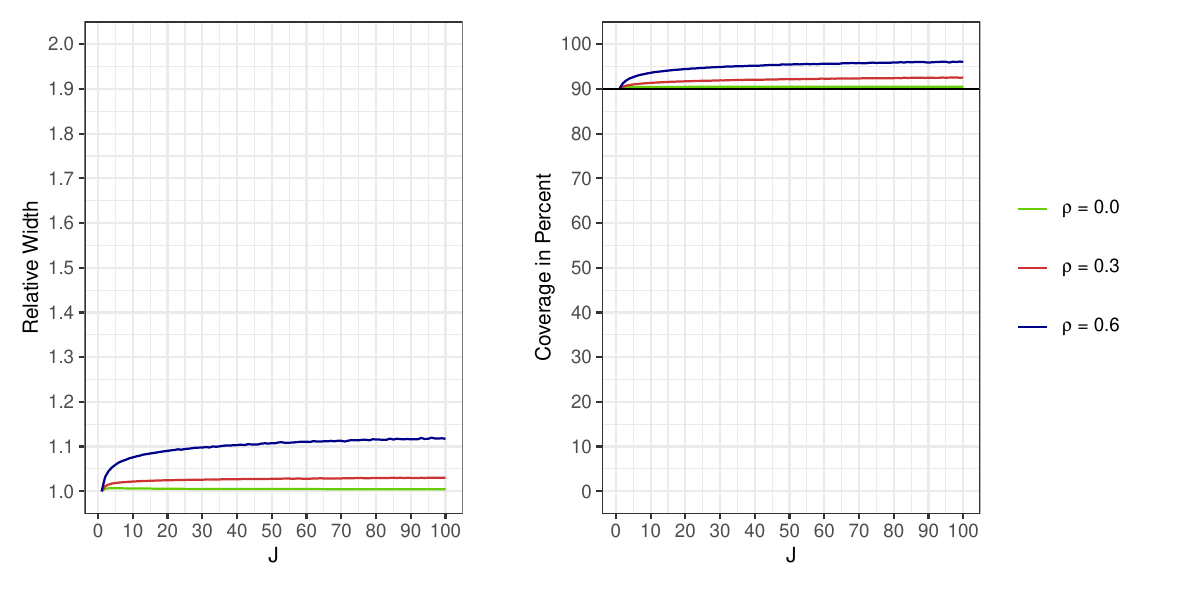}
\caption{Width of simultaneous Bonferroni confidence bands relative to sup-t bands (left) and asymptotic coverage of Bonferroni bands (right) for different dimensions $J$ and levels of dependence $\rho$ of the skill score vector. Nominal level is 90\%. }
\label{fig:width_and_coverage_bonf}
\end{figure}

To analyze the influence of the dependence between skill scores, we assume an equicorrelation structure for the covariance matrix $\bs \Sigma$, that is, we assume that all off-diagonal elements of $\bs \Sigma$ are equal to $\rho$, while the diagonal elements are all equal to 1. The parameter $\rho$ then represents the correlations, for which we choose $\rho\in\{0, 0.3, 0.6\}$. In the left panel of Figure \ref{fig:width_and_coverage}, we plot the asymptotic widths of our sup-t confidence bands relative to the pointwise bands, i.e., $q_{\bs \Sigma, 1-\alpha}/z_{1-\alpha/2} $. As expected, the relative width of the sup-t bands grows logarithmically for all selected levels of dependence. The impact of raising the dimension of the skill score vector on the width of the bands becomes smaller the larger the dimension already is. For example, expanding $J$ from $J = 1$ to $J = 2$ under independence increases the relative width by 18.5 percentage points. However, from $J = 21$ onward the marginal impact on the width is less than one percentage point. Positive dependence between the skill scores reduces the width of the bands. This effect becomes more pronounced with increasing $J$. 

Next, we compare the sup-t and pointwise bands in terms of asymptotic coverage. By construction, the sup-t bands have an exact asymptotic coverage of $1-\alpha$, irrespective of the dimension $J$ and the dependence of the skill score vector. 
In contrast, the coverage of the pointwise bands is only $1-\alpha$ when the skill scores are asymptotically perfectly correlated. 
The weaker the (positive) dependence and the larger $J$, the lower the asymptotic coverage. For example, the coverage is equal to $(1-\alpha)^J$ for independent skill scores. 
The right side of Figure \ref{fig:width_and_coverage} shows the results for the equicorrelation structure (calculated by plugging $z_{1-\frac{\alpha}{2}}$ for $c$ into the coverage formula from \eqref{eq:quantile_maximum_distribution}). As expected, the pointwise confidence bands only have an asymptotic coverage of 90\% if $J=1$. As soon as the dimension of the skill score vector grows, the asymptotic coverage drops drastically. For all three dependence levels, it is already below 80\% for $J = 3$. The coverage is higher for stronger positive dependence. However, even for $ \rho = 0.6$, the coverage is below 60\% for a skill score vector of dimension $J=10$, which illustrates the possible extent of the multiple comparison problem for the pointwise bands.

In Figure \ref{fig:width_and_coverage_bonf}, we compare both simultaneous bands in terms of asymptotic width and coverage. On the left, we plot the width of the Bonferroni band relative to sup-t, that is, $z_{1-\alpha/(2J)}$ over $q_{\bs \Sigma, 1-\alpha} $. On the right, we plug $z_{1-\alpha/(2J)}$ into the coverage formula from \eqref{eq:quantile_maximum_distribution} to calculate the asymptotic coverage of the Bonferroni band. For $\rho = 0$, the sup-t and Bonferroni bands are virtually identical. Also for weak levels of dependence, e.g., $\rho = 0.3$, the bands are very similar. The Bonferroni band is about 3\% wider than the sup-t band for most values of $J$ and has an asymptotic coverage of about 92\%. With increasing dependence, the difference becomes more notable. For $\rho = 0.6$, the Bonferroni band is more than 10\% wider than the sup-t band while its coverage is above 95\%  for most depicted values of $J$. As before, increasing the dimension has a diminishing effect on the relative width and the coverage, the higher the level of $J$.



\section{Proofs} \label{app:proofs}

The following lemma is central for understanding simultaneous confidence bands and shows up in equation \eqref{eq:quantile_maximum_distribution} in the main text. Furthermore, it is central to proving the validity of plug-in versions of the bands, i.e., bands where the covariance matrix is estimated e.g.\ by a HAC estimator and then plugged in for standard deviations and to compute empirical equicoordinate quantiles. In our proof of Proposition \ref{prop:validity_bootstrap_bands}, we need a slighly more general version of this lemma, see Lemma 1 in the appendix of \citet{montielolea2019}.

\begin{lemma} \label{lemma:max_representation}
	Under assumptions \ref{ass:positive_expected_score} to \ref{ass:diagonal_nonzero} it holds for $\widehat{\bs B}^{1-\alpha}_{\widehat{c}} (\mathbf{SS})$ from \eqref{eq:c_confidence_band} that
	$$ \lim_{N \to \infty} P \left( \bs{SS} \in \widehat{\bs B}^{1-\alpha}_{\widehat{c}}(\mathbf{SS}) \right) 
	= P \left( \max_{j = 1, ..., J} |   \bs \Sigma_{jj}^{-1/2} V_j| \leq c \right),$$
	where $\bs V = \left( V_1, \cdots , V_J \right)' \sim \mathcal{N}(\bs 0,\bs \Sigma)$ with $\bs \Sigma$ being the covariance matrix from \eqref{eq:skill_scores_normal}, $\bs \Sigma_{jj}$ its $j$-th diagonal element, and $c$ is defined as the probability limit of $\widehat{c}$, $\widehat{c} \xrightarrow[]{p} c$. 
\end{lemma}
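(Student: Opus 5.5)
We rewrite the coverage event through the maximal studentized deviation and then pass to the limit. By the definition of the band in \eqref{eq:c_confidence_band},
\[
P\left(\bs{SS}\in\widehat{\bs B}^{1-\alpha}_{\widehat c}(\mathbf{SS})\right)=P\left(\max_{j=1,\dots,J}\left|\frac{\widehat{SS}_j-SS_j}{\widehat\sigma_j}\right|\le \widehat c\right).
\]
Implicitly, $\widehat\sigma_j$ is a consistent estimator of $\sigma_j=\sqrt{\bs\Sigma_{jj}/N}$ in the ratio sense, $\sqrt N\,\widehat\sigma_j\xrightarrow{p}\bs\Sigma_{jj}^{1/2}$. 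We therefore write the $j$-th t-statistic as
\[
T_j=\frac{\sqrt N(\widehat{SS}_j-SS_j)}{\sqrt N\,\widehat\sigma_j}.
\]

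First, Assumption \ref{ass:positive_expected_score} makes $\bs g$ continuously differentiable at $\E[\mathbf S_t]$, since all benchmark expected scores are nonzero. Together with Assumption \ref{ass:CLT_assumption}, the delta method gives \eqref{eq:skill_scores_normal}, that is, $\sqrt N(\widehat{\bs{SS}}-\bs{SS})\xrightarrow{d}\bs V\sim\mathcal N(\bs 0,\bs\Sigma)$. Next, Assumption \ref{ass:consistent_LRV_estimator} and the continuous mapping theorem allow us to estimate $\bs\Sigma$ by plugging in $\widehat{\bs\Omega}$ and $\nabla\bs g(\overline{\mathbf S})$; this is the intended construction of $\widehat\sigma_j$. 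Assumption \ref{ass:diagonal_nonzero} guarantees $\bs\Sigma_{jj}>0$, so the map $x\mapsto 1/x$ is continuous at the limit.

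By Slutsky's theorem, the joint vector $\bigl(\sqrt N(\widehat{\bs{SS}}-\bs{SS}),\,(\sqrt N\widehat\sigma_j)_j,\,\widehat c\bigr)$ converges in distribution to $\bigl(\bs V,(\bs\Sigma_{jj}^{1/2})_j,c\bigr)$, because the last two components have constant limits. The map
\[
(\bs v,\bs s,k)\mapsto \max_j |v_j|/s_j-k
\]
is continuous on $\{s_j>0\}$. Hence $\max_j|T_j|-\widehat c\xrightarrow{d}W-c$, where $W=\max_j|\bs\Sigma_{jj}^{-1/2}V_j|$.

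The remaining step, which we expect to be the only real obstacle, is passing from convergence in distribution to convergence of $P(\max_j|T_j|-\widehat c\le 0)$. This requires $0$ to be a continuity point of the limit law, i.e.\ $P(W=c)=0$. We argue this as follows. Each $\bs\Sigma_{jj}^{-1/2}V_j$ is standard normal and hence has a continuous distribution. The event $\{W=c\}$ is contained in $\bigcup_j\{|\bs\Sigma_{jj}^{-1/2}V_j|=c\}$, and each of these events has probability zero. (The case $c=0$ is also fine, since then $P(W\le 0)=P(\text{all } V_j=0)=0$ and the boundary has probability zero as well.) Therefore $W-c$ has no atom at $0$, and the portmanteau theorem yields
\[
\lim_{N\to\infty}P\left(\bs{SS}\in\widehat{\bs B}^{1-\alpha}_{\widehat c}(\mathbf{SS})\right)=P(W\le c),
\]
which is the claim.
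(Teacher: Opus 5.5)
Your argument is correct, but it takes a different route from the paper. The paper gives no direct argument. It checks that Assumptions \ref{ass:positive_expected_score}--\ref{ass:diagonal_nonzero} imply the conditions of Assumption 1 in the appendix of \citet{montielolea2019}, with $\E[\mathbf S_t]$ and $\overline{\mathbf S}$ in the roles of their $\mu$ and $\widehat\mu$, and using that $\bs g$ is continuously differentiable and $\bs\Omega$ is symmetric positive semidefinite. It then simply invokes their Lemma 1.

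You instead prove the statement from first principles, via the delta method, ratio consistency $\sqrt N\,\widehat\sigma_j\xrightarrow{p}\bs\Sigma_{jj}^{1/2}$, joint Slutsky, and the continuous mapping theorem on $\{s_j>0\}$. You finish with the portmanteau theorem, using that $\max_j|\bs\Sigma_{jj}^{-1/2}V_j|$ has no atom at $c$ because each standardized coordinate is a nondegenerate normal.

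What your route buys is transparency. It shows exactly where each assumption enters; for instance, Assumption \ref{ass:diagonal_nonzero} both makes the division continuous at the limit and rules out the atom. It also makes explicit that ``consistency'' of $\widehat\sigma_j$ must be read in the ratio sense, since $\sigma_j\to 0$; the paper leaves this implicit.

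What the paper's route buys is brevity and reuse. The same verification of the conditions of \citet{montielolea2019} feeds directly into their more general Lemma 1 and Proposition 3. That is what Proposition \ref{prop:validity_bootstrap_bands} needs, since there $\widehat\sigma_j^*$ and $\widehat c^*$ come from the bootstrap.

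One small detail you could add: the identity $\{\bs{SS}\in\widehat{\bs B}^{1-\alpha}_{\widehat c}(\mathbf{SS})\}=\{\max_j|T_j|\le\widehat c\}$ requires $\widehat\sigma_j>0$. This holds with probability tending to one under your ratio consistency, so it does not affect the limit.
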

\begin{proof}
	We show that our assumptions \ref{ass:positive_expected_score}, \ref{ass:CLT_assumption}, \ref{ass:consistent_LRV_estimator} and \ref{ass:diagonal_nonzero} imply the conditions formulated in Assumption 1 in the appendix of \citet{montielolea2019}, where our $\E[ \bs{S}_t]$ and $\bs{\overline{S}}$ take the roles of their parameter vector $\mu$ and its consistent estimator $\widehat{\mu}$. Their Lemma 1 then implies this lemma. Part (i) of their Assumption 1 is implied by our Assumption \ref{ass:positive_expected_score}. Part (ii) is implied by our Assumption \ref{ass:CLT_assumption} and the fact that the long-run variance is symmetric and positive semidefinite. (iii) is implied by our Assumption \ref{ass:consistent_LRV_estimator}. (iv) holds as the function $\bs{ g}$ from \eqref{eq:g_function} is continuously differentiable by the definition of the skill score, see \eqref{eq:skill_score_shorthand}. (v) is implied by our Assumption \ref{ass:diagonal_nonzero}.
	
\end{proof}

\begin{proof}[Proof of Proposition \ref{prop:validity_bootstrap_bands}]
	We start with the start by proving exact asymptotic coverage of the sup-t band. As established in the proof of Lemma \ref{lemma:max_representation}, our assumptions \ref{ass:positive_expected_score}, \ref{ass:CLT_assumption}, \ref{ass:consistent_LRV_estimator} and \ref{ass:diagonal_nonzero} imply the conditions formulated in Assumption 1 in the appendix of \citet{montielolea2019}, which imply their Lemma 1. Together with Assumption \ref{ass:bootstrap_validity} and choosing (b) in Step 5 of Algorithm \ref{alg:bootstrap}, their Proposition 3 holds, which together with their Lemma 1 implies the claim. 
    
    For the Bonferroni band, noting that $ z_{1-\alpha/(2J)} \geq q_{\bs \Sigma,1-\alpha}$, see, e.g., Lemma 1 in \citet{hassler2025}, and again invoking Lemma 1 in the appendix of \citet{montielolea2019} yields the claim.
\end{proof}


\begin{proof}[Proof of Lemma \ref{lemma:sufficient_conditions}]
	\begin{enumerate}
		\item[]
		\item The univariate central limit theorem in \citet[Theorem 5.20]{white2014} together with the Cram\'er-Wold device (noting that linear combinations of $\alpha$-mixing sequences are $\alpha$-mixing of the same rate by \citet[Theorem 3.49]{white2014}) yields the multivariate central limit theorem for the scores, that is, Assumption \ref{ass:CLT_assumption}. \citet[Theorem 6.20]{white2014} directly ensures Assumption \ref{ass:consistent_LRV_estimator}.
		\item This follows directly from \citet[Theorem 3.2]{lahiri2003}.
	\end{enumerate}
\end{proof}

\section{Tabulated Simulation Results} \label{app:tabulated_simulation_results}


\begin{table}[H]
\centering
\caption{\label{tab:cover_ss_indep}Coverage of 90\% confidence bands for skill scores when $ a = 0$ (no temporal dependence). 
  Block length is $ l = 1 $ (iid) or $ l = 3 \lfloor N^{1/4} \rfloor$ (block). The $P$-dimensional vector of scores $\mathbf{S}_t$ is drawn from a VAR(1) model.
The parameter $a$ governs the strength of temporal dependence while $v$ controls the cross-correlation.
The sample size is denoted by $N$.
Figures closest to the nominal level are printed in bold.}
\centering
\resizebox{\ifdim\width>\linewidth\linewidth\else\width\fi}{!}{
\begin{tabular}[t]{llllllllll}
\toprule
\multicolumn{1}{c}{ } & \multicolumn{1}{c}{ } & \multicolumn{1}{c}{ } & \multicolumn{1}{c}{ } & \multicolumn{3}{c}{$N = 100$} & \multicolumn{3}{c}{$N = 400$} \\
\cmidrule(l{3pt}r{3pt}){5-7} \cmidrule(l{3pt}r{3pt}){8-10}
a & v & boot & type & $P = 2$ & $P = 5$ & $P = 25$ & $P = 2$ & $P = 5$ & $P = 25$\\
\midrule
0 & 0 & block & Sup-t & 0.872 & 0.832 & 0.756 & \textbf{0.904} & 0.879 & 0.848\\
 
0 & 0 & block & Bonf. & 0.865 & 0.845 & 0.809 & 0.904 & 0.893 & 0.886\\
 
0 & 0 & block & Pointw. & 0.865 & 0.613 & 0.158 & 0.904 & 0.7 & 0.247\\
 
0 & 0 & iid & Sup-t & \textbf{0.894} & 0.877 & 0.86 & 0.915 & \textbf{0.901} & \textbf{0.898}\\
 
0 & 0 & iid & Bonf. & 0.892 & \textbf{0.895} & \textbf{0.91} & 0.912 & 0.92 & 0.926\\
 
0 & 0 & iid & Pointw. & 0.892 & 0.681 & 0.264 & 0.912 & 0.724 & 0.267\\
\addlinespace
0 & 0.3 & block & Sup-t & 0.86 & 0.808 & 0.769 & 0.892 & 0.872 & 0.867\\
 
0 & 0.3 & block & Bonf. & 0.845 & 0.815 & 0.807 & 0.889 & 0.887 & \textbf{0.914}\\
 
0 & 0.3 & block & Pointw. & 0.845 & 0.614 & 0.182 & 0.889 & 0.701 & 0.256\\
 
0 & 0.3 & iid & Sup-t & \textbf{0.902} & 0.88 & 0.861 & \textbf{0.904} & \textbf{0.896} & 0.881\\
 
0 & 0.3 & iid & Bonf. & 0.91 & \textbf{0.911} & \textbf{0.908} & 0.905 & 0.92 & 0.921\\
 
0 & 0.3 & iid & Pointw. & 0.91 & 0.695 & 0.277 & 0.905 & 0.718 & 0.279\\
\addlinespace
0 & 0.6 & block & Sup-t & 0.875 & 0.833 & 0.745 & 0.878 & 0.886 & 0.863\\
 
0 & 0.6 & block & Bonf. & 0.863 & 0.841 & 0.788 & 0.876 & 0.895 & \textbf{0.905}\\
 
0 & 0.6 & block & Pointw. & 0.863 & 0.626 & 0.162 & 0.876 & 0.68 & 0.244\\
 
0 & 0.6 & iid & Sup-t & 0.897 & \textbf{0.899} & 0.874 & 0.913 & \textbf{0.901} & 0.889\\
 
0 & 0.6 & iid & Bonf. & \textbf{0.898} & 0.917 & \textbf{0.915} & \textbf{0.91} & 0.914 & 0.923\\
 
0 & 0.6 & iid & Pointw. & 0.898 & 0.72 & 0.276 & 0.91 & 0.706 & 0.303\\
\bottomrule
\end{tabular}}
\end{table}

\begin{table}
\centering
\caption{\label{tab:cover_ss}Coverage of 90\% confidence bands for skill scores when $ a > 0$ (temporal dependence). 
  Block length is $ l = 1 $ (iid) or $ l = 3 \lfloor N^{1/4} \rfloor$ (block). The $P$-dimensional vector of scores $\mathbf{S}_t$ is drawn from a VAR(1) model.
The parameter $a$ governs the strength of temporal dependence while $v$ controls the cross-correlation.
The sample size is denoted by $N$.
Figures closest to the nominal level are printed in bold.}
\centering
\resizebox{\ifdim\width>\linewidth\linewidth\else\width\fi}{!}{
\begin{tabular}[t]{llllllllll}
\toprule
\multicolumn{1}{c}{ } & \multicolumn{1}{c}{ } & \multicolumn{1}{c}{ } & \multicolumn{1}{c}{ } & \multicolumn{3}{c}{$N = 100$} & \multicolumn{3}{c}{$N = 400$} \\
\cmidrule(l{3pt}r{3pt}){5-7} \cmidrule(l{3pt}r{3pt}){8-10}
a & v & boot & type & $P = 2$ & $P = 5$ & $P = 25$ & $P = 2$ & $P = 5$ & $P = 25$\\
\midrule
0.3 & 0 & block & Sup-t & \textbf{0.844} & 0.81 & 0.705 & \textbf{0.888} & 0.865 & 0.832\\
 
0.3 & 0 & block & Bonf. & 0.839 & \textbf{0.812} & \textbf{0.76} & 0.886 & \textbf{0.887} & \textbf{0.874}\\
 
0.3 & 0 & block & Pointw. & 0.839 & 0.598 & 0.146 & 0.886 & 0.639 & 0.204\\
 
0.3 & 0 & iid & Sup-t & 0.763 & 0.655 & 0.511 & 0.776 & 0.677 & 0.508\\
 
0.3 & 0 & iid & Bonf. & 0.763 & 0.683 & 0.595 & 0.776 & 0.714 & 0.611\\
 
0.3 & 0 & iid & Pointw. & 0.763 & 0.438 & 0.034 & 0.776 & 0.428 & 0.03\\
\addlinespace
0.3 & 0.3 & block & Sup-t & \textbf{0.847} & 0.822 & 0.707 & \textbf{0.879} & 0.868 & 0.84\\
 
0.3 & 0.3 & block & Bonf. & 0.835 & \textbf{0.83} & \textbf{0.763} & 0.874 & \textbf{0.884} & \textbf{0.883}\\
 
0.3 & 0.3 & block & Pointw. & 0.835 & 0.593 & 0.138 & 0.874 & 0.662 & 0.234\\
 
0.3 & 0.3 & iid & Sup-t & 0.778 & 0.67 & 0.553 & 0.789 & 0.677 & 0.529\\
 
0.3 & 0.3 & iid & Bonf. & 0.778 & 0.709 & 0.633 & 0.79 & 0.706 & 0.617\\
 
0.3 & 0.3 & iid & Pointw. & 0.778 & 0.44 & 0.037 & 0.79 & 0.442 & 0.033\\
\addlinespace
0.3 & 0.6 & block & Sup-t & \textbf{0.853} & 0.799 & 0.716 & \textbf{0.879} & 0.887 & 0.84\\
 
0.3 & 0.6 & block & Bonf. & 0.845 & \textbf{0.806} & \textbf{0.769} & 0.879 & \textbf{0.892} & \textbf{0.889}\\
 
0.3 & 0.6 & block & Pointw. & 0.845 & 0.564 & 0.142 & 0.879 & 0.695 & 0.225\\
 
0.3 & 0.6 & iid & Sup-t & 0.761 & 0.669 & 0.513 & 0.762 & 0.651 & 0.535\\
 
0.3 & 0.6 & iid & Bonf. & 0.763 & 0.701 & 0.629 & 0.759 & 0.682 & 0.627\\
 
0.3 & 0.6 & iid & Pointw. & 0.763 & 0.43 & 0.034 & 0.759 & 0.42 & 0.029\\
\addlinespace
0.6 & 0 & block & Sup-t & \textbf{0.788} & 0.738 & 0.556 & \textbf{0.861} & 0.821 & 0.742\\
 
0.6 & 0 & block & Bonf. & 0.776 & \textbf{0.757} & \textbf{0.623} & 0.857 & \textbf{0.84} & \textbf{0.812}\\
 
0.6 & 0 & block & Pointw. & 0.776 & 0.491 & 0.085 & 0.857 & 0.582 & 0.16\\
 
0.6 & 0 & iid & Sup-t & 0.575 & 0.313 & 0.083 & 0.563 & 0.343 & 0.087\\
 
0.6 & 0 & iid & Bonf. & 0.574 & 0.344 & 0.131 & 0.568 & 0.373 & 0.141\\
 
0.6 & 0 & iid & Pointw. & 0.574 & 0.166 & 0 & 0.568 & 0.17 & 0\\
\addlinespace
0.6 & 0.3 & block & Sup-t & \textbf{0.801} & 0.741 & 0.58 & \textbf{0.866} & 0.82 & 0.77\\
 
0.6 & 0.3 & block & Bonf. & 0.792 & \textbf{0.751} & \textbf{0.656} & 0.863 & \textbf{0.844} & \textbf{0.822}\\
 
0.6 & 0.3 & block & Pointw. & 0.792 & 0.524 & 0.092 & 0.863 & 0.618 & 0.152\\
 
0.6 & 0.3 & iid & Sup-t & 0.598 & 0.315 & 0.078 & 0.592 & 0.364 & 0.086\\
 
0.6 & 0.3 & iid & Bonf. & 0.6 & 0.348 & 0.13 & 0.59 & 0.395 & 0.131\\
 
0.6 & 0.3 & iid & Pointw. & 0.6 & 0.175 & 0 & 0.59 & 0.184 & 0\\
\addlinespace
0.6 & 0.6 & block & Sup-t & \textbf{0.833} & 0.749 & 0.591 & \textbf{0.863} & 0.836 & 0.783\\
 
0.6 & 0.6 & block & Bonf. & 0.823 & \textbf{0.756} & \textbf{0.659} & 0.857 & \textbf{0.848} & \textbf{0.837}\\
 
0.6 & 0.6 & block & Pointw. & 0.823 & 0.509 & 0.088 & 0.857 & 0.623 & 0.155\\
 
0.6 & 0.6 & iid & Sup-t & 0.599 & 0.352 & 0.062 & 0.573 & 0.319 & 0.075\\
 
0.6 & 0.6 & iid & Bonf. & 0.602 & 0.384 & 0.11 & 0.573 & 0.348 & 0.127\\
 
0.6 & 0.6 & iid & Pointw. & 0.602 & 0.185 & 0 & 0.573 & 0.161 & 0\\
\bottomrule
\end{tabular}}
\end{table}

\begin{table}[H]
\centering\centering
\caption{\label{tab:cover_es_indep}Coverage of 90\% confidence bands for expected scores when $ a = 0$ (no temporal dependence). 
  Block length is $ l = 1 $ (iid) or $ l = 3 \lfloor N^{1/4} \rfloor$ (block). The $P$-dimensional vector of scores $\mathbf{S}_t$ is drawn from a VAR(1) model.
The parameter $a$ governs the strength of temporal dependence while $v$ controls the cross-correlation.
The sample size is denoted by $N$.
Figures closest to the nominal level are printed in bold.}
\centering
\resizebox{\ifdim\width>\linewidth\linewidth\else\width\fi}{!}{
\begin{tabular}[t]{llllllllll}
\toprule
\multicolumn{1}{c}{ } & \multicolumn{1}{c}{ } & \multicolumn{1}{c}{ } & \multicolumn{1}{c}{ } & \multicolumn{3}{c}{$N = 100$} & \multicolumn{3}{c}{$N = 400$} \\
\cmidrule(l{3pt}r{3pt}){5-7} \cmidrule(l{3pt}r{3pt}){8-10}
a & v & boot & type & $P = 2$ & $P = 5$ & $P = 25$ & $P = 2$ & $P = 5$ & $P = 25$\\
\midrule
0 & 0 & Sup-t & block & 0.852 & 0.813 & 0.685 & \textbf{0.901} & 0.867 & 0.839\\
 
0 & 0 & Bonf. & block & 0.848 & 0.812 & 0.696 & 0.901 & 0.866 & 0.842\\
 
0 & 0 & Pointw. & block & 0.754 & 0.499 & 0.014 & 0.799 & 0.559 & 0.049\\
 
0 & 0 & Sup-t & iid & \textbf{0.909} & 0.882 & 0.867 & 0.903 & \textbf{0.903} & 0.871\\
 
0 & 0 & Bonf. & iid & 0.911 & \textbf{0.885} & \textbf{0.875} & 0.906 & 0.91 & \textbf{0.882}\\
 
0 & 0 & Pointw. & iid & 0.804 & 0.582 & 0.056 & 0.818 & 0.594 & 0.07\\
\addlinespace
0 & 0.3 & Sup-t & block & 0.864 & 0.816 & 0.708 & 0.881 & 0.867 & 0.843\\
 
0 & 0.3 & Bonf. & block & 0.867 & 0.815 & 0.729 & 0.888 & 0.875 & 0.866\\
 
0 & 0.3 & Pointw. & block & 0.743 & 0.512 & 0.06 & 0.794 & 0.605 & 0.119\\
 
0 & 0.3 & Sup-t & iid & \textbf{0.902} & 0.887 & 0.878 & \textbf{0.906} & \textbf{0.907} & 0.88\\
 
0 & 0.3 & Bonf. & iid & 0.91 & \textbf{0.896} & \textbf{0.897} & 0.912 & 0.914 & \textbf{0.898}\\
 
0 & 0.3 & Pointw. & iid & 0.806 & 0.619 & 0.128 & 0.818 & 0.623 & 0.145\\
\addlinespace
0 & 0.6 & Sup-t & block & 0.849 & 0.827 & 0.765 & 0.889 & \textbf{0.892} & 0.86\\
 
0 & 0.6 & Bonf. & block & 0.85 & 0.847 & 0.825 & \textbf{0.898} & 0.909 & \textbf{0.91}\\
 
0 & 0.6 & Pointw. & block & 0.771 & 0.595 & 0.215 & 0.819 & 0.689 & 0.34\\
 
0 & 0.6 & Sup-t & iid & \textbf{0.902} & \textbf{0.89} & \textbf{0.895} & 0.906 & 0.911 & 0.879\\
 
0 & 0.6 & Bonf. & iid & 0.916 & 0.917 & 0.942 & 0.915 & 0.932 & 0.934\\
 
0 & 0.6 & Pointw. & iid & 0.834 & 0.659 & 0.363 & 0.824 & 0.686 & 0.379\\
\bottomrule
\end{tabular}}
\end{table}

\begin{table}[H]
\centering\centering
\caption{\label{tab:cover_es}Coverage of 90\% confidence bands for expected scores when $ a > 0$ (temporal dependence). 
                 Block length is $ l = 1 $ (iid) or $ l = 3 \lfloor N^{1/4} \rfloor$ (block). The $P$-dimensional vector of scores $\mathbf{S}_t$ is drawn from a VAR(1) model.
The parameter $a$ governs the strength of temporal dependence while $v$ controls the cross-correlation.
The sample size is denoted by $N$.
Figures closest to the nominal level are printed in bold.}
\centering
\resizebox{\ifdim\width>\linewidth\linewidth\else\width\fi}{!}{
\begin{tabular}[t]{llllllllll}
\toprule
\multicolumn{1}{c}{ } & \multicolumn{1}{c}{ } & \multicolumn{1}{c}{ } & \multicolumn{1}{c}{ } & \multicolumn{3}{c}{$N = 100$} & \multicolumn{3}{c}{$N = 400$} \\
\cmidrule(l{3pt}r{3pt}){5-7} \cmidrule(l{3pt}r{3pt}){8-10}
a & v & boot & type & $P = 2$ & $P = 5$ & $P = 25$ & $P = 2$ & $P = 5$ & $P = 25$\\
\midrule
0.3 & 0 & Sup-t & block & \textbf{0.815} & \textbf{0.803} & 0.614 & 0.876 & 0.844 & 0.799\\
 
0.3 & 0 & Bonf. & block & 0.811 & 0.791 & \textbf{0.621} & \textbf{0.879} & \textbf{0.848} & \textbf{0.806}\\
 
0.3 & 0 & Pointw. & block & 0.721 & 0.465 & 0.017 & 0.775 & 0.533 & 0.036\\
 
0.3 & 0 & Sup-t & iid & 0.697 & 0.597 & 0.347 & 0.718 & 0.632 & 0.357\\
 
0.3 & 0 & Bonf. & iid & 0.708 & 0.605 & 0.357 & 0.725 & 0.636 & 0.37\\
 
0.3 & 0 & Pointw. & iid & 0.563 & 0.273 & 0.001 & 0.599 & 0.278 & 0.003\\
\addlinespace
0.3 & 0.3 & Sup-t & block & \textbf{0.817} & \textbf{0.807} & 0.655 & 0.868 & 0.855 & 0.811\\
 
0.3 & 0.3 & Bonf. & block & 0.807 & 0.805 & \textbf{0.677} & \textbf{0.869} & \textbf{0.864} & \textbf{0.84}\\
 
0.3 & 0.3 & Pointw. & block & 0.705 & 0.476 & 0.054 & 0.778 & 0.554 & 0.104\\
 
0.3 & 0.3 & Sup-t & iid & 0.731 & 0.643 & 0.449 & 0.735 & 0.606 & 0.453\\
 
0.3 & 0.3 & Bonf. & iid & 0.745 & 0.664 & 0.503 & 0.745 & 0.636 & 0.502\\
 
0.3 & 0.3 & Pointw. & iid & 0.615 & 0.32 & 0.007 & 0.628 & 0.29 & 0.004\\
\addlinespace
0.3 & 0.6 & Sup-t & block & 0.831 & 0.806 & 0.734 & 0.867 & 0.87 & 0.832\\
 
0.3 & 0.6 & Bonf. & block & \textbf{0.84} & \textbf{0.823} & \textbf{0.797} & \textbf{0.877} & \textbf{0.897} & \textbf{0.881}\\
 
0.3 & 0.6 & Pointw. & block & 0.741 & 0.575 & 0.203 & 0.798 & 0.635 & 0.294\\
 
0.3 & 0.6 & Sup-t & iid & 0.733 & 0.659 & 0.527 & 0.723 & 0.675 & 0.568\\
 
0.3 & 0.6 & Bonf. & iid & 0.755 & 0.72 & 0.656 & 0.732 & 0.728 & 0.691\\
 
0.3 & 0.6 & Pointw. & iid & 0.643 & 0.41 & 0.054 & 0.633 & 0.407 & 0.069\\
\addlinespace
0.6 & 0 & Sup-t & block & \textbf{0.756} & \textbf{0.694} & 0.427 & 0.848 & \textbf{0.805} & 0.715\\
 
0.6 & 0 & Bonf. & block & 0.749 & 0.692 & \textbf{0.447} & \textbf{0.85} & 0.805 & \textbf{0.726}\\
 
0.6 & 0 & Pointw. & block & 0.625 & 0.359 & 0.005 & 0.754 & 0.457 & 0.016\\
 
0.6 & 0 & Sup-t & iid & 0.455 & 0.22 & 0.009 & 0.43 & 0.247 & 0.021\\
 
0.6 & 0 & Bonf. & iid & 0.46 & 0.226 & 0.01 & 0.438 & 0.256 & 0.024\\
 
0.6 & 0 & Pointw. & iid & 0.364 & 0.067 & 0 & 0.327 & 0.072 & 0\\
\addlinespace
0.6 & 0.3 & Sup-t & block & 0.771 & 0.71 & 0.513 & 0.837 & 0.816 & 0.726\\
 
0.6 & 0.3 & Bonf. & block & \textbf{0.774} & \textbf{0.718} & \textbf{0.555} & \textbf{0.841} & \textbf{0.826} & \textbf{0.752}\\
 
0.6 & 0.3 & Pointw. & block & 0.673 & 0.386 & 0.014 & 0.733 & 0.532 & 0.056\\
 
0.6 & 0.3 & Sup-t & iid & 0.45 & 0.247 & 0.034 & 0.474 & 0.267 & 0.034\\
 
0.6 & 0.3 & Bonf. & iid & 0.455 & 0.262 & 0.05 & 0.481 & 0.286 & 0.044\\
 
0.6 & 0.3 & Pointw. & iid & 0.37 & 0.071 & 0 & 0.38 & 0.09 & 0\\
\addlinespace
0.6 & 0.6 & Sup-t & block & 0.804 & 0.725 & 0.614 & 0.848 & 0.833 & 0.771\\
 
0.6 & 0.6 & Bonf. & block & \textbf{0.81} & \textbf{0.751} & \textbf{0.69} & \textbf{0.867} & \textbf{0.861} & \textbf{0.854}\\
 
0.6 & 0.6 & Pointw. & block & 0.702 & 0.476 & 0.11 & 0.75 & 0.591 & 0.233\\
 
0.6 & 0.6 & Sup-t & iid & 0.46 & 0.313 & 0.11 & 0.475 & 0.336 & 0.108\\
 
0.6 & 0.6 & Bonf. & iid & 0.479 & 0.365 & 0.193 & 0.494 & 0.37 & 0.196\\
 
0.6 & 0.6 & Pointw. & iid & 0.374 & 0.138 & 0 & 0.374 & 0.147 & 0.001\\
\bottomrule
\end{tabular}}
\end{table}

\begin{table}[H]
\centering\centering
\caption{\label{tab:cover_ss_bonf_L1}Coverage of 90\% Bonferroni confidence bands for skill scores. 
                 Block length is $ l = q \lfloor N^{1/4} \rfloor$. The $P$-dimensional vector of scores $\mathbf{S}_t$ is drawn from a VAR(1) model.
The parameter $a$ governs the strength of temporal dependence while $v$ controls the cross-correlation.
The sample size is denoted by $N$.
Figures closest to the nominal level are printed in bold.}
\centering
\resizebox{\ifdim\width>\linewidth\linewidth\else\width\fi}{!}{
\begin{tabular}[t]{llrllllllllll}
\toprule
\multicolumn{1}{c}{ } & \multicolumn{1}{c}{ } & \multicolumn{1}{c}{ } & \multicolumn{5}{c}{$N = 100$} & \multicolumn{5}{c}{$N = 400$} \\
\cmidrule(l{3pt}r{3pt}){4-8} \cmidrule(l{3pt}r{3pt}){9-13}
a & v & q & $P = 2$ & $P = 5$ & $P = 25$ & $P = 100$ & $P = 400$ & $P = 2$ & $P = 5$ & $P = 25$ & $P = 100$ & $P = 400$\\
\midrule
0 & 0 & 1 & \textbf{0.907} & \textbf{0.895} & \textbf{0.901} & \textbf{0.908} & \textbf{0.886} & 0.909 & 0.907 & 0.921 & 0.924 & 0.937\\
 
0 & 0 & 2 & 0.892 & 0.876 & 0.871 & 0.845 & 0.837 & 0.906 & \textbf{0.906} & 0.928 & 0.911 & 0.924\\
 
0 & 0 & 3 & 0.865 & 0.845 & 0.809 & 0.769 & 0.713 & \textbf{0.904} & 0.893 & \textbf{0.886} & \textbf{0.898} & \textbf{0.905}\\
\addlinespace
0 & 0.3 & 1 & \textbf{0.897} & 0.878 & \textbf{0.904} & \textbf{0.883} & \textbf{0.872} & 0.887 & 0.923 & 0.92 & 0.924 & 0.94\\
 
0 & 0.3 & 2 & 0.883 & \textbf{0.894} & 0.858 & 0.85 & 0.846 & \textbf{0.896} & \textbf{0.888} & 0.915 & \textbf{0.907} & \textbf{0.911}\\
 
0 & 0.3 & 3 & 0.845 & 0.815 & 0.807 & 0.749 & 0.705 & 0.889 & 0.887 & \textbf{0.914} & 0.893 & 0.913\\
\addlinespace
0 & 0.6 & 1 & \textbf{0.894} & 0.883 & \textbf{0.894} & \textbf{0.902} & \textbf{0.908} & 0.879 & 0.91 & 0.931 & 0.928 & 0.947\\
 
0 & 0.6 & 2 & 0.887 & \textbf{0.891} & 0.879 & 0.839 & 0.826 & \textbf{0.899} & 0.891 & 0.908 & 0.925 & 0.928\\
 
0 & 0.6 & 3 & 0.863 & 0.841 & 0.788 & 0.776 & 0.714 & 0.876 & \textbf{0.895} & \textbf{0.905} & \textbf{0.899} & \textbf{0.893}\\
\addlinespace
0.3 & 0 & 1 & 0.833 & 0.8 & 0.773 & 0.753 & \textbf{0.73} & 0.868 & 0.873 & 0.874 & 0.885 & 0.857\\
 
0.3 & 0 & 2 & \textbf{0.843} & \textbf{0.846} & \textbf{0.805} & \textbf{0.786} & 0.721 & 0.867 & 0.871 & \textbf{0.884} & 0.888 & \textbf{0.882}\\
 
0.3 & 0 & 3 & 0.839 & 0.812 & 0.76 & 0.694 & 0.609 & \textbf{0.886} & \textbf{0.887} & 0.874 & \textbf{0.891} & 0.865\\
\addlinespace
0.3 & 0.3 & 1 & 0.831 & 0.84 & 0.764 & \textbf{0.777} & \textbf{0.737} & 0.876 & 0.875 & 0.856 & 0.861 & 0.87\\
 
0.3 & 0.3 & 2 & \textbf{0.862} & \textbf{0.842} & \textbf{0.815} & 0.775 & 0.729 & \textbf{0.883} & 0.88 & 0.881 & 0.876 & \textbf{0.894}\\
 
0.3 & 0.3 & 3 & 0.835 & 0.83 & 0.763 & 0.691 & 0.608 & 0.874 & \textbf{0.884} & \textbf{0.883} & \textbf{0.879} & 0.837\\
\addlinespace
0.3 & 0.6 & 1 & 0.819 & 0.819 & 0.767 & 0.754 & \textbf{0.731} & 0.851 & 0.868 & 0.849 & 0.854 & 0.864\\
 
0.3 & 0.6 & 2 & \textbf{0.865} & \textbf{0.845} & \textbf{0.83} & \textbf{0.788} & 0.719 & 0.852 & 0.887 & 0.874 & \textbf{0.88} & \textbf{0.887}\\
 
0.3 & 0.6 & 3 & 0.845 & 0.806 & 0.769 & 0.692 & 0.592 & \textbf{0.879} & \textbf{0.892} & \textbf{0.889} & 0.874 & 0.868\\
\addlinespace
0.6 & 0 & 1 & 0.754 & 0.649 & 0.496 & 0.404 & 0.283 & 0.789 & 0.738 & 0.688 & 0.612 & 0.597\\
 
0.6 & 0 & 2 & \textbf{0.791} & \textbf{0.76} & \textbf{0.666} & \textbf{0.584} & \textbf{0.509} & 0.822 & 0.815 & 0.771 & 0.782 & 0.762\\
 
0.6 & 0 & 3 & 0.776 & 0.757 & 0.623 & 0.537 & 0.452 & \textbf{0.857} & \textbf{0.84} & \textbf{0.812} & \textbf{0.824} & \textbf{0.819}\\
\addlinespace
0.6 & 0.3 & 1 & 0.728 & 0.648 & 0.479 & 0.402 & 0.321 & 0.801 & 0.739 & 0.668 & 0.643 & 0.604\\
 
0.6 & 0.3 & 2 & \textbf{0.81} & \textbf{0.767} & 0.646 & \textbf{0.581} & \textbf{0.517} & 0.857 & 0.827 & 0.785 & 0.784 & 0.76\\
 
0.6 & 0.3 & 3 & 0.792 & 0.751 & \textbf{0.656} & 0.543 & 0.423 & \textbf{0.863} & \textbf{0.844} & \textbf{0.822} & \textbf{0.813} & \textbf{0.803}\\
\addlinespace
0.6 & 0.6 & 1 & 0.739 & 0.671 & 0.484 & 0.394 & 0.31 & 0.764 & 0.72 & 0.656 & 0.651 & 0.562\\
 
0.6 & 0.6 & 2 & 0.814 & \textbf{0.775} & \textbf{0.679} & \textbf{0.571} & \textbf{0.477} & 0.853 & 0.819 & 0.814 & 0.768 & 0.761\\
 
0.6 & 0.6 & 3 & \textbf{0.823} & 0.756 & 0.659 & 0.528 & 0.416 & \textbf{0.857} & \textbf{0.848} & \textbf{0.837} & \textbf{0.821} & \textbf{0.788}\\
\bottomrule
\end{tabular}}
\end{table}

\begin{table}[H]
\centering\centering
\caption{\label{tab:cover_ss_supt_L1}Coverage of 90\% sup-t confidence bands for skill scores. 
  Block length is $ l = q \lfloor N^{1/4} \rfloor$. The $P$-dimensional vector of scores $\mathbf{S}_t$ is drawn from a VAR(1) model.
The parameter $a$ governs the strength of temporal dependence while $v$ controls the cross-correlation.
The sample size is denoted by $N$.
Figures closest to the nominal level are printed in bold.}
\centering
\resizebox{\ifdim\width>\linewidth\linewidth\else\width\fi}{!}{
\begin{tabular}[t]{llrllllllllll}
\toprule
\multicolumn{1}{c}{ } & \multicolumn{1}{c}{ } & \multicolumn{1}{c}{ } & \multicolumn{5}{c}{$N = 100$} & \multicolumn{5}{c}{$N = 400$} \\
\cmidrule(l{3pt}r{3pt}){4-8} \cmidrule(l{3pt}r{3pt}){9-13}
a & v & q & $P = 2$ & $P = 5$ & $P = 25$ & $P = 100$ & $P = 400$ & $P = 2$ & $P = 5$ & $P = 25$ & $P = 100$ & $P = 400$\\
\midrule
0 & 0 & 1 & 0.91 & \textbf{0.883} & \textbf{0.841} & \textbf{0.84} & \textbf{0.793} & 0.906 & 0.889 & 0.88 & \textbf{0.884} & \textbf{0.866}\\
 
0 & 0 & 2 & \textbf{0.898} & 0.865 & 0.831 & 0.764 & 0.725 & 0.907 & \textbf{0.892} & \textbf{0.883} & 0.842 & 0.837\\
 
0 & 0 & 3 & 0.872 & 0.832 & 0.756 & 0.68 & 0.562 & \textbf{0.904} & 0.879 & 0.848 & 0.841 & 0.812\\
\addlinespace
0 & 0.3 & 1 & \textbf{0.891} & 0.861 & \textbf{0.862} & \textbf{0.819} & \textbf{0.769} & 0.887 & \textbf{0.905} & \textbf{0.879} & \textbf{0.866} & \textbf{0.878}\\
 
0 & 0.3 & 2 & 0.887 & \textbf{0.886} & 0.822 & 0.772 & 0.697 & \textbf{0.896} & 0.869 & 0.874 & 0.844 & 0.824\\
 
0 & 0.3 & 3 & 0.86 & 0.808 & 0.769 & 0.672 & 0.542 & 0.892 & 0.872 & 0.867 & 0.831 & 0.816\\
\addlinespace
0 & 0.6 & 1 & \textbf{0.896} & 0.867 & \textbf{0.849} & \textbf{0.829} & \textbf{0.821} & 0.878 & \textbf{0.886} & \textbf{0.884} & 0.86 & \textbf{0.87}\\
 
0 & 0.6 & 2 & 0.891 & \textbf{0.882} & 0.84 & 0.75 & 0.692 & \textbf{0.904} & 0.872 & 0.856 & \textbf{0.867} & 0.839\\
 
0 & 0.6 & 3 & 0.875 & 0.833 & 0.745 & 0.682 & 0.554 & 0.878 & 0.886 & 0.863 & 0.835 & 0.804\\
\addlinespace
0.3 & 0 & 1 & 0.838 & 0.778 & 0.71 & 0.648 & 0.554 & 0.872 & 0.855 & 0.831 & 0.806 & 0.734\\
 
0.3 & 0 & 2 & \textbf{0.846} & \textbf{0.822} & \textbf{0.747} & \textbf{0.686} & \textbf{0.57} & 0.866 & 0.851 & \textbf{0.842} & \textbf{0.819} & 0.768\\
 
0.3 & 0 & 3 & 0.844 & 0.81 & 0.705 & 0.578 & 0.442 & \textbf{0.888} & \textbf{0.865} & 0.832 & 0.797 & \textbf{0.777}\\
\addlinespace
0.3 & 0.3 & 1 & 0.834 & 0.816 & 0.693 & 0.669 & 0.55 & 0.876 & 0.849 & 0.815 & 0.789 & 0.731\\
 
0.3 & 0.3 & 2 & \textbf{0.864} & 0.819 & \textbf{0.763} & \textbf{0.68} & \textbf{0.566} & \textbf{0.881} & 0.861 & 0.829 & \textbf{0.808} & \textbf{0.789}\\
 
0.3 & 0.3 & 3 & 0.847 & \textbf{0.822} & 0.707 & 0.586 & 0.434 & 0.879 & \textbf{0.868} & \textbf{0.84} & 0.802 & 0.734\\
\addlinespace
0.3 & 0.6 & 1 & 0.823 & 0.797 & 0.692 & 0.635 & 0.566 & 0.853 & 0.853 & 0.803 & 0.773 & 0.737\\
 
0.3 & 0.6 & 2 & \textbf{0.873} & \textbf{0.83} & \textbf{0.776} & \textbf{0.672} & \textbf{0.574} & 0.853 & 0.87 & 0.832 & \textbf{0.809} & \textbf{0.77}\\
 
0.3 & 0.6 & 3 & 0.853 & 0.799 & 0.716 & 0.577 & 0.424 & \textbf{0.879} & \textbf{0.887} & \textbf{0.84} & 0.806 & 0.767\\
\addlinespace
0.6 & 0 & 1 & 0.756 & 0.611 & 0.403 & 0.243 & 0.106 & 0.79 & 0.703 & 0.602 & 0.461 & 0.41\\
 
0.6 & 0 & 2 & \textbf{0.793} & 0.733 & \textbf{0.57} & \textbf{0.429} & \textbf{0.298} & 0.822 & 0.79 & 0.702 & 0.686 & 0.604\\
 
0.6 & 0 & 3 & 0.788 & \textbf{0.738} & 0.556 & 0.409 & 0.246 & \textbf{0.861} & \textbf{0.821} & \textbf{0.742} & \textbf{0.728} & \textbf{0.654}\\
\addlinespace
0.6 & 0.3 & 1 & 0.729 & 0.609 & 0.399 & 0.264 & 0.123 & 0.801 & 0.716 & 0.595 & 0.502 & 0.407\\
 
0.6 & 0.3 & 2 & \textbf{0.819} & \textbf{0.746} & 0.561 & \textbf{0.427} & \textbf{0.299} & 0.861 & 0.803 & 0.716 & 0.663 & 0.608\\
 
0.6 & 0.3 & 3 & 0.801 & 0.741 & \textbf{0.58} & 0.414 & 0.23 & \textbf{0.866} & \textbf{0.82} & \textbf{0.77} & \textbf{0.716} & \textbf{0.645}\\
\addlinespace
0.6 & 0.6 & 1 & 0.744 & 0.639 & 0.394 & 0.247 & 0.129 & 0.759 & 0.687 & 0.577 & 0.501 & 0.376\\
 
0.6 & 0.6 & 2 & 0.817 & 0.748 & \textbf{0.598} & \textbf{0.438} & \textbf{0.297} & 0.857 & 0.799 & 0.749 & 0.68 & 0.586\\
 
0.6 & 0.6 & 3 & \textbf{0.833} & \textbf{0.749} & 0.591 & 0.402 & 0.241 & \textbf{0.863} & \textbf{0.836} & \textbf{0.783} & \textbf{0.719} & \textbf{0.633}\\
\bottomrule
\end{tabular}}
\end{table}

\section{Additional Empirical Results}

\subsection{Time-Varying Parameters in Economic Forecasting}
\label{app:additional_results_macro}

\begin{table}[ht]
\centering
\begin{tabular}{L{3cm}C{2.5cm}C{2.5cm}C{2.5cm}C{2.5cm}}
  \hline
 & Fully Agg. & Disagg. over Horizons & Disagg. over Variables & Fully Disaggr. \\ 
  \hline
Mean Forecasts & 0.325 & 0.469 & 0.350 & 0.463 \\ 
  Density Forecasts & 0.170 & 0.266 & 0.182 & 0.262 \\ 
   \hline
\end{tabular}
\caption{Average width, i.e. averaged over the disaggregated dimensions, of 90\% Bonferroni confidence bands for skill scores of the BVAR with time-varying parameters and stochastic volatility. The benchmark is a BVAR with constant parameters.} 
\label{tab:bvar_width_bonf}
\end{table}

To summarize the sampling uncertainty for all discussed levels of (dis-)aggregation, we calculate the average width, i.e. averaged over the disaggregated dimensions, of the confidence bands in Table \ref{tab:bvar_width_bonf}. For example, in the second column we determine the width of the confidence bands from the left part of Figure \ref{fig:bvar_partly_disagg} at each horizon $h$, and then average over $h$. First, the table shows that the estimates for the mean forecasts are subject to larger sampling uncertainty than for the density forecasts in all considered cases. Second, disaggregating over both dimensions ($ J = 24$) does not lead to an increase in the average width, and therefore sampling uncertainty, compared to the confidence bands that are disaggregated over horizons only ($ J = 8$). 

\begin{figure}[H] 
\centering
\includegraphics[scale = 0.45]{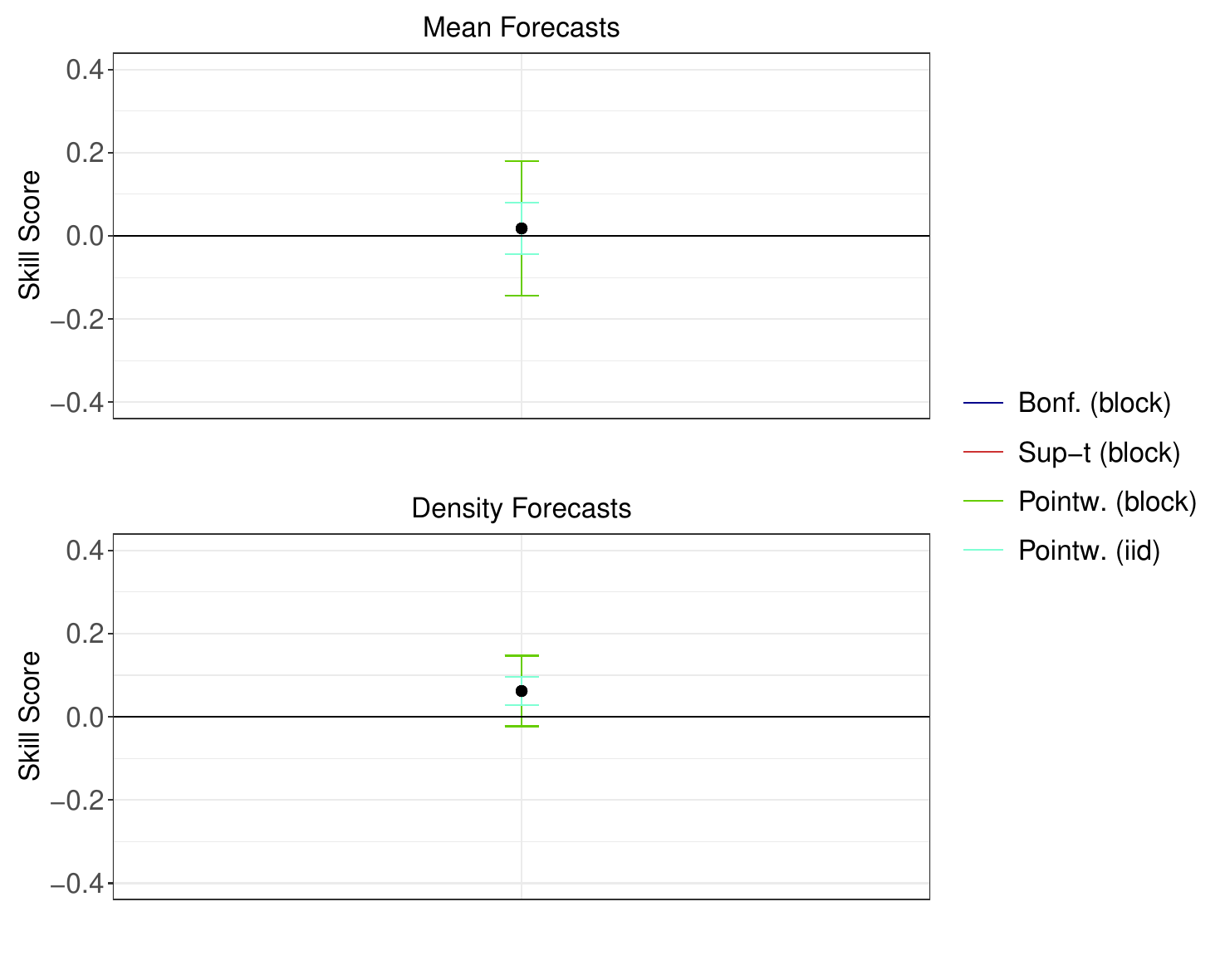}
\caption{Estimated skill scores of the BVAR with time-varying parameters and stochastic volatility with 90\% Bonferroni, sup-t, pointwise, and pointwise iid confidence bands fully aggregated. The benchmark is a BVAR with constant parameters. The scores used are the squared error for mean forecasts and the energy score for probabilistic forecasts. The block length used in the bootstrap is equal to $3 \lfloor N ^{1/4} \rfloor =9$.}
\label{fig:bvar_agg_comb}
\end{figure}

\begin{figure}[H]
\centering
\includegraphics[width = 0.9\textwidth]{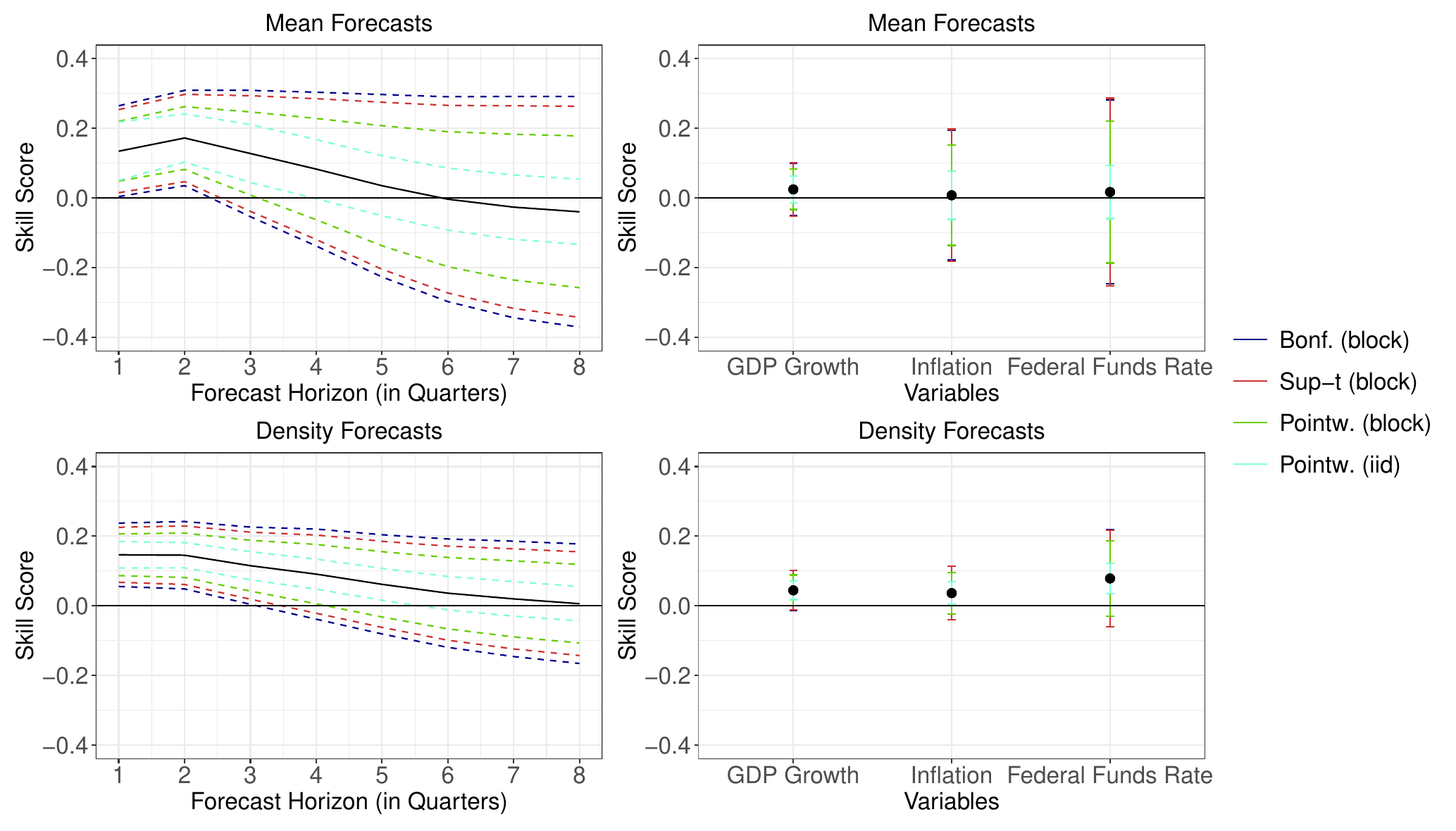}
\caption{Estimated skill scores of the BVAR with time-varying parameters and stochastic volatility with 90\% Bonferroni, sup-t, pointwise, and pointwise iid confidence bands partly disaggregated. Left: Disaggregated over forecast horizons. Right: Disaggregated over variables. The benchmark is a BVAR with constant parameters. The scores used are the (multivariate) squared error for mean forecasts and the energy score and the CRPS for probabilistic forecasts. The block length used in the bootstrap is equal to $3 \lfloor N ^{1/4} \rfloor=9$.}
\label{fig:bvar_partly_disagg_comb}
\end{figure}

\begin{figure}[H]
\centering
\includegraphics[width = \textwidth]{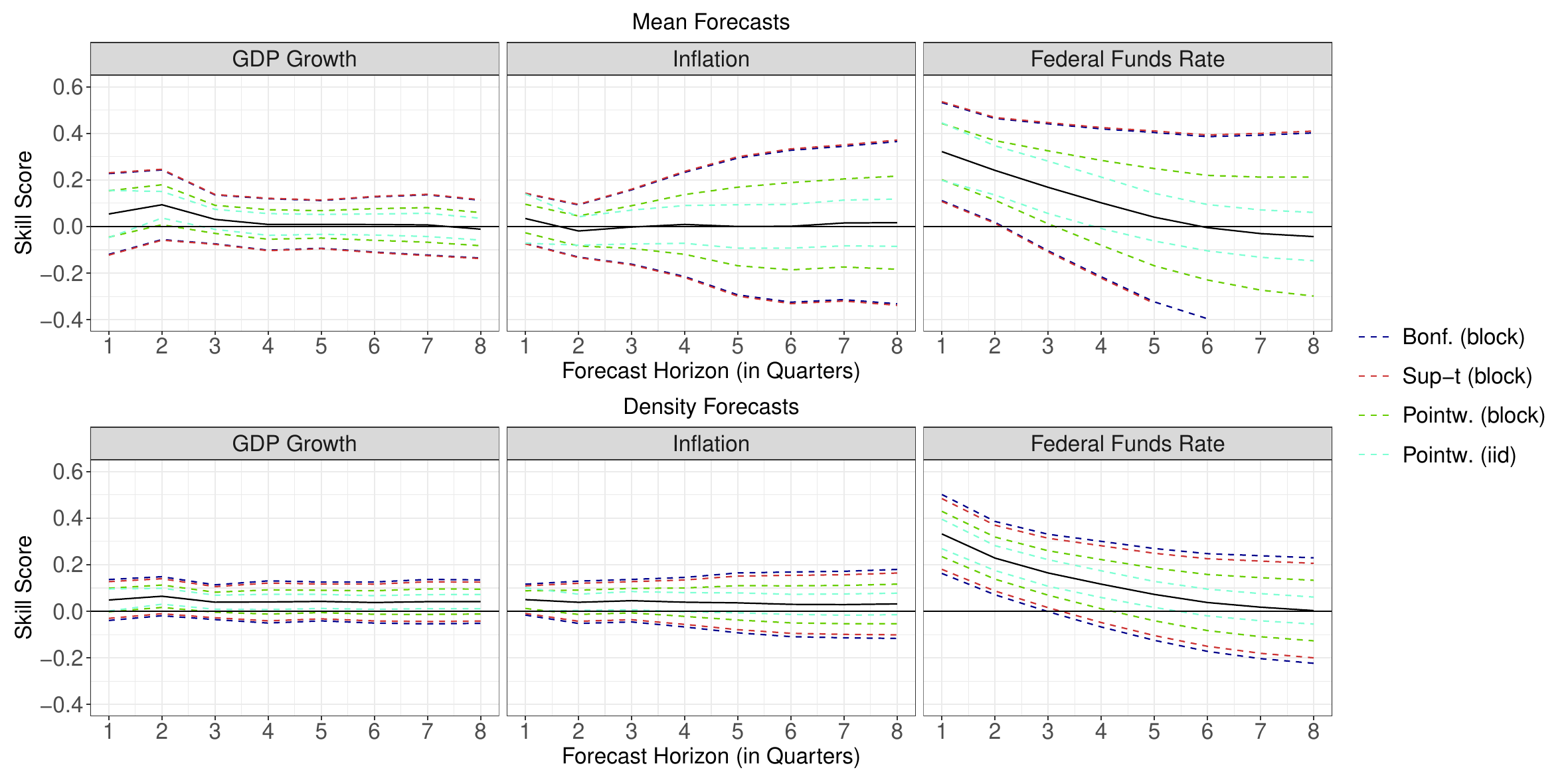}
\caption{Estimated skill scores of the BVAR with time-varying parameters and stochastic volatility with 90\% Bonferroni, sup-t, pointwise, and pointwise iid confidence bands fully disaggregated. The benchmark is a BVAR with constant parameters. The scores used are the squared error for mean forecasts and the CRPS for probabilistic forecasts. The block length used in the bootstrap is equal to $3 \lfloor N ^{1/4} \rfloor=9$.}
\label{fig:bvar_fully_disagg_comb}
\end{figure}

\begin{table}[ht]
\centering
\begin{tabular}{L{3cm}C{2.5cm}C{2.5cm}C{2.5cm}C{2.5cm}}
  \hline
 & Fully Agg. & Disagg. over Horizons & Disagg. over Variables & Fully Disaggr. \\ 
  \hline
Mean Forecasts & 0.323 & 0.429 & 0.357 & 0.472 \\ 
  Density Forecasts & 0.171 & 0.231 & 0.182 & 0.235 \\ 
   \hline
\end{tabular}
\caption{Average width, i.e. averaged over the disaggregated dimensions, of 90\% sup-t confidence bands for skill scores of the BVAR with time-varying parameters and stochastic volatility. The benchmark is a BVAR with constant parameters.} 
\label{tab:bvar_width_supt}
\end{table}

\subsection{Probabilistic Weather Forecasting Models} \label{app:additional_results_weather}

\begin{figure}[p]
\centering
(a) CRPSS for Pangu-Weather + GNP
\includegraphics[width = \textwidth]{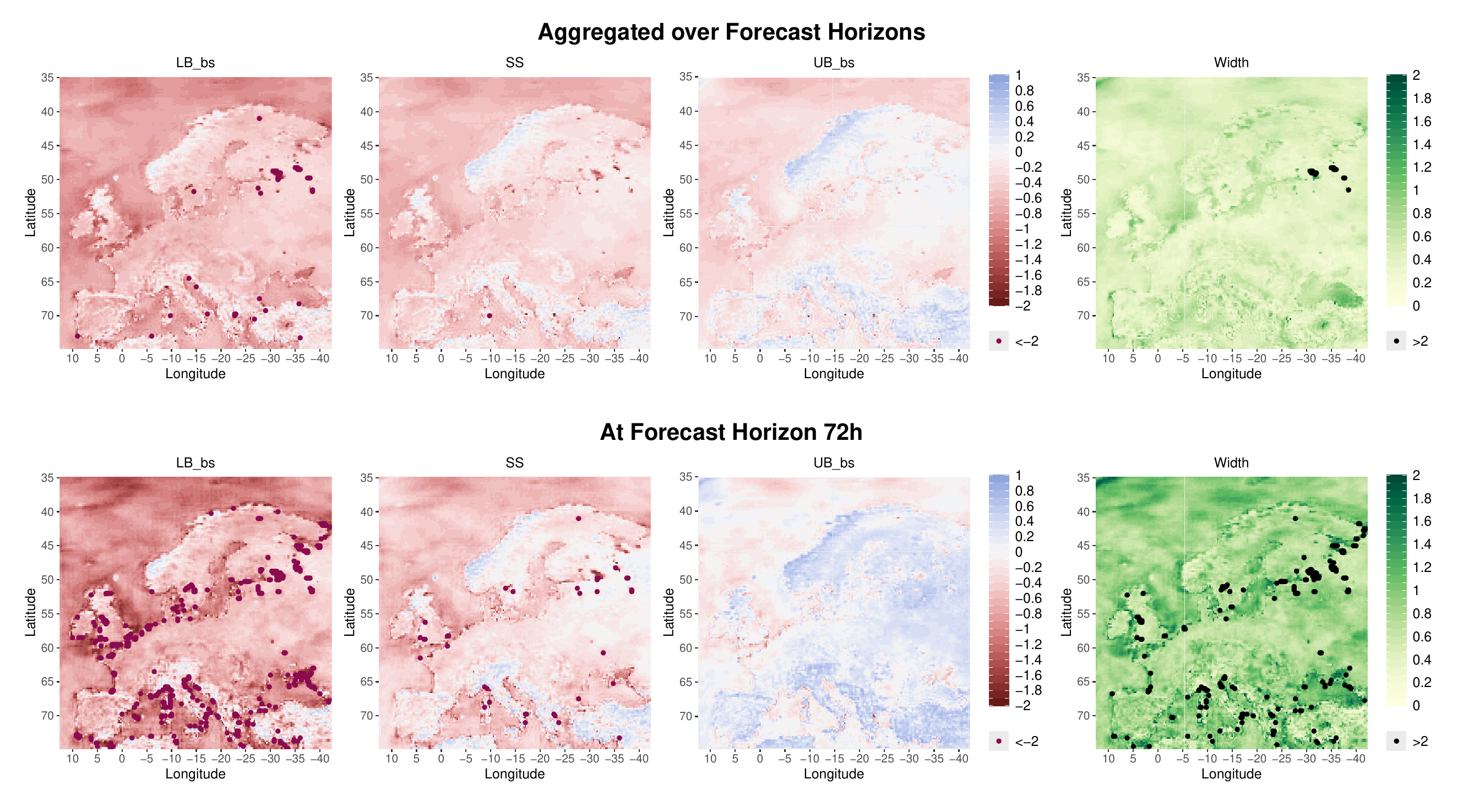}
\bigbreak 
(b) CRPSS for Pangu-Weather + EasyUQ
\includegraphics[width = \textwidth]{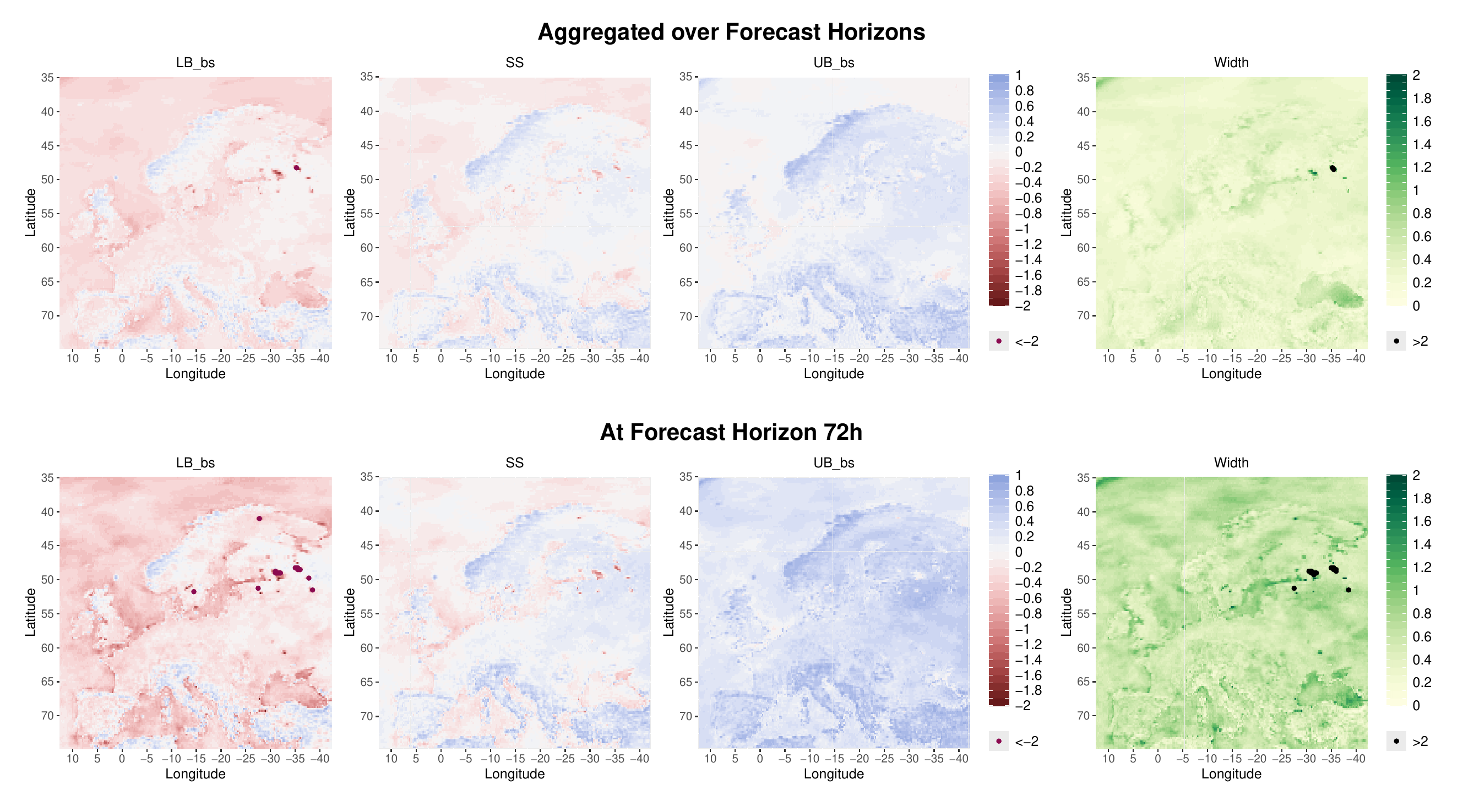}
\caption{Estimated skill scores of probabilistic forecasts generated from the data-driven weather model Pangu-Weather with 90\% Bonferroni confidence bands disaggregated over locations, using ensemble predictions from the physics-based ECMWF model as a benchmark. The block length used in the bootstrap is equal to $3 \lfloor N^{1/4} \rfloor =12 $. The CRPSS values are cropped at $-2$ and the width values at 2 to improve visual clarity.}
\label{fig:weather_skill_l}
\end{figure}

\begin{figure}[H] 
\centering
\includegraphics[width = 0.9\textwidth]{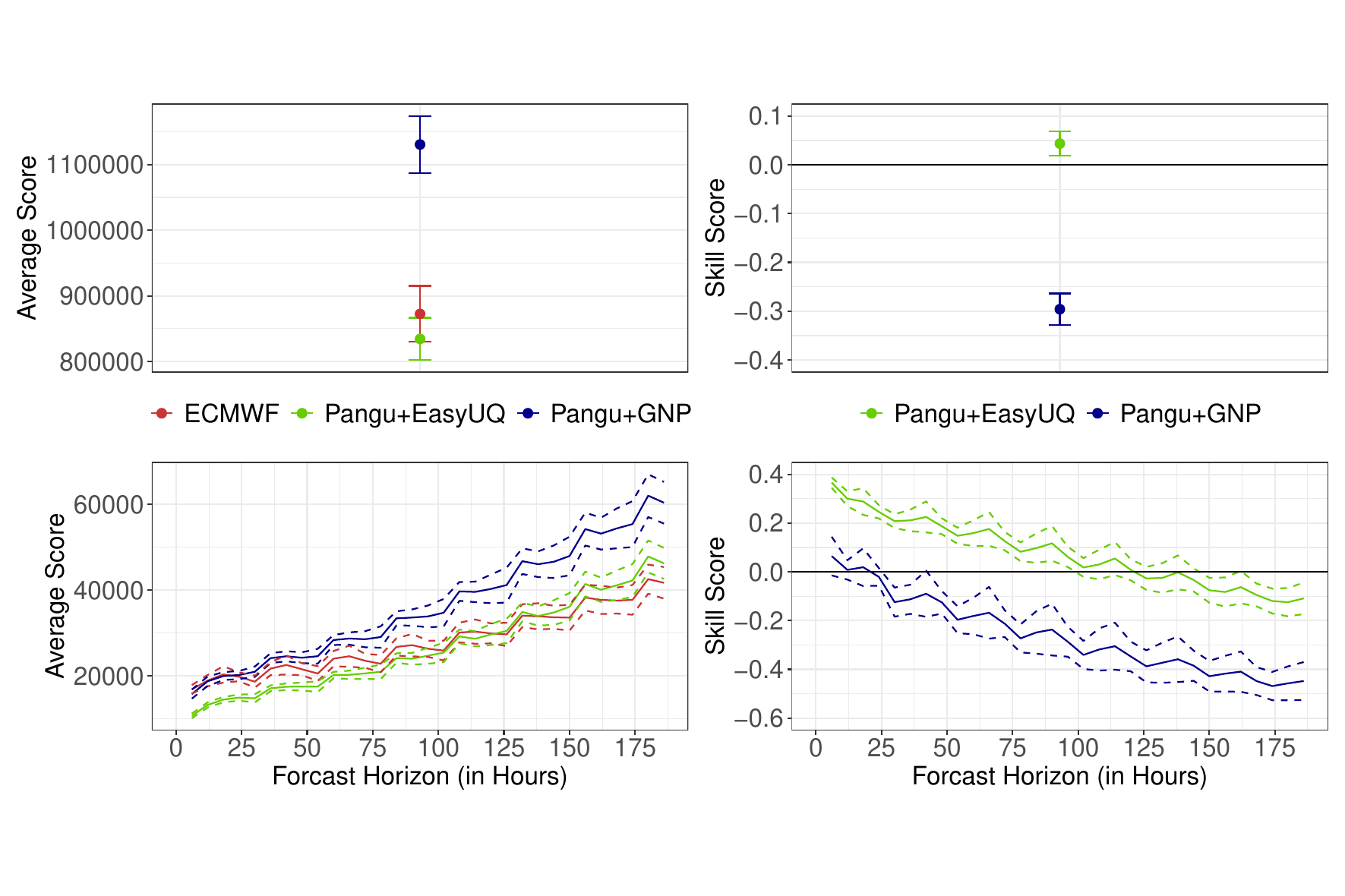}
\caption{Estimated CRPS (left) and CRPSS (right) with 90\% sup-t confidence bands fully aggregated (top) and disaggegrated over forecast horizons (bottom). The ECMWF ensemble forecasts serve as a benchmark for computing the CRPSS in the right column.}
\end{figure}

\begin{figure}[H]
\centering
(a) CRPSS for Pangu-Weather + GNP
\includegraphics[width = \textwidth]{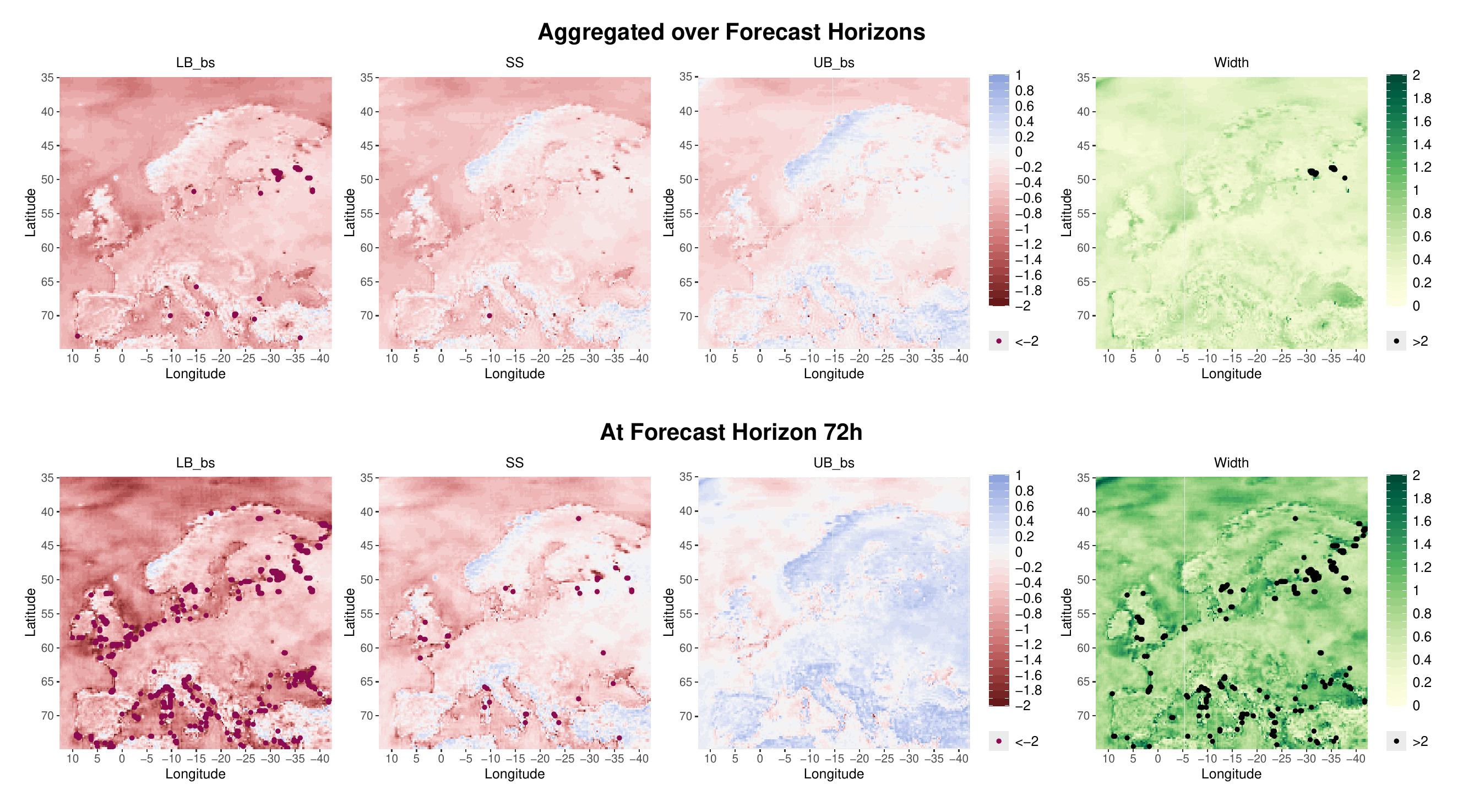}
\bigbreak 
(b) CRPSS for Pangu-Weather + EasyUQ
\includegraphics[width = \textwidth]{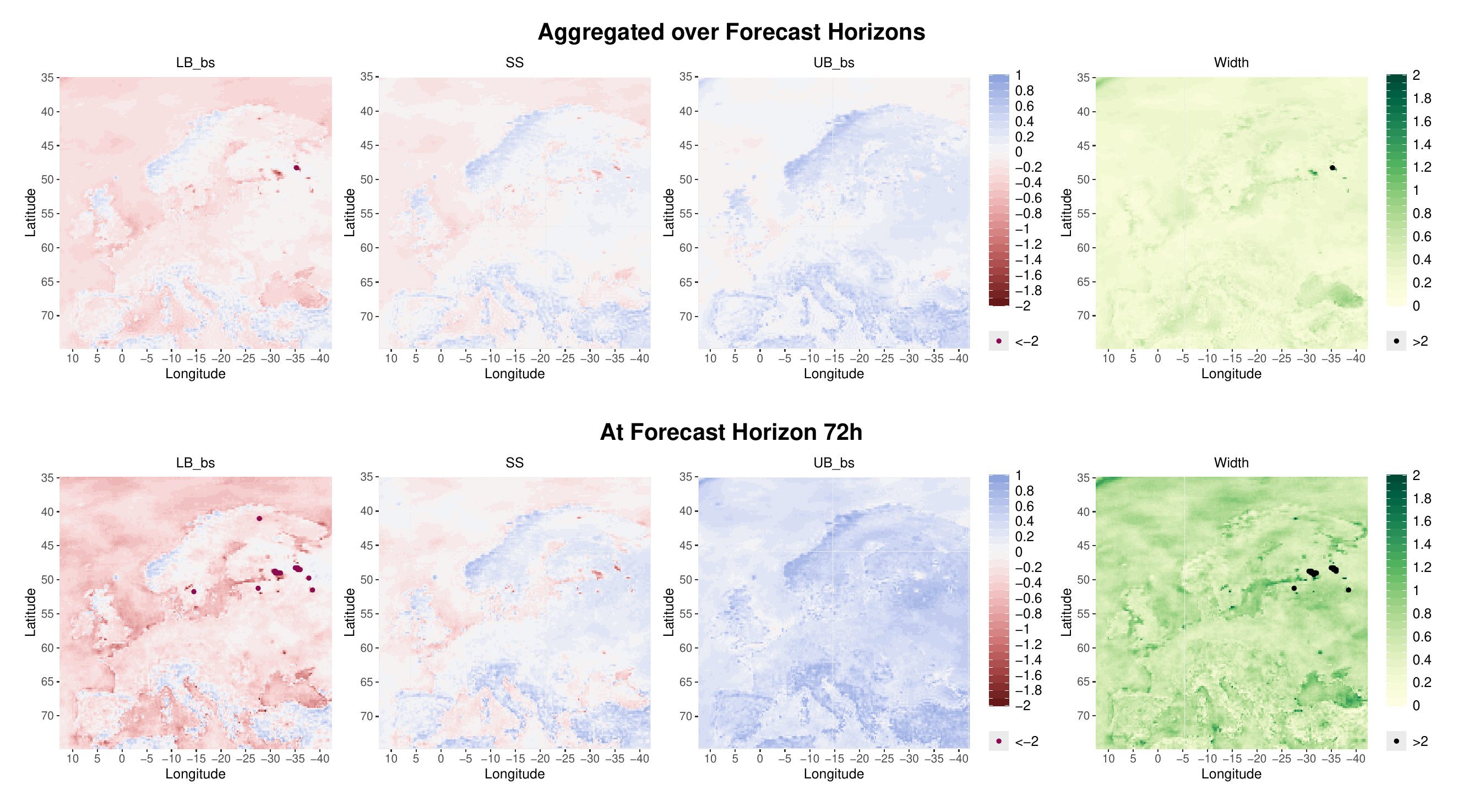}
\caption{Estimated skill scores of probabilistic forecasts generated from the data-driven weather model Pangu-Weather with 90\% sup-t confidence bands disaggregated over locations, using ensemble predictions from the physics-based ECMWF model as a benchmark. The CRPSS values are cropped at $-2$ and the width values at 2 to improve visual clarity.}
\end{figure}

\end{document}